\providecommand{\@afterenddocumenthook}{}
\newtheorem{lemma}{Lemma}[section]
\newtheorem{theorem}{Theorem}[section]
\newtheorem{corollary}{Corollary}[theorem]
\definecolor{crimson}{RGB}{186,0,44}
\definecolor{moss}{RGB}{0, 186, 111}
\begin{document}

\title{Parallel quantum signal processing via polynomial factorization}

\author{John M. Martyn} 
\affiliation{Center for Theoretical Physics, Massachusetts Institute of Technology, Cambridge, MA 02139, USA}
\orcid{0000-0002-4065-6974}

\author{Zane M. Rossi} 
\affiliation{Department of Physics, Massachusetts Institute of Technology, Cambridge, MA 02139, USA}
\orcid{0000-0002-7718-654X}

\author{Kevin Z. Cheng}
\affiliation{Department of Physics, Massachusetts Institute of Technology, Cambridge, MA 02139, USA}

\author{Yuan Liu}
\affiliation{Department of Electrical and Computer Engineering, North Carolina State University, Raleigh, NC 27606, USA}
\affiliation{Department of Computer Science, North Carolina State University, Raleigh, NC 27606, USA}
\affiliation{Department of Physics, North Carolina State University, Raleigh, NC 27606, USA}
\orcid{0000-0003-1468-942X}

\author{Isaac L. Chuang}
\affiliation{Department of Physics, Massachusetts Institute of Technology, Cambridge, MA 02139, USA}
\affiliation{Department of Electrical Engineering and Computer Science,
Massachusetts Institute of Technology, Cambridge, MA 02139, USA}
\orcid{0000-0001-7296-523X}

\maketitle

\begin{abstract}
  Quantum signal processing (QSP) is a methodology for constructing polynomial transformations of a linear operator encoded in a unitary. Applied to an encoding of a state $\rho$, QSP enables the evaluation of nonlinear functions of the form $\tr(P(\rho))$ for a polynomial $P(x)$, which encompasses relevant properties like entropies and fidelity. However, QSP is a sequential algorithm: implementing a degree-$d$ polynomial necessitates $d$ queries to the encoding, equating to a \emph{query depth} $d$. 
  
  Here, we reduce the depth of these property estimation algorithms by developing \emph{Parallel Quantum Signal Processing}. Our algorithm parallelizes the computation of $\tr (P(\rho))$ over $k$ systems and reduces the query depth to $d/k$, thus enabling a family of time-space tradeoffs for QSP. This furnishes a property estimation algorithm suitable for distributed quantum computers, and is realized at the expense of increasing the number of measurements by a factor $O( \text{poly}(d) 2^{O(k)} )$. We achieve this result by factorizing $P(x)$ into a product of $k$ smaller polynomials of degree $O(d/k)$, which are each implemented in parallel with QSP, and subsequently multiplied together with a swap test to reconstruct $P(x)$. We characterize the achievable class of polynomials by appealing to the fundamental theorem of algebra, and demonstrate application to canonical problems including entropy estimation and partition function evaluation.
\end{abstract}

\section{Introduction}


\noindent The increasing sophistication of quantum computers has pressured researchers to clearly demarcate problems for which quantum algorithms provide provable speedups over their classical counterparts. 
Toward resolving this tension, recent work has proposed a qualified `unification of quantum algorithms', based on the related frameworks of quantum signal processing (QSP) \cite{ylc_fixed_point_14, Low_2016, Low_2017, Low_2019} and the quantum singular value transformation (QSVT) \cite{Gily_n_2019}. These algorithms enable the application of tunable polynomial functions to the singular values of large linear operators, in turn unifying and simplifying the presentation of most known quantum algorithms \cite{martyn2021grand}, while simultaneously exhibiting good numerical properties \cite{Haah_2019, chao2020finding, Dong_2021, dlnw_infinite_qsp_22}, near-optimal query complexity \cite{ms_query_comp_fun_24}, and fruitful connections to well-studied techniques in matrix decomposition and functional analysis \cite{cs_qsvt_tang_tian_23,dlnw_infinite_qsp_22,almtw_szego_qsp_24, berntson2024complementary}.

Despite this broad success, we are not yet in a world of fault tolerant quantum devices and are thus unable to leverage the full apparatus of QSP/QSVT. Experimental implementations of QSP/QSVT on existing hardware have largely been limited to small examples on noisy devices \cite{debry_qsp_23, kikuchi2023realization}. Consequently it is an interesting question whether the theoretical success of QSP and QSVT can be extended to experimental systems with mild resource constraints, such as limited coherence times. Specifically, QSP and QSVT are sequential algorithms without intermediate measurement, and indeed derive their pleasing properties from their circuit depths. Nonetheless, it seems reasonable that even with a limited circuit depth some of these properties could be recovered at the cost of increased circuit width and/or sample complexity.

As suggested by the title of this work, we are interested in realizing this by parallelizing QSP, motivated by the ubiquitous use of parallel processing in classical computation \cite{ps_architecture_11}. At a high-level, such processes take advantage of the fact that certain problems permit division into simpler sub-problems. Showing that such divide-and-conquer strategies can prove advantageous requires that (1) the original problem can be subdivided efficiently, (2) the sub-problems can be solved faster than the original problem, and (3) solutions to sub-problems can be efficiently reconciled to produce the full solution. We translate this notion to the circuit model of quantum computation in a straightforward way, where sub-processes (analogously to `threads' in the classical world) are considered in parallel when they act on disjoint subsets of qubits, and where problem subdivision takes place entirely \emph{classically}, before the execution of the quantum circuit. 

This work proposes one scheme for parallelizing QSP problems into multiple independent sub-problems, each of which requires shallower circuits. This technique, termed \emph{Parallel Quantum Signal Processing} (Parallel QSP) is applicable to the task of computing \emph{nonlinear functions} of quantum states, with wide application in the estimation of common properties like entropies and entanglement measures \cite{Johri_2017,Elben_2018,Buhrman_2001}, and the efficient realization of disparate multi-state tests \cite{oszmaniec_relational_24}. As QSP generates polynomial transformations, subdivision of a problem will correspond to polynomial factorization, and reconciliation to multiplication of factor polynomials.

Our construction sits at the intersection of two lines of work for estimating properties of quantum states: quantum signal processing \cite{low2017quantum,Gily_n_2019}, and multivariate trace estimation \cite{Johri_2017, Quek_2024}. 
The benefit of combining these techniques is rooted in jointly leveraging their individual strengths. While QSP can prepare nonlinear functions of a quantum state, the circuits required are often quite deep~\cite{Gily_n_2019,tan2023error,rycs_noisy_qsp_22}. Conversely, while recently proposed methods for multivariate trace estimation can make do with shallow circuits (e.g., constant depth~\cite{Quek_2024}), the achievable class of nonlinear functions is limited. 
This work provides candidate problems for which these two toolkits can be applied jointly, with the benefit of analytic simplicity and asymptotic savings in circuit depth.

\subsection{Results and Paper Outline}

In standard QSP, generation of a degree-$d$ polynomial requires $d$ successive queries to the encoding unitary, corresponding to a \emph{query depth} $d$ and circuit depth $O(d)$. Given that large circuit depths are prohibitive on near-term devices limited by short coherence times, we seek valid methods to parallelize QSP into many shorter QSP circuits and reduce the corresponding query depth.

Here we parallelize computation over $k$ threads and reduce query depth by an $O(k)$ factor. In practical situations, $k = O(1) \ll d$, such as when parallelizing a large degree polynomial over a few quantum devices. This parallelization establishes parallel QSP as a suitable algorithm for distributed quantum computation~\cite{caleffi2022distributed}, where multiple devices can be run concurrently. However, the reduction in depth afforded by parallelization comes with an increased number of measurements $O(\text{poly}(d)2^{O(k)})$. Crucially, this scales polynomially in $d$ rather than exponentially, in contrast to techniques like error mitigation that feature super-polynomial scaling in depth~\cite{quek2022exponentially}. On the other hand, this comes with an exponential cost in the number of threads, similar to that seen in quantum circuit cutting~\cite{Peng_2020}. 

An informal statement of our main result is given in the following Theorem~\ref{thm:informal_main}. For intuition, we also provide a comparison of standard QSP and parallel QSP in Fig.~\ref{fig:generic_overview}.
\begin{figure}
    \includegraphics[width=0.95\linewidth]{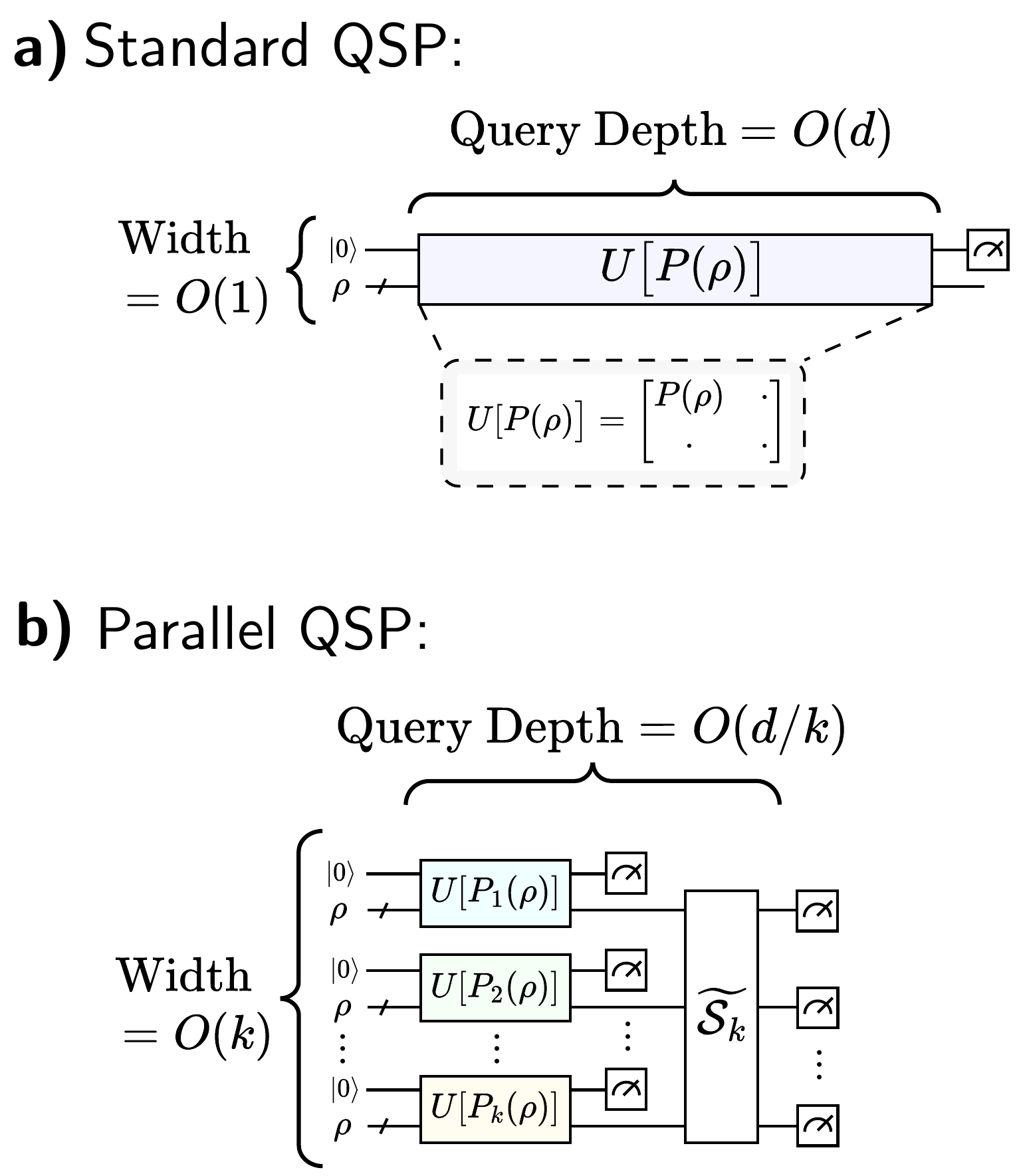}
    \caption{Illustration of standard QSP (\textbf{a}) vs.~parallel QSP (\textbf{b}). The operators $U[P(\rho)]$ denote block-encodings of $P(\rho)$ (as depicted in the inset), realized through QSP. For a degree-$d$ polynomial, standard QSP generally requires query depth $2d = O(d)$. In contrast, parallel QSP distributes the computation over $k$ threads by implementing factor polynomials $P_k(x)$ in parallel, and achieves a reduced query depth $O(d/k)$. This is accomplished using a \emph{swap test}, schematically denoted here by an operation $\widetilde{\mathcal{S}_k}$ and subsequent measurements; see Sec.~\ref{sec:Generalized_Swap_Test} for explanation. }
    \label{fig:generic_overview}
\end{figure}

\begin{theorem}[Informal statement of Theorem~\ref{thm:parallel_qsp_prop_est_2}] \label{thm:informal_main}
    Let $P(x)$ be a real-valued polynomial of degree $d$, that is bounded as $\max_{x\in[-1,1]} | P(x) | \leq 1$. Given access to an input state $\rho$ and a block encoding thereof, we can invoke parallel QSP across $k$ threads to estimate the property
    \begin{equation}
        w = \tr(P(\rho))
    \end{equation}
    with a circuit of width $O(k)$ and query depth at most $\approx d/2k $. The number of measurements required to resolve $w$ to additive error $\epsilon$ is 
    \begin{equation}
    \begin{aligned}
        &O\Bigg( \frac{ {\rm poly}(d) 2^{O(k)}}{\epsilon^2} \Bigg),
    \end{aligned}
    \end{equation}
    where the terms ${\rm poly}(d) 2^{O(k)}$ depend on the chosen factorization of $P(x)$.     
\end{theorem}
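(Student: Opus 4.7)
The strategy I would pursue is precisely the divide-and-conquer approach outlined in the introduction: classically factorize $P(x)$ into $k$ low-degree pieces, block-encode each piece in parallel using standard QSP on disjoint registers, and recombine into an estimator of $\tr(P(\rho))$ using a multi-copy generalized swap test. Concretely, I would aim to produce a classical decomposition
\begin{equation}
  P(x) \;=\; C \prod_{i=1}^{k} \tilde{P}_i(x),
\end{equation}
where each $\tilde{P}_i$ has degree $\le \lceil d/k \rceil$ and satisfies the QSP-admissibility condition $\max_{x\in[-1,1]} |\tilde{P}_i(x)| \le 1$, and where $C = \prod_i c_i$ absorbs the rescaling constants needed to bring each factor into that range.

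To produce this factorization, I would invoke the fundamental theorem of algebra as hinted in the abstract: $P$ has $d$ complex roots, and because $P$ is real these come in conjugate pairs. I would bundle the roots into $k$ groups of size $\approx d/k$, being careful to keep each conjugate pair inside a single group so that every $P_i(x)=\prod_{j\in S_i}(x-r_j)$ has real coefficients; then $c_i := \max_{x\in[-1,1]}|P_i(x)|$ and $\tilde{P}_i := P_i/c_i$ yield admissible factors. There is genuine freedom in the choice of root partition, and this is the step that controls the constants appearing in the final sample-complexity bound.

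Next, I would implement each $\tilde{P}_i(\rho)$ in parallel by applying standard QSP on the $i$-th of $k$ disjoint copies of the $\rho$-block-encoding register, yielding circuit width $O(k)$ and per-thread query depth $\approx d/(2k)$ (the factor of two matching the usual doubling in QSP degree-vs-query counts). To reconstruct the trace I would invoke a generalized/cyclic swap test of the type referenced in the introduction (Sec.~\ref{sec:Generalized_Swap_Test}): this produces an unbiased estimator of $\tr\bigl(\prod_i A_i\bigr)$ from independent block encodings of operators $A_i$, so that multiplying the output by $C$ gives an unbiased estimator of $w=\tr(P(\rho))$.

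Finally, I would bound the number of measurements by a standard Hoeffding/variance argument. The estimator inherits a multiplicative factor $C$ from the rescaling, a $\mathrm{poly}(d)$ factor from the block-encoding success probabilities and swap-test postselection, and an $\epsilon^{-2}$ from the target additive accuracy. The main obstacle, and the source of the $2^{O(k)}$ prefactor, is bounding the product of normalization constants $C=\prod_i c_i$: a polynomial bounded by $1$ on $[-1,1]$ may factor into pieces whose individual sup-norms are much larger, and extremal bounds of Chebyshev/Markov type show that the worst-case blow-up is exponential in the \emph{number} of factors $k$ but only polynomial (indeed, independent) in $d$. I expect the careful accounting of how the $c_i$'s accumulate for a judicious partition of the roots, together with the verification that the swap-test variance does not introduce additional $d$-exponential factors, to be the crux of the argument.
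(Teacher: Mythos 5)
There is a genuine gap, and it sits exactly at the ``reconciliation'' step you treat as a black box. The generalized swap test does not produce an estimator of $\tr\bigl(\prod_i A_i\bigr)$ for arbitrary block-encoded operators $A_i$: the cyclic-shift identity applies to \emph{states}, and what each thread of the parallel circuit can actually prepare (after applying the block encoding $U[P_j(\rho)]$ to an input copy of $\rho$ and post-selecting) is the operator $P_j(\rho)\,\rho\,P_j(\rho)^\dag$. The swap test therefore yields $\tr\bigl(\prod_j P_j(\rho)\rho P_j(\rho)^\dag\bigr)=\tr\bigl(\rho^k\prod_j |P_j(\rho)|^2\bigr)$ --- every thread contributes an unavoidable factor of $\rho$, and every factor polynomial enters \emph{squared}. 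Consequently your plan of factorizing $P(x)=C\prod_i \tilde P_i(x)$ into $k$ real factors of degree $\approx d/k$ and recombining them multiplicatively cannot produce $\tr(P(\rho))$: the class of polynomials directly reachable this way is restricted to $x^k R(x)$ with $R$ real and non-negative on $\mathbb{R}$, and a generic bounded $P$ is neither of this form nor non-negative. (Your depth count inherits the same confusion: the $\approx d/2k$ figure comes precisely from the squaring --- a non-negative degree-$d$ polynomial factors into $k$ squared pieces of degree $\lceil d/2k\rceil$ --- whereas your degree-$\lceil d/k\rceil$ indefinite-parity factors would cost query depth $\approx 2d/k$.)

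The paper's proof has to work around both obstructions, and these workarounds are the substance of the theorem. First, it peels off the low-order part, writing $P(x)=P_{<k}(x)+x^k P_{\geq k}(x)$, estimating $\tr(P_{<k}(\rho))$ with ordinary QSP (depth $O(k)$) so that the mandatory $\rho^k$ from the $k$ input copies is absorbed by the $x^k$ prefactor of the remaining piece. Second, since $P_{\geq k}$ is in general sign-indefinite, it is expanded in the Chebyshev basis and rewritten, via the composition and product identities $T_{mn}=T_m\circ T_n$ and $T_{m+n}=2T_mT_n-T_{|m-n|}$, as a linear combination of products of \emph{squared} Chebyshev polynomials; each term is non-negative, trivially factorizes with factorization constant $1$, and is estimated by parallel QSP, with importance sampling over the terms. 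The $\mathrm{poly}(d)\,2^{O(k)}$ measurement overhead is then the $1$-norm of those expansion coefficients, $O\!\bigl(\|P_{\geq k}\|_{[-1,1]}^2\, d^4 (1+\sqrt2)^{4k}/k^2\bigr)$, combined with nontrivial worst-case bounds $\|P_{<k}\|_{[-1,1]}=O(d^{k-1}/(k-1)!)$ and $\|P_{\geq k}\|_{[-1,1]}=O\bigl(\tfrac{d^k}{k!}\sqrt{d/k}\bigr)$ proved via extremal properties of Chebyshev coefficients --- not, as you suggest, a bound on a product of rescaling constants $C=\prod_i c_i$ from a root partition of $P$ itself, which is the quantity the paper explicitly warns can blow up as $2^{O(d)}$ for an unlucky factorization. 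Your instinct that the normalization constants are the crux is right, but without the constituent-polynomial split and the reduction to non-negative squared products, the construction you describe does not compute the quantity claimed.
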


We provide an explicit polynomial decomposition and factorization that ensures the measurement overhead scales as $O\big( \|P\| \cdot d^4 (1+\sqrt{2})^{4k}/k^2 \big)$ for a function norm $\| P \|$ defined later. This norm depends on the polynomial $P$, and in general scales as a polynomial in $d$, and in some cases even remains constant. In the worst possible case (corresponding to the most ill-behaved polynomial), we show that the function norm scales as $O(d^{2k+1}/ (k!)^2)$, i.e., as a degree-$2k+1$ polynomial in $d$, that decays super-factorially with $k$ and dominates the $2^{O(k)}$ factor in the overhead. 

Toward an exposition of the main theorem, we review QSP and its application to density matrices and trace estimation in Sec.~\ref{sec:Prelims}. Following this, in Sec.~\ref{sec:parallel_qsp} we present parallel QSP, including a characterization of the achievable class of polynomials. We then adapt parallel QSP to arbitrary property estimation problems in Sec.~\ref{sec:PQSP_for_Property_Estimation}, and exemplify this construction in Sec.~\ref{sec:applications} for the estimation of R\'enyi entropies, partition functions, and the von Neumann entropy. Discussion and comparison with alternative methods are included in Sec.~\ref{sec:discussion}, with detailed proofs of results confined to the appendices.

\section{Preliminaries}\label{sec:Prelims}
In this section, we review the preliminaries for parallel QSP: standard QSP (Sec.~\ref{sec:Overview_QSP}), its application to density matrices (Sec.~\ref{sec:QSP_Density_Matrices}), and its use in estimating the trace of matrix functions (Sec.~\ref{sec:QSP_Trace_Est}).

\subsection{Quantum Signal Processing}\label{sec:Overview_QSP}
Quantum signal processing (QSP) is a method for realizing a polynomial transformation of a quantum subsystem~\cite{Low_2016, Low_2017, Low_2019}. The QSP algorithm works by interleaving a \textit{signal operator} $U$, and a \textit{signal processing operator} $S$. Conventionally, $U$ is taken to be an $x$-rotation through a fixed angle and $S$ a $z$-rotation through variable angle $\phi$:
\begin{equation}
    U(x) = \begin{bmatrix}
        x & i\sqrt{1-x^2} \\
        i\sqrt{1-x^2} & x
    \end{bmatrix}, \qquad 
    S(\phi) = e^{i\phi Z}.
\end{equation}
Introducing a set of $d+1$ \emph{QSP phases} $\vec{\phi} = (\phi_0, \phi_1, ... , \phi_d) \in \mathbb{R}^{d+1}$, the following \textit{QSP sequence} is defined as an interleaved product of $U$ and $S$:
\begin{equation}\label{eq:QSP_seqeunce}
    \begin{gathered}
        U_{\vec{\phi}}(x) = S(\phi_0) \prod_{i=1}^d  U(x)  S(\phi_i).
    \end{gathered}
\end{equation}
The matrix elements of the QSP sequence are manifestly polynomials of $x$:
\begin{equation}\label{eq:qsp}
    U_{\vec{\phi}} = \begin{bmatrix}
        P(x) & iQ(x)\sqrt{1-x^2} \\
        iQ^*(x)\sqrt{1-x^2} & P^*(x)
    \end{bmatrix},         
\end{equation}
where $P(x)$ and $Q(x)$ are polynomials parameterized by $\vec{\phi}$ that obey
\begin{equation}\label{eq:qsp_conditions}
    \begin{split}
        & \text{1. } {\rm deg}(P) \leq d, \ {\rm deg}(Q) \leq d-1 \,; \\
        & \text{2. } P(x)\ \text{has parity } d \bmod 2, \text{ and } Q(x)\ \text{has parity } \\
        & \ \ \ (d-1)\bmod 2 \,; \\
        & \text{3. } |P(x)|^2 + (1-x^2) |Q(x)|^2 = 1, \ \forall \ x \in [-1,1] \, . 
    \end{split}
\end{equation}
This result implies that one can construct polynomials in $x$ by projecting into a block of $U_{\vec{\phi}}$, e.g. $\langle 0| U_{\vec{\phi}} | 0\rangle = P(x)$. Importantly, realizing a degree-$d$ polynomial necessitates $d$ sequential calls to the signal operator, translating to a \textit{query depth} $d$, or a circuit depth $O(d)$. 

While the conditions of Eq.~\eqref{eq:qsp_conditions} restrict the class of realizable polynomials, a broader class of polynomials can be implemented by projecting into other bases and using extensions of QSP. For instance, the $|+\rangle \langle +|$ matrix element $\langle+|U_{\vec{\phi}}|+\rangle = \text{Re}(P(x)) + i\text{Re}(Q(x))\sqrt{1-x^2}$ can realize any real polynomial of definite parity, obviating condition 3 above. Even more powerful is \textit{generalized QSP}, introduced in Ref.~\cite{motlagh2023generalized} as an extension of the QSP sequence in Eq.~\eqref{eq:QSP_seqeunce}. As we review in Appendix~\ref{app:Arbitrary_polynomials}, generalized QSP enables one to design an arbitrary polynomial $P(x)$ restricted only by the condition $|P(x)| \leq 1$ over $x \in [-1,1]$. This encompasses complex polynomials and those of indefinite parity. To realize a degree-$d$ polynomial, generalized QSP requires a query depth $2d$.

By this reasoning, QSP can encode polynomials that need only be bounded as $\|P\|_{[-1,1]} \leq 1$, where $\| \cdot \|_{[-1,1]}$ is the function norm 
\begin{equation}\label{eq:function_norm}
    \| f \|_{[-1,1]} := \max_{x\in[-1,1]} |f(x) | .
\end{equation}
For an arbitrary degree-$d$ polynomial, the requisite query depth is $2d$; however, the query depth reduces to $d$ for a polynomial of definite parity. 
In addition, the converse of this result holds: for any polynomial $\| P \|_{[-1,1]} \leq 1$, there exist corresponding QSP phases that can be efficiently computed with a classical algorithm~\cite{Haah_2019, chao2020finding, dong2021efficient, Ying_2022, yamamoto2024robust}, thus amounting to a classical pre-computation step.

Remarkably, the methodology of QSP can be extended to prepare a polynomial transformation of a Hermitian operator through its extension to the quantum eigenvalue transformation (QET)~\cite{Low_2017, Low_2019, Gily_n_2019}. This is achieved analogous to QSP: provided access to a unitary that block-encodes an operator $A$ in a matrix element, we can design a sequence that encodes a polynomial transformation $P(A)$:
\begin{equation}\label{eq:QET_QSVT_seq}
    U[A]=\begin{bmatrix}
        A & \cdot \\
        \cdot & \cdot
    \end{bmatrix} \ \mapsto \ U_{\vec{\phi}}[A] = 
    \begin{bmatrix}
        P(A) & \cdot \\
        \cdot & \cdot
    \end{bmatrix} ,
\end{equation}
where the unspecified entries ensure unitarity. Unitarity also requires $\| A\|\leq 1$ and $\|P(A)\|\leq 1$; otherwise, these entries must be rescaled by a constant to meet these conditions. Mirroring Eq.~\eqref{eq:QSP_seqeunce}, $U_{\vec{\phi}}[A] $ is an interleaved sequence of $U[A]$ and parameterized rotations. Essentially, this applies QSP within each eigenspace of $A$ and outputs a degree-$d$ polynomial transformation $P(A)$. As above, the cost of realizing an arbitrary degree-$d$ polynomial is $2d$ sequential queries to the block-encoding of $A$, translating to a query depth $2d$, although this reduces to $d$ for a polynomial of definite parity. Lastly, while Eq.~\eqref{eq:QET_QSVT_seq} specializes to a block-encoding in the $|0\rangle\langle 0|$ matrix element, one can more generally take $A$ to be accessed by orthogonal projectors $\Pi, \Pi'$ as $A=\Pi\, U[A]\, \Pi '$.

\subsection{QSP On Density Matrices}\label{sec:QSP_Density_Matrices}
To exemplify QSP, let us consider its application to a density matrix $\rho$. This requires a block encoding of $\rho$, which is directly achievable (sans rescaling) because the norm $\| \rho  \| \leq 1$ for any state. 

While in principle there exist various methods to block encode a density matrix $\rho$, a sufficient oracle is a unitary that prepares a purification of $\rho$~\cite{Low_2019}. This oracle model is known as the \emph{quantum purified query access model}, and has been used in recent works on quantum entropy estimation and property testing~\cite{Subramanian_2021, gilyen2019distributional}. To see how this model works, let $\rho = \sum_j p_j |\chi_j \rangle \langle \chi_j | $ be an $n$-qubit density matrix, and $V_\rho$ be a unitary that prepares a purification of $\rho$ as 
\begin{equation}
    V_\rho | 0\rangle^{\otimes 2n} = |\psi_\rho \rangle_{AB} = \sum_j \sqrt{p_j} |j\rangle_A |\chi_j\rangle_B 
\end{equation}
on $n$-qubit subsystems $A$ and $B$, such that $\tr_A (| \psi_\rho \rangle \langle \psi_\rho |_{AB} ) = \rho_B$. Then, introduce an additional $n$-qubit system $C$, and let $\text{SWAP}_{BC}$ be an operator that swaps subsystems $B$ and $C$. One can then show that $\rho$ is block encoded in the operator $U[\rho]:= (V_\rho^\dag)_{AB} \cdot \text{SWAP}_{BC} \cdot (V_\rho)_{AB} $ as~\cite{Low_2019}
\begin{equation}
   \big( \langle 0 |^{\otimes 2 n}_{AB}  \otimes I_C \big) \cdot U[\rho] \cdot \big( | 0\rangle^{\otimes 2 n}_{AB} \otimes I_C \big) = \rho_C. 
\end{equation}
Therefore, access to $V_\rho$ enables one to block encode $\rho$, and thus apply QSP to generate polynomials $P(\rho)$.

\subsection{Trace Estimation with QSP}\label{sec:QSP_Trace_Est}
Among the applications of QSP, a notable use case is to estimate the trace of a matrix function, i.e. $\tr(f(A))$. There are two methods that we will explore here for estimating such a trace with QSP, both of which approximate $f(A)$ with a QSP polynomial $P(A)$. However, the first method estimates $\tr(P(A))$ with a \textit{Hadamard test}~\cite{aharonov2006polynomial}, while the second method, which we will refer to as the \textit{QSP test}, estimates the squared trace $\tr(|P(A)|^2)$ by measuring the block-encoding qubit(s). Here we discuss both methods and specialize to the block-encoding convention $\Pi = |0\rangle \langle 0 |$ for ease of presentation.

\begin{figure}
    \centering
    \includegraphics[width=0.98\linewidth]{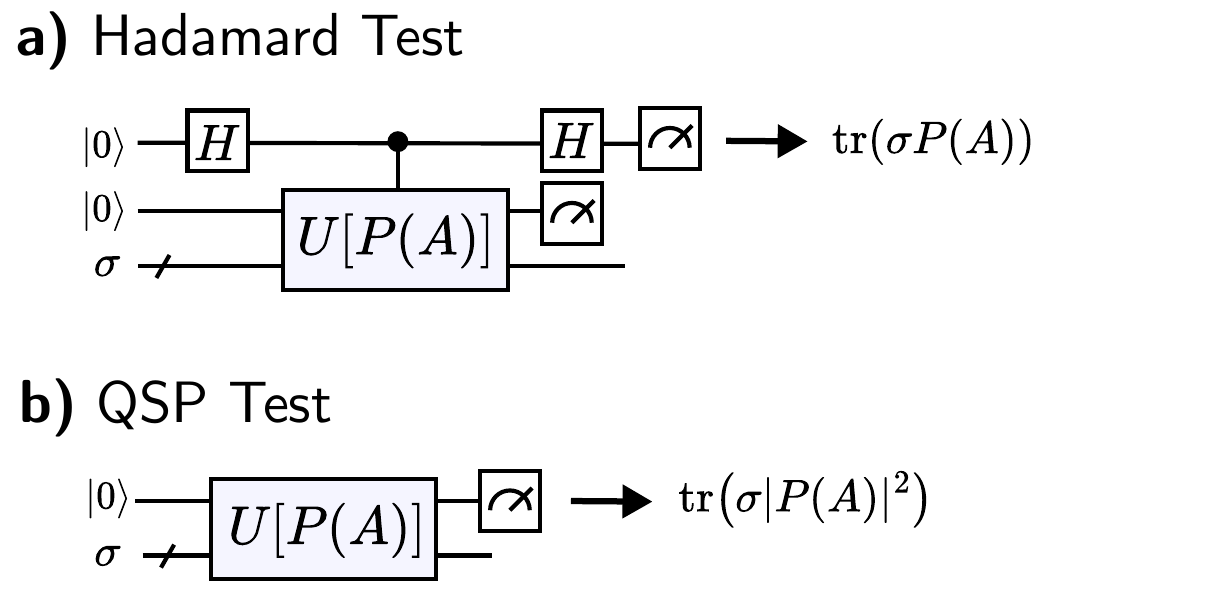}
    \caption{\textbf{a)}: The Hadamard test for trace estimation of QSP polynomials, specialized to a block-encoding in the $|0\rangle \langle 0|$ block. Here, the second register corresponds to the qubit that block-encodes $P(A)$; it is measured and post-selected on the outcome $|0\rangle$, thereby projecting out $P(A)$ and enabling the extraction of the trace $\tr( \sigma P(A) )$. 
    \textbf{b)}: The QSP test for trace estimation of QSP polynomials, also specialized to a block-encoding in the $|0\rangle \langle 0|$ block. Here again, the top register block-encodes $P(A)$ and is measured to extract the trace $\text{tr}( \sigma | P(A) |^2 )$. }
    \label{fig:circuits}
\end{figure}

\subsubsection{Hadamard Test}
In the first method, the Hadamard test~\cite{aharonov2006polynomial} is applied to an input state $\sigma$, with the target unitary set to a QSP sequence that block encodes $P(A)$. We illustrate this circuit in Fig.~\ref{fig:circuits}a. As explained in Lemma 9 of Ref.~\cite{gilyen2022improved}, upon post-selecting the block-encoding qubit(s) to project out the target polynomial $P(A)$, the probability of measuring the ancilla qubit in the state $|0\rangle$ is
\begin{equation}
    p_1 = \frac{1}{2} + \frac{1}{2}\text{Re}\Big[ \text{tr}\big( \sigma P(A) \big) \Big].
\end{equation}
By instead applying a conjugated phase gate to the ancilla qubit after the Hadamard gate, the probability of measuring $|0\rangle$ becomes
\begin{equation}
    p_2 = \frac{1}{2} + \frac{1}{2}\text{Im}\Big[ \text{tr}\big( \sigma P(A) \big) \Big],
\end{equation}
such that the full trace can be reconstructed as $\text{tr}\big( \sigma P(A) \big) = 2p_1-1 + i(2p_2-1)$. By estimating $p_1$ and $p_2$ each to error at most $\epsilon/4$, one obtains an approximation to the trace $\text{tr}\big( \sigma P(A) \big)$ with error at most $\epsilon$. By the central limit theorem, this requires $O(1/\epsilon^2)$ measurements.

\subsubsection{QSP Test}
In the second method, one applies the QSP sequence to an input state $|0\rangle\langle 0| \otimes \sigma$, and then estimates the probability that the block-encoding qubit is measured in the state $|0\rangle$. This is equivalently the probability that the correct block of the QSP sequence is applied to the input state, which is
\begin{equation}
    p = \text{tr}\big(\sigma |P(A)|^2 \big).
\end{equation}
Therefore, an estimation of this to error $\epsilon$ furnishes an approximation of the trace $\text{tr}\big(\sigma |P(A)|^2 \big)$ with error $\epsilon$, and requires $O(1/\epsilon^2)$ measurements. We will refer to this method as the \textit{QSP test}, and depict its circuit in Fig.~\ref{fig:circuits}b.

The QSP test is distinct from the Hadamard test in that the QSP polynomial $P(A)$ is squared in the trace. While the Hadamard test can evaluate traces involving the QSP polynomial $P(A)$ directly (i.e., $\tr(\sigma P(A))$), the QSP test is limited to traces involving its magnitude squared $|P(A)|^2$ (i.e., $\tr(\sigma |P(A)|^2)$). This leads to a tradeoff in the capabilities of these approaches, where the polynomial $|P(A)|^2$ is restricted to be real and non-negative, yet is of twice the degree of $P(A)$.

\subsubsection{Utility in Trace Estimation}
Both the Hadamard test and the QSP test can be used to estimate the trace of a matrix function $\tr(f(A))$ by setting the input state to the maximally mixed state $\sigma=I/2^n$, where $n$ is the number of qubits. By then selecting a polynomial that approximates $f(x)$ as $P(x) \approx f(x)$ or $|P(x)|^2 \approx f(x)$, respectively, both methods output an approximation to $\tr(f(A))/2^n$. Due to this rescaling by $2^n$, resolving $\tr(f(A))$ to error $\epsilon$ requires a number of measurements $O(2^{2n}/\epsilon^2)$. That this cost scales exponentially with the number of qubits is a generic feature, because arbitrary traces $\tr(f(A))$ can be exponentially large in the dimension of $A$. Ref.~\cite{Subramanian_2021} uses this approach to develop an algorithm for estimating the $\alpha$-R\'enyi entropy of a density matrix, and notes the same cost scaling. 

However, for estimating a trace $\tr(f(\rho))$ of a density matrix $\rho$, it is advantageous to set both the input state and block encoding to be $\rho$, i.e., $\sigma = \rho$ and $A = \rho$. In this case, the Hadamard test and QSP test output the traces $\tr(\rho P(\rho))$ and $\tr(\rho |P(\rho)|^2 )$, respectively. From these traces, one can estimate $\tr(f(\rho))$ by selecting polynomials that satisfy $xP(x)\approx f(x)$ or $x|P(x)|^2 \approx f(x)$, respectively. Importantly, this approach circumvents the rescaling by $2^n$, such that estimating $\tr(f(\rho))$ to error $\epsilon$ requires $O(1/\epsilon^2)$ measurements. This streamlined approach is used in Ref.~\cite{wang2022new} to design new algorithms for estimating the $\alpha$-R\'enyi entropy and von Neumann entropy, while avoiding an exponentially large number of measurements. 

Lastly, while both the Hadamard and QSP tests achieve complexity $O(1/\epsilon^2)$, we note that one could alternatively use techniques like amplitude/phase estimation to reduce the complexity to $O(1/\epsilon)$. However, this complexity corresponds to a large $O(1/\epsilon)$ depth~\cite{nielsen2010quantum, Brassard_2002}. As the central focus of this work is reducing depth, we forgo these techniques in favor of the Hadamard and QSP tests, which achieve shallower depths at the expense of an increased measurement overhead.

\section{Parallel Quantum Signal Processing} \label{sec:parallel_qsp}
With the preliminaries laid out, we now present our algorithm for parallel QSP. In its simplest incarnation, parallel QSP enables the estimation of a trace of the form $\tr(\rho^k R(\rho))$, where $R(x)$ is a degree-$d$ polynomial, and $k$ is the number of systems over which the computation is parallelized, i.e., the \textit{number of threads}. While standard QSP can compute this trace with query depth $\sim d+k$, parallel QSP achieves this computation with a query depth $\approx d/k$. This is achieved by factorizing $R(x)$ into $k$ polynomials of degree $O(d/k)$, which are implemented in parallel with QSP, and subsequently multiplied together with a \emph{generalized swap test} (Sec.~\ref{sec:Generalized_Swap_Test}). However, this depth reduction of parallel QSP is realized at the expense of increasing the circuit width to $O(k)$, and the number of measurements by a factor that depends on the chosen factorization of $R(x)$. Moreover, while $\tr(\rho^k R(\rho))$ encompasses a limited class of properties, later in Sec.~\ref{sec:PQSP_for_Property_Estimation} we expand this class to arbitrary polynomial functions.

In this section, we first review the generalized swap test (Sec.~\ref{sec:Generalized_Swap_Test}), which will underpin parallel QSP. We then present the parallel QSP algorithm (Sec.~\ref{sec:PQSP_alg}), including a characterization of the achievable polynomials and a discussion of its resource requirements, and conclude by commenting on the implications of our algorithm (Sec.~\ref{sec:Remarks}).

\subsection{The Generalized Swap Test}\label{sec:Generalized_Swap_Test}
An essential ingredient of parallel QSP is a tool that we will refer to as \textit{the generalized swap test}. As its name suggests this is an extension of the usual swap test, introduced in Ref.~\cite{Buhrman_2001} to measure the overlap between two quantum states, i.e. $\tr(\rho \sigma)$. Explicitly, the generalized swap test uses the identity that the expectation value of a cyclic shift applied to a product state $ \rho^{\otimes k}$ (for an integer $k\geq 1$), is equal to the trace of the multiplicative product~\cite{Ekert_2002}:
\begin{equation}
    \tr( \mathcal{S}_k \cdot \rho^{\otimes k} ) = \tr( \rho^k ),
\end{equation}
where $\mathcal{S}_k$ is a cyclic shift on the $k$ systems comprising $\rho^{\otimes k}$, and acts as 
\begin{equation}
    \mathcal{S}_k \big[ |\psi_1\rangle |\psi_2\rangle |\psi_3\rangle ... |\psi_k\rangle \big] = |\psi_k\rangle |\psi_1\rangle |\psi_2\rangle ... |\psi_{k-1} \rangle.
\end{equation}
Notably, this identity converts a tensor product $\rho^{\otimes k}$ to a multiplicative product $\rho^k$, and reduces to the usual swap test for $k=2$. In addition, this identity holds for a tensor product of distinct states $\rho_j$:
\begin{equation}
    \tr \Bigg( \mathcal{S}_k \cdot \bigotimes_{j=1}^k \rho_j \Bigg) = \tr \Bigg( \prod_{j=1}^k \rho_j \Bigg) . 
\end{equation}

Using the generalized swap test, one can estimate the trace $\tr( \rho^k)$ by measuring the expectation value of $\mathcal{S}_k$ on $k$ copies of $\rho$. Because the states comprising $\rho^{\otimes k}$ can be arranged in parallel in a quantum circuit, this effectively parallelizes the computation of the multiplicative product $\rho^k$, without ever having to explicitly multiply $\rho$ sequentially. Accordingly, the generalized swap test has been employed to compute R\'enyi entropies in quantum Monte Carlo~\cite{Hastings_2010}, estimate nonlinear functions of state on a quantum computer~\cite{Ekert_2002, brun2004measuring, Quek_2024}, and perform entanglement spectroscopy~\cite{Horodecki_2002, Johri_2017, Suba__2019}.

In practice, the expectation value of $\mathcal{S}_k$ can be estimated with various techniques. While an elementary implementation as a Hadamard test applied to the cyclic shift operator translates to a depth $O(k)$~\cite{Johri_2017, Yirka_2021}, recent works have put forth novel constructions of the generalized swap test that achieve $O(1)$ quantum depth~\cite{Suba__2019, Quek_2024}. Ref.~\cite{Suba__2019} achieves this using $2k$ copies of a purification of $\rho$ and additional classical post-processing, leading to an $O(1)$ depth independent of both $n$ and $k$. Alternatively, Ref.~\cite{Quek_2024} prepares an ancilla system in a special GHZ state, from which the cyclic shift $\mathcal{S}_k$ can be measured in depth $O(1)$. Ultimately, these results demonstrate that the generalized swap test can estimate the trace $\tr(\rho^k)$ with a circuit of width $O(k)$ and depth $O(1)$, thus fully parallelizing the computation of the multiplicative product.

\subsection{Parallel QSP}\label{sec:PQSP_alg}
Parallel QSP is a synthesis of the QSP test (Sec.~\ref{sec:QSP_Trace_Est}) and the generalized swap test (Sec.~\ref{sec:Generalized_Swap_Test}). At a high level, parallel QSP works by first using QSP to implement block encodings of $k$ polynomials $ \{ P_j(\rho) \}_{j=1}^k$ across $k$ threads, and separately applying each to an input state $\rho$. Then applying the generalized swap test to the resulting state, we can extract the trace of the corresponding multiplicative product:
\begin{equation}\label{eq:parallel_QSP_trace}
    z := \tr \Bigg(\prod_{j=1}^k P_j(\rho) \rho P_j(\rho)^\dag \Bigg) = \tr \Bigg( \rho^k \prod_{j=1}^k |P_j(\rho)|^2 \Bigg) . 
\end{equation}
By appealing to the fundamental theorem of algebra, the product $\prod_{j=1}^k |P_j(\rho)|^2$ can represent an arbitrary real, non-negative polynomial. If this target polynomial is of degree $d$, then each polynomial factor $P_j(\rho)$ can be guaranteed to have degree at most $\approx \frac{d}{2k}$, thus dividing the query depth by $O(k)$. While the trace $z$ encompasses only a limited class of functions, in Sec.~\ref{sec:PQSP_for_Property_Estimation} we show that an arbitrary polynomial can be decomposed into this form and made amenable to parallel QSP, enabling general property estimation algorithms at reduced query depth.

As a hybrid of the QSP test and the generalized swap test, parallel QSP requires access to both $\rho$ and a block encoding of $\rho$. As shown in Sec.~\ref{sec:QSP_Density_Matrices}, the purified query access model provides an oracle that prepares a purification of $\rho$ and thus furnishes a block encoding of $\rho$. This oracle also provides access to $\rho$ by disregarding the ancilla system, and thus is sufficient for parallel QSP. Nonetheless, this is not the only possibility, as other oracles can also provide access to both $\rho$ and a block encoding thereof.

\subsubsection{The Parallel QSP Algorithm}\label{sssec:parallel_QSP_alg}

To sharpen our analysis, we first present the parallel QSP circuit in Fig.~\ref{fig:Parallel_QSP_Circuit}. The initial state of the algorithm is a product state $\rho^{\otimes k}$ across the $k$ threads, as well as ancilla qubits used to access block encodings. The circuit then consists of (1) the QSP stage, and (2) the generalized swap test stage. The QSP stage comprises $k$ unitaries $\{ U\big[ P_j(\rho) \big]\}_{j=1}^k$ that block encode polynomials $P_j(\rho)$, realized by QSP. We apply each unitary to the input state in parallel and post-select on the successful application of $P_j(\rho)$. Collectively, this succeeds with probability 
\begin{equation}
\begin{aligned}
    \text{Pr}(\text{QSP Success}) &= \prod_{j=1}^k \tr [P_j(\rho) \rho P_j(\rho)^\dag] ,
\end{aligned}
\end{equation}
and outputs the product state
\begin{equation}
    \bigotimes_{j=1}^k \frac{P_j(\rho) \rho P_j(\rho)^\dag}{\tr \big[ P_j(\rho) \rho P_j(\rho)^\dag \big]} .
\end{equation}

Next, we apply the generalized swap test to this product state to compute the trace of the corresponding multiplicative product, which we denote by $\tilde{z}$:
\begin{equation}
\begin{aligned}
    \tilde{z} :&= \tr( \prod_{j=1}^k \frac{P_j(\rho) \rho P_j(\rho)^\dag}{\tr \big[  P_j(\rho) \rho P_j(\rho)^\dag \big]} ) \\
    &= \frac{ \tr( \rho^k \prod_{j=1}^k |P_j(\rho)|^2 ) }{\prod_{j=1}^k \tr \big[ \rho |P_j(\rho)|^2 \big]} = \frac{z}{\text{Pr}(\text{QSP Success})}. 
\end{aligned}
\end{equation}
We can then estimate $z$ by resolving $\tilde{z}$ and $\text{Pr}(\text{QSP Success})$ to sufficient accuracy. This is the essence of parallel QSP: the computation of a trace of a product of polynomials, by executing these polynomials in parallel rather than sequentially.

\begin{figure}[htbp]
    \centering
    \includegraphics[width=0.99\linewidth]{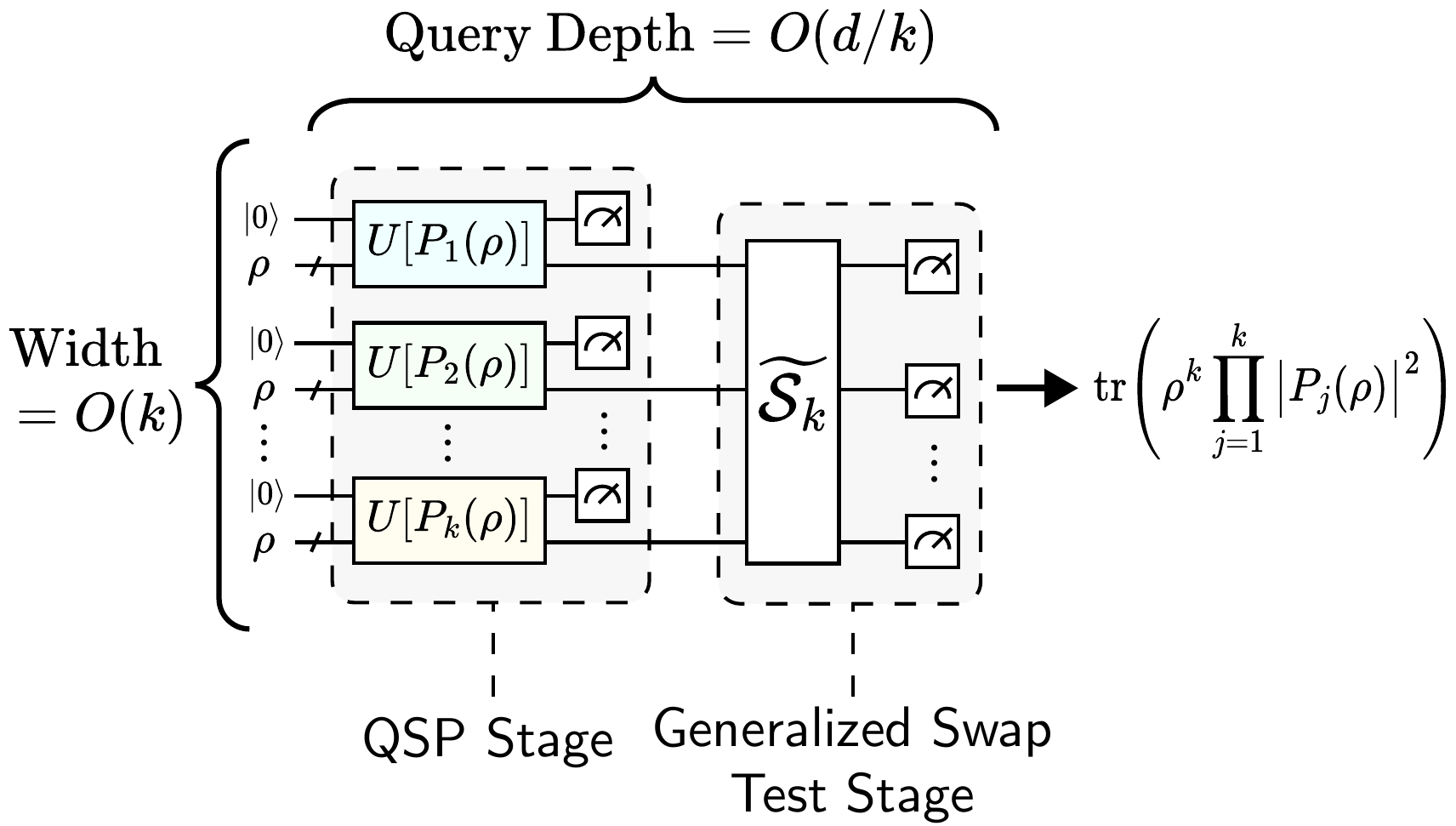}
    \caption{The quantum circuit for the parallel QSP algorithm. The operations $U[P_j(\rho)]$ are unitary block-encodings of polynomials $P_j(\rho)$, realized with QSP. For illustrative simplicity, we specialize to block encodings in the $|0\rangle\langle 0|$ block. Upon application of these polynomials to initial states $\rho$, one enacts a generalized swap test, denoted schematically as $\widetilde{\mathcal{S}_k}$ and subsequent measurements; this serves as a symbolic proxy for the various implementations of the generalized swap test~\cite{Johri_2017, Quek_2024, Suba__2019}. Observe that the parallel QSP circuit has a reduced query depth $O(d/k)$, at the expense of increasing its width to $O(k)$.}
    \label{fig:Parallel_QSP_Circuit}
\end{figure}

With this understanding, we formalize the parallel QSP algorithm with the following theorem: 
\begin{theorem}[Parallel QSP]\label{thm:Parallel_QSP} 
Provided access to a density matrix $\rho$ and a block encoding thereof, the parallel QSP circuit executed across $k$ threads enables the estimation of the quantity 
\begin{equation} \label{parallel_qsp}
    z = \tr \Bigg( \rho^k \prod_{j=1}^k |P_j(\rho)|^2 \Bigg),
\end{equation}
where each $P_j(\rho)$ is a block-encoded polynomial implemented with QSP. More specifically, $z$ can be estimated to additive error $\epsilon$ by running the parallel QSP circuit $O(\frac{1}{\epsilon^2})$ times, where the requisite query depth is $2 \max_j \{ \deg(P_j) \}$ and the circuit width is $O(k)$.
\end{theorem}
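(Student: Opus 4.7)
The plan is to analyze the circuit of Fig.~\ref{fig:Parallel_QSP_Circuit} stage by stage, track the conditional state prepared by the parallel QSP stage, apply the cyclic-shift identity of Sec.~\ref{sec:Generalized_Swap_Test} to extract the target trace, and then fold the QSP post-selection record and the generalized swap-test outcome into a single bounded classical random variable per circuit run whose expectation equals $z$; the sample-complexity claim then follows from Hoeffding's inequality.

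The key steps, in order, are as follows. First, I would verify that the parallel QSP stage, which applies block encodings $U[P_j(\rho)]$ across the $k$ threads using standard QSP (Sec.~\ref{sec:Overview_QSP}--\ref{sec:QSP_Density_Matrices}), prepares upon joint post-selection of every block-encoding ancilla on $|0\rangle$ the product state $\bigotimes_{j=1}^k \rho_j'$ with $\rho_j' = P_j(\rho)\rho P_j(\rho)^\dagger/\tr[P_j(\rho)\rho P_j(\rho)^\dagger]$, and joint success probability $p_{\text{QSP}} = \prod_j \tr[\rho\,|P_j(\rho)|^2]$. Second, I would apply the trace identity $\tr(\mathcal{S}_k\cdot \bigotimes_j \rho_j') = \tr(\prod_j \rho_j')$ from Sec.~\ref{sec:Generalized_Swap_Test}, and crucially invoke the fact that every $P_j(\rho)$ is a polynomial in the Hermitian operator $\rho$ and therefore commutes with $\rho$ and with each $P_l(\rho)^\dagger$; this commutativity rearranges the numerator of the resulting ratio into the target form $\tr(\rho^k \prod_j |P_j(\rho)|^2) = z$, and simultaneously guarantees that $z$ is real. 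Third, I would define a per-run bounded classical random variable $X$ equal to the swap-test outcome when all QSP post-selections succeed and equal to $0$ otherwise, so that $\mathbb{E}[X] = p_{\text{QSP}}\cdot (z/p_{\text{QSP}}) = z$ with $|X|\le 1$, and invoke Hoeffding to obtain the $O(1/\epsilon^2)$ sample bound. Finally, the query-depth and width bounds follow by inspection: the QSP stage needs at most $2\max_j\{\deg(P_j)\}$ sequential queries to the block encoding of $\rho$ on the deepest thread, the generalized swap test contributes only $O(1)$ additional depth using the constructions of Refs.~\cite{Suba__2019, Quek_2024}, and the total circuit width scales as $O(k)$ across the $k$ parallel threads and their swap-test ancillas.

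The main obstacle I anticipate is bookkeeping rather than algebra: one must combine the stochastic QSP post-selection with the classical output of the swap test into a single estimator that admits a clean $O(1/\epsilon^2)$ sample bound. If one instead estimated $\tilde z := z/p_{\text{QSP}}$ and $p_{\text{QSP}}$ separately and then took their product, the error amplification from a possibly small $p_{\text{QSP}}$, which can be exponentially small in $k$ or in $d$, would degrade the sample complexity; folding QSP-failure events into the estimator as zero outcomes sidesteps this issue and delivers the stated bound in one shot. A secondary subtlety is confirming that the chosen realization of the generalized swap test produces a bounded classical statistic with the correct expectation, which is transparent for the Hadamard-test realization and follows from the classical post-processing analyses in Refs.~\cite{Suba__2019, Quek_2024} for the constant-depth variants.
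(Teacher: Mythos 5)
Your proposal is correct and follows essentially the same route as the paper: the paper likewise analyzes the post-selected QSP stage, uses the Hadamard-test realization of the generalized swap test, and re-expresses $z = 2\,\text{Pr}(\text{QSP success and ancilla}=|0\rangle) - \text{Pr}(\text{QSP success})$, which is exactly the expectation of your folded single-run estimator $X$, so your Hoeffding bound and the paper's central-limit-theorem argument deliver the same $O(1/\epsilon^2)$ cost while avoiding any division by a small $p_{\text{QSP}}$. The query-depth, swap-test depth, and width accounting also match the paper's proof.
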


\begin{proof}
    Using the parallel QSP circuit of Fig.~\ref{fig:Parallel_QSP_Circuit}, consider the measurement cost of resolving $z = \text{Pr}(\text{QSP Success}) \times \tilde{z}$ to additive error $\epsilon$. Obviously, $\text{Pr}(\text{QSP Success})$ is a probability and can naturally be estimated by repeatedly running the circuit. On the other hand, the expression for $\tilde{z}$ depends on the chosen implementation of the generalized swap test, but in general can be expressed as an expectation value. 
    
    For instance, a simple implementation of the generalized swap test is provided by applying a Hadamard test to $\mathcal{S}_k$. In this case, the probability of measuring the ancilla qubit (of the Hadamard test) in the state $|0\rangle$ upon successful application of each QSP sequence is 
    \begin{equation}
        \text{Pr}\big(\text{Ancilla } = |0\rangle \ \big| \ \text{QSP Success} \big) = \frac{1}{2}(1+\tilde{z}).
    \end{equation}
    Then we can express $z$ as 
    \begin{equation}\label{eq:z_expression_Hadamard_Test}
    \begin{aligned}
        z &= \text{Pr}(\text{QSP Success}) \\
        & \qquad \quad \times \Big( 2 \cdot \text{Pr}\big(\text{Ancilla } = |0\rangle \ \big| \ \text{QSP Success} \big) - 1 \Big) \\
        &= 2 \cdot \text{Pr}\big(\text{QSP Success, and Ancilla } = |0\rangle \big) \\
        & \qquad \qquad - \text{Pr}(\text{QSP Success}).
    \end{aligned}
    \end{equation}
    Therefore, estimating both of these probabilities to additive error $\epsilon/3$ provides an approximation to $z$ with additive error $\epsilon$, and the central limit theorem implies that this requires $O(1/\epsilon^2)$ runs of the parallel QSP circuit. While this specializes to a specific implementation of the generalized swap test, this result is in fact general, because other implementations also approximate $\tilde{z}$ as a combination of expectation values. 
    
    Next, consider the query depth of the parallel QSP circuit. In the QSP stage, each unitary $U\big[ P_j(\rho) \big]$ requires query depth at most $2 \deg(P_j)$ for an arbitrary polynomial as we discussed in Sec.~\ref{sec:Overview_QSP}, corresponding to a total query depth $2 \max_j (\deg(P_j))$. On the other hand, the generalized swap test makes no queries to the block encoding and does not contribute to the query depth. It however can contribute to the circuit depth depending on its implementation, but this can be reduced to $O(1)$ using the constructions of Refs.~\cite{Suba__2019, Quek_2024}
    
    Lastly, as the QSP stage consists of $k$ unitaries enacted in parallel across $k$ systems, its width is $O(k)$. Likewise, while the precise width of the generalized swap test stage depends on its implementation, the constructions of Refs.~\cite{Johri_2017, Yirka_2021, Suba__2019, Quek_2024} use $O(k)$ system copies arranged in parallel, equating to a width $O(k)$.  
\end{proof}

\subsubsection{Characterization of Parallel QSP Polynomials}
According to Theorem~\ref{thm:Parallel_QSP}, the parallel QSP algorithm estimates the trace $z = \tr( \rho^k \prod_{j=1}^k |P_j(\rho)|^2 )$, thus parallelizing the computation of polynomials that take the form $x^k \prod_{j=1}^k |P_j(x)|^2$. We can characterize this class of polynomials by appealing to the fundamental theorem of algebra: 

\begin{lemma}[Factorization of Real, Non-negative Polynomials]\label{thm:Real_NonNeg_Poly_Char}
Consider a polynomial $R(x)$ of even degree $d$ that is real and non-negative over the real axis $x \in \mathbb{R}$. For real inputs $x$, this polynomial can be expressed as the product of the squared magnitudes of $k \leq d$ factor polynomials $\mathcal{R}_j(x)$:
\begin{equation}
    R(x) = \prod_{j=1}^k |\mathcal{R}_j(x)|^2.
\end{equation}
While the factor polynomials $\mathcal{R}_j(x)$ are not unique, there exists a factorization in which every factor polynomial is guaranteed to have degree at most $ \text{deg}(\mathcal{R}_j) \leq \lceil d/2k \rceil$.
\end{lemma}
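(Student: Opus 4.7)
The plan is to prove the lemma by invoking the Fundamental Theorem of Algebra (FTA) to factor $R(x)$ into linear factors over $\mathbb{C}$, then use the real-valuedness and non-negativity hypotheses to organize those factors into magnitudes-squared of complex polynomials, and finally distribute the linear factors evenly among $k$ groups.

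\textbf{Step 1: Structural factorization of $R(x)$ over $\mathbb{C}$.} By the FTA, write $R(x) = c \prod_{i=1}^d (x - \alpha_i)$, where $c \in \mathbb{C}$ and $\alpha_i \in \mathbb{C}$. Because $R$ has real coefficients, non-real roots must occur in complex-conjugate pairs. Because $R(x) \geq 0$ for all real $x$, the leading coefficient satisfies $c > 0$, and any real root must occur with even multiplicity (otherwise $R$ would change sign there). Hence we may group the roots as pairs: each non-real root $z_j$ is paired with its conjugate $\bar z_j$, and each real root $r_i$ is paired with itself (one pair per factor of $(x-r_i)^2$). This yields exactly $d/2$ pairs.

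\textbf{Step 2: Write $R(x)$ as $|Q(x)|^2$ on $\mathbb{R}$.} Define the degree-$(d/2)$ polynomial
\begin{equation}
Q(x) = \sqrt{c} \prod_{\ell=1}^{d/2} (x - \beta_\ell),
\end{equation}
where one representative $\beta_\ell$ is chosen from each pair (so $\beta_\ell$ is $z_j$ or $r_i$). For real $x$, we have $(x - z_j)(x - \bar z_j) = |x - z_j|^2$ and $(x - r_i)^2 = |x - r_i|^2$, so $R(x) = |Q(x)|^2$ holds identically on $\mathbb{R}$. Importantly, $Q(x)$ is an honest polynomial (possibly with complex coefficients), not a formal square root.

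\textbf{Step 3: Distribute the $d/2$ linear factors into $k$ groups.} Partition the index set $\{1,\dots,d/2\}$ into $k$ disjoint subsets $S_1,\dots,S_k$ with $|S_j| \leq \lceil d/(2k)\rceil$ (a standard pigeonhole partition; if $k > d/2$, some subsets are empty and the corresponding factor is a constant). Set
\begin{equation}
\mathcal{R}_j(x) = c_j \prod_{\ell \in S_j} (x - \beta_\ell),
\end{equation}
choosing the nonzero constants $c_j$ so that $\prod_{j=1}^k c_j = \sqrt{c}$. Then $\prod_{j=1}^k \mathcal{R}_j(x) = Q(x)$, so $R(x) = |Q(x)|^2 = \prod_{j=1}^k |\mathcal{R}_j(x)|^2$ on $\mathbb{R}$, and $\deg(\mathcal{R}_j) = |S_j| \leq \lceil d/(2k)\rceil$, as required.

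\textbf{Anticipated obstacle.} The proof itself is short; the only subtlety lies in Step 1, namely verifying rigorously that the non-negativity of $R$ on $\mathbb{R}$ forces real roots to have \emph{even} multiplicity (so they can legitimately be absorbed into a $|\cdot|^2$) and forces the leading coefficient to be positive (so $\sqrt{c}$ is real and $Q$ is well-defined). This amounts to a sign analysis near each real root and in the limit $x \to \pm\infty$ using the fact that $d$ is even. Once that is in place, Steps 2 and 3 are pure bookkeeping, and the factor polynomials $\mathcal{R}_j$ can be freely rearranged since only $|\mathcal{R}_j(x)|^2$ appears in the final product --- there is no need to keep conjugate pairs within the same group.
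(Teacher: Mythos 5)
Your proposal is correct and follows essentially the same route as the paper's proof: apply the fundamental theorem of algebra, use non-negativity to pair real roots (even multiplicity) and conjugate complex roots, write $R(x) = |Q(x)|^2$ for a degree-$d/2$ polynomial $Q$, and then split the $d/2$ linear factors into $k$ groups of size at most $\lceil d/2k \rceil$. The only difference is cosmetic: you explicitly flag the positivity of the leading coefficient and the even-multiplicity argument, which the paper asserts without elaboration.
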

\begin{proof}
    Applying the fundamental theorem of algebra to a polynomial $R(x)$ of even degree\footnote{Note that the degree is necessarily even; otherwise, the condition of non-negativity for all $x \in \mathbb{R}$ cannot be obeyed.} $d$ that is real and non-negative over $x \in \mathbb{R}$, implies that its real roots have even multiplicity, and its complex roots (with non-zero imaginary part) come in complex conjugate pairs. Accordingly, $R(x)$ can be written as 
    \begin{equation}
        R(x) = C \prod_{i=1} (x-r_{i})^{2 \alpha_{r_i}} \prod_{l=1} (x - c_{l})^{\beta_{c_l}} (x - c_{l}^*)^{\beta_{c_l}},
    \end{equation}
    for distinct real roots $r_{i}$ of even multiplicity $2 \alpha_{r_i}$, and distinct complex roots $c_{l}$ of multiplicity $\beta_{c_l}$, and a coefficient $C \in \mathbb{R}$. 
    For real $x \in \mathbb{R}$, $R(x)$ can therefore be expressed as
    \begin{equation}\label{eq:R}
    \begin{aligned}
        &R(x) = \Bigg| \sqrt{C} \prod_{i=1} (x-r_i)^{\alpha_{r_i}} \prod_{l=1} (x-c_l)^{\beta_{c_l}} \Bigg|^2 =: \big| \mathcal{R}(x) \big|^2 ,
    \end{aligned}
    \end{equation}
    where $\mathcal{R}(x)$ is a degree $d/2$ polynomial.

    To show the decomposition stated in this theorem, we need to factorize $\mathcal{R}(x)$ into a product of $k$ factor polynomials: $\mathcal{R}(x) = \prod_{j=1}^k \mathcal{R}_j(x)$. This can be achieved by partitioning the $d/2$ terms in Eq.~\eqref{eq:R} into $k$ groups, and defining $\mathcal{R}_j(x)$ as the product over terms in the $j$th group, times $C^{1/2k}$. Many such groupings exist, so a factorization of $\mathcal{R}(x)$ is not unique. Nonetheless, one can partition the roots such that the first $d/2 \ \text{mod} \ k$ groups are of size $\lfloor d/2k \rfloor + 1$, and the remaining groups are of size $\lfloor d/2k \rfloor$. If $d/2$ divides $k$, then the maximal size is $\lfloor d/2k \rfloor = d/2k$; if $d/2$ does not divide $k$, then the maximal size is $\lfloor d/2k \rfloor + 1 = \lceil d/2k \rceil$. In either case, this guarantees that each factor polynomial has degree at most $\text{deg}(\mathcal{R}_j) \leq \lceil d/2k \rceil$.
\end{proof}

By Lemma~\ref{thm:Real_NonNeg_Poly_Char}, an arbitrary real, non-negative polynomial of even degree $d$ can be decomposed into a product of $k$ factor polynomials squared: $R(x) = \prod_{j=1}^k |\mathcal{R}_j(x)|^2$, where the factor polynomials are of degree at most $\lceil d/2k \rceil = O(d/k)$. This decomposition makes the polynomial $R(x)$ amenable to parallel QSP according to Theorem~\ref{thm:Parallel_QSP}, given that we implement the factor polynomials $\mathcal{R}_j(x)$ with QSP. With this insight, we can characterize the class of polynomials achievable with parallel QSP:

\begin{theorem}[Parallel QSP Polynomial Characterization] \label{thm:parallel_qsp_poly_char}
Let $R(x)$ be a polynomial of even degree $d$, that is real and non-negative over the real axis $x\in \mathbb{R}$. By Lemma~\ref{thm:Real_NonNeg_Poly_Char}, let $R(x)$ factorize into $k$ factor polynomials as $R(x) = \prod_{j=1}^k |\mathcal{R}_j(x)|^2 $, where $\deg(\mathcal{R}_j) \leq \lceil d/2k \rceil$. Invoking parallel QSP across $k$ threads with block-encoded polynomials $\mathcal{R}_j(x)$, we can estimate the trace
\begin{equation}
    z = \tr( \rho^k R(\rho)).
\end{equation}
The requisite query depth is at most $ 2\lceil d/2k \rceil \approx d/k$ and the circuit width is $O(k)$. The number of measurements required to estimate $z$ to additive error $\epsilon$ is 
\begin{equation}
    O \left(\frac{\mathcal{K}\left( R \right)^4 }{\epsilon^2} \right),
\end{equation}
where $\mathcal{K}\left( R \right)$ is a quantity we call the ``factorization constant", whose value depends on the chosen factorization of $R(x)$ as~\footnote{Note that the factorization constant is fundamentally a function of the factor polynomials $\{ \mathcal{R}_j(x) \}$. However, for brevity of notation, we denote it as a function of $R(x)$.}
\begin{equation}\label{eq:factorization_const}
    \mathcal{K}\left( R \right) = \prod_{j=1}^k \|\mathcal{R}_j \|_{[-1,1]}, 
\end{equation}
where the norm $\| \cdot \|_{[-1,1]} $ was defined in Eq.~\eqref{eq:function_norm}. 

\end{theorem}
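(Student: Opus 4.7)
The plan is to combine Lemma~\ref{thm:Real_NonNeg_Poly_Char} with Theorem~\ref{thm:Parallel_QSP} in a mostly mechanical way, with the one nontrivial step being how to handle the fact that QSP only encodes polynomials with $\|\cdot\|_{[-1,1]}\leq 1$, while the factor polynomials $\mathcal{R}_j(x)$ produced by Lemma~\ref{thm:Real_NonNeg_Poly_Char} carry no such bound. First I would invoke Lemma~\ref{thm:Real_NonNeg_Poly_Char} to write $R(x)=\prod_{j=1}^k |\mathcal{R}_j(x)|^2$ with $\deg(\mathcal{R}_j)\leq \lceil d/2k\rceil$. To feed these factors into parallel QSP, I would then rescale each one, defining
\begin{equation}
    P_j(x) := \frac{\mathcal{R}_j(x)}{\|\mathcal{R}_j\|_{[-1,1]}},
\end{equation}
so that $\|P_j\|_{[-1,1]}\leq 1$ and generalized QSP (as reviewed in Sec.~\ref{sec:Overview_QSP}) can block-encode each $P_j$ using query depth at most $2\deg(\mathcal{R}_j)\leq 2\lceil d/2k\rceil$.

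Next I would apply Theorem~\ref{thm:Parallel_QSP} to these block-encoded $P_j(\rho)$, which enables estimation of
\begin{equation}
    \tilde{z} = \tr\Bigg(\rho^k \prod_{j=1}^k |P_j(\rho)|^2\Bigg) = \frac{1}{\mathcal{K}(R)^2}\, \tr(\rho^k R(\rho)) = \frac{z}{\mathcal{K}(R)^2},
\end{equation}
using the definition of $\mathcal{K}(R)$ in Eq.~\eqref{eq:factorization_const}. Thus $z = \mathcal{K}(R)^2\, \tilde{z}$, and we can extract $z$ classically once $\tilde{z}$ is resolved. The query depth of the circuit is inherited directly from Theorem~\ref{thm:Parallel_QSP}, namely $2\max_j\deg(P_j) \leq 2\lceil d/2k\rceil \approx d/k$, and the width is $O(k)$ from the parallel layout of the $k$ threads and the generalized swap test.

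The measurement cost then follows from error propagation under the rescaling. To learn $z$ to additive error $\epsilon$, it suffices to learn $\tilde{z}$ to error $\epsilon/\mathcal{K}(R)^2$. By Theorem~\ref{thm:Parallel_QSP}, this requires $O\big((\mathcal{K}(R)^2/\epsilon)^2\big) = O\big(\mathcal{K}(R)^4/\epsilon^2\big)$ repetitions of the parallel QSP circuit, matching the claimed bound. I would emphasize here that the $\mathcal{K}(R)^4$ dependence is exactly the multiplicative overhead arising from propagating two factors of $\mathcal{K}(R)$ from the rescaling through the $1/\epsilon^2$ sample complexity of the underlying Hadamard/QSP test.

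The main conceptual obstacle is really only bookkeeping: the factorization in Lemma~\ref{thm:Real_NonNeg_Poly_Char} is not unique, so one must verify that $\mathcal{K}(R)$ is meaningfully defined as a function of the chosen factorization (hence the footnote in the theorem statement), and that no factorization-dependent subtleties, such as $\|\mathcal{R}_j\|_{[-1,1]}<1$ leading to unnecessary deflation of signal, break the argument. Since only an upper bound on the measurement count is claimed, and since the rescaling step is valid whenever each $\mathcal{R}_j$ has a finite supremum norm on $[-1,1]$ (automatic for polynomials), there is no genuine obstruction; choosing the factorization that minimizes $\mathcal{K}(R)$ is an optimization problem I would defer to later sections where specific decompositions are analyzed.
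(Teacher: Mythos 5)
Your proposal is correct and follows essentially the same route as the paper's proof: factorize via Lemma~\ref{thm:Real_NonNeg_Poly_Char}, rescale each factor polynomial by its sup-norm so that generalized QSP can block-encode it with query depth at most $2\lceil d/2k\rceil$, apply Theorem~\ref{thm:Parallel_QSP} to estimate $z/\mathcal{K}(R)^2$, and propagate the rescaling through the error budget to obtain the $O(\mathcal{K}(R)^4/\epsilon^2)$ measurement cost. No gaps; the handling of the rescaling and the factorization-dependence of $\mathcal{K}(R)$ matches the paper's argument.
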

\begin{proof}
    The density matrix $\rho$ is Hermitian and its eigenvalues real. Therefore, as per Lemma~\ref{thm:Real_NonNeg_Poly_Char}, the action of $R(x)$ on a $\rho$ factorizes as a product of factor polynomials: 
    \begin{equation}
        R(\rho) = \prod_{j=1}^k |\mathcal{R}_j(\rho)|^2,
    \end{equation}
    where $\deg(\mathcal{R}_j) \leq \lceil d/2k \rceil$. Provided block encodings of the factor polynomials, we can apply the results of Theorem~\ref{thm:Parallel_QSP} to extract the trace $\tr(\rho^k \prod_{j=1}^k |\mathcal{R}_j(\rho)|^2 ) = \tr(\rho^k R(\rho))$, as desired. Therefore, all that remains is to construct block encodings of the factor polynomials with QSP.

    However, the factor polynomials do not necessarily obey the conditions of QSP polynomials. For instance, even if $\| R \|_{[-1,1]} \leq 1$, it is not necessarily true that the factor polynomials also obey this condition. In addition, the factor polynomials are in general not of fixed parity. Hence, in full generality, these factor polynomials must be implemented by rescaling by a constant and using a tool like generalized QSP. 

    In more detail, we can implement an arbitrary factor polynomial $\mathcal{R}_j(x)$ by rescaling as
    \begin{equation}\label{eq:factor_poly_rescaling}
        \frac{\mathcal{R}_j(x)}{\| \mathcal{R}_j \|_{[-1,1]}},
    \end{equation}
    which guarantees that this is bounded in magnitude by $1$. We can then block-encode this rescaled polynomial with generalized QSP. As we discussed in Sec.~\ref{sec:Overview_QSP}, the requisite query depth of this procedure is $\leq 2 \lceil d/2k \rceil $. Note however that this simplifies if $\mathcal{R}_j(x)$ is of fixed parity and can be implemented with standard QSP; in this simpler case, the query depth is at most $\lceil d/2k \rceil$.

    Then, to estimate $z = \tr(\rho^k R(\rho) )$, we block encode the $k$ rescaled factor polynomials $\mathcal{R}_j(x)/\| \mathcal{R}_j \|_{[-1,1]}$ in parallel and execute the parallel QSP circuit of Fig.~\ref{fig:Parallel_QSP_Circuit}, which produces an estimate of
    \begin{equation}\label{eq:rescaled_z}
    \begin{aligned}
        \tr(\rho^k \prod_{j=1}^k \Bigg|\frac{\mathcal{R}_j(\rho)}{\| \mathcal{R}_j \|_{[-1,1]}}\Bigg|^2) &=
        \frac{\tr(\rho^k R(\rho))}{\prod_{j=1}^k \| \mathcal{R}_j \|_{[-1,1]}^2 } = \frac{z}{\mathcal{K}(R)^2} ,
    \end{aligned}
    \end{equation}
    where $\mathcal{K}( R ) := \prod_{j=1}^k \| \mathcal{R}_j \|_{[-1,1]}$ is the \emph{factorization constant}, which depends on the chosen factorization of $R(x)$. In order to resolve $z$ to additive error $\epsilon$, it suffices to resolve Eq.~\eqref{eq:rescaled_z} to additive error $\epsilon/\mathcal{K}( R )^2$. According to Theorem~\ref{thm:Parallel_QSP}, this requires a number of measurements $O\left(\mathcal{K}( R )^4 / \epsilon^2 \right)$.
    \end{proof}

Theorem~\ref{thm:parallel_qsp_poly_char} furnishes the following algorithm for estimating the trace $z = \tr(\rho^k R(\rho))$, whose pseudocode we present in Algorithm~\ref{alg:Parallel_QSP}. The first step is to factorize $R(x)$, either analytically or numerically. This can be achieved numerically by determining the roots of $R(x)$ by computing the eigenvalues of its companion matrix~\cite{edelman1995polynomial}; for a degree-$d$ polynomial, this requires $O(d^3)$ time and can be performed as a classical pre-computation step. The next step is to implement the factor polynomials with QSP, and then finally run the parallel QSP circuit as per Theorem~\ref{thm:Parallel_QSP} to obtain an estimate of $z$.

\begin{algorithm}
\caption{Parallel Quantum Signal Processing}\label{alg:Parallel_QSP}
\SetKwInOut{Comp}{Cost}
\SetKwInOut{Proc}{Procedure}
\KwIn{(1) Access to a state $\rho$ and a block encoding of $\rho$; (2) a polynomial $R(x)$ of even degree $d$, that is real and non-negative over $x \in \mathbb{R}$. }
\KwOut{An estimate of $z=\tr(\rho^k R(\rho))$ to additive error $\epsilon$}
\Comp{$O(\mathcal{K}( R )^4 / \epsilon^2)$ executions of a circuit of width $O(k)$ and query depth $O(d/k)$, where $\mathcal{K}( R ) = \prod_{j=1}^k \| \mathcal{R}_j \|_{[-1,1]}$ is a constant that depends on the chosen factorization of $R(x)$.}
\Proc{ }
Classically determine a factorization $R(x) = \prod_{j=1}^k |\mathcal{R}_j(x)|^2$, such that $\deg(\mathcal{R}_j) \leq \lceil d/2k \rceil $ for all $j$\;
Using QSP, construct block encodings of $\mathcal{R}_j(\rho)$ (possibly rescaled as in Eq.~\eqref{eq:factor_poly_rescaling})\;
Run the parallel QSP circuit of Fig.~\ref{fig:Parallel_QSP_Circuit} a number of times $O\left( \mathcal{K}( R )^4/ \epsilon^2 \right)$\,.
\end{algorithm}

We will refer to the number of measurements required by parallel QSP as its \emph{measurement cost}. From Theorem~\ref{thm:parallel_qsp_poly_char}, this is $O\left( \mathcal{K}( R )^4/ \epsilon^2 \right)$, which crucially depends on the chosen factorization of $R(x)$ through the factorization constant $\mathcal{K}(R)$. The factorization constant measures the cost of implementing the factor polynomials, which in general requires rescaling. A poor choice of factorization can result in this constant scaling exponentially in the degree $d$, dashing any quantum advantage provided by this protocol. For instance, by factorizing the order $d$ Chebsyshev polynomial (of the first kind) into two factor polynomials composed of its positive-valued and negative-valued roots respectively, the resulting factorization constant scales as $2^{O(d)}$. 

Therefore, to minimize the measurement cost in practice, it is best to select a \emph{low-norm factorization} of $R(x)$, whose factorization constant scales at worst as $\text{poly}(d) 2^{O(k)} $, rather than $2^{O(d)}$. This can be achieved by selecting factor polynomials of modest norm, which generally requires some analytic knowledge of the polynomial's structure.\footnote{That said, low-norm factorizations are not necessarily rare. For instance, it can be shown that any infinite family of real polynomials whose roots partition the interval $[-1,1]$ into segments of size $O(1/d)$ admit factorizations with norms $\| \mathcal{R}_j \|_{[-1,1]} = \text{poly}(d)$. This can generically be achieved by interleaving the roots of the factor polynomials.}

In addition, as we remarked in the proof of Theorem~\ref{thm:parallel_qsp_poly_char}, while in general the query depth is upper bounded by $2\lceil d/2k \rceil$, if the factor polynomials can all be chosen to be of definite parity, then they can be implemented through standard QSP with a query depth at most $\lceil d/2k \rceil$. Nonetheless, any realization of parallel QSP will attain a query depth scaling as $O(d/k)$. This is because any circuit that prepares a degree-$d$ polynomial $R(\rho)$ requires $O(d)$ instances of $\rho$, whether arranged in parallel or in series. If these instances are parallelized over $k$ threads, the query depth across the threads must be at least $O(d/k)$.

\subsection{Remarks}\label{sec:Remarks}
As presented, the parallel QSP algorithm reduces the query depth needed to compute the trace $z = \tr(\rho^k R(\rho))$. While a standard QSP implementation of the polynomial $\rho^k R(\rho)$ (which has definite parity because $R(x)$ is even) requires a query depth $k+d$, parallel QSP requires a query depth at most $2\lceil d/2k \rceil \approx d/k$. This shrinks the query depth by a factor $\approx k$, yet requires increasing the width to $O(k)$, and the number of measurements by a factor $\mathcal{K}(R)^4$. As $\mathcal{K}(R)$ is a product of $k$ function norms, it generally scales as $2^{O(k)}$. Thus, parallel QSP enables a tradeoff between quantum and classical resources, and is most suitable for platforms limited by short coherence times.

Interestingly, this tradeoff is reminiscent of that encountered in quantum circuit cutting~\cite{Peng_2020, Lowe_2023}. In that context, a quantum circuit is cut across $K$ wires to decompose it into circuits of smaller depth and/or width, which are repeatedly executed to simulate the original quantum circuit. The corresponding measurement overhead scales as $2^{O(K)}$, resembling that of parallel QSP. Likewise, parallel QSP shares similarities with the randomized QSP algorithms presented in Refs.~\cite{tosta2023randomized, wang2024faster, martyn2024halving}. These algorithms randomly sample over QSP polynomials of different degrees and reduce the average degree/query depth. However, while parallel QSP reduces the maximal query depth, these randomized algorithms do not, as high-degree polynomials are still sampled. This makes parallel QSP better suited for quantum hardware constrained by coherence times, where deep circuits are out of reach.

Moreover, the depth-width tradeoff afforded by parallel QSP merits careful explanation. While coherence times constrain circuit depths, they also tend to decrease with the number of qubits, making a tradeoff between depth and width nontrivial in general. Crucially however, in parallel QSP this tradeoff is not arbitrary, but rather is carefully structured: the computation is distributed across $k$ independent threads that do not interact until a final generalized swap test. Because this swap test can be executed in constant depth~\cite{Quek_2024}, the entangled state across all $k$ threads only needs to be maintained briefly. Consequently, the effective coherence requirement for parallel QSP is governed primarily by the coherence time of an individual thread, rather than that of all $k$ threads. Further, the independence of the threads provides a natural form of error resilience: because errors cannot propagate across threads until the swap test, one can simply use error detection to detect and discard erroneous threads, preventing the accumulation of error across the whole computation.

Furthermore, as currently presented, the scope of parallel QSP is limited to computing functions of a density matrix. This follows from the use of the generalized swap test, which enables the multiplication of density matrices arranged in parallel. This limited scope is unsurprising: if an arbitrary QSP algorithm could be parallelized, then one could parallelize Hamiltonian simulation and violate the no-fast forwarding theorem~\cite{Berry_2006} which forbids circuit depths sub-linear in the simulation time~\cite{chia2023impossibility}. Nonetheless, parallel QSP can still be applied to a general operator if it is encoded in a density matrix, such as a Hamiltonian encoded in a thermal state $\rho \propto e^{-\beta H}$ or the state $\rho \propto H+cI$ considered in sample-based Hamiltonian simulation~\cite{Kimmel_2017}.

\section{Parallel QSP for Property Estimation}\label{sec:PQSP_for_Property_Estimation}

A noteworthy application of QSP is estimating properties of a quantum state, expressed as $\tr(f(\rho))$ for a function $f(\rho)$. For instance, the von Neumann entropy is captured by the function $f(\rho) = - \rho \ln \rho$. In practice, such a property can be estimated with QSP by implementing a polynomial $P(\rho) \approx f(\rho)$, and approximating the trace using the techniques of Sec.~\ref{sec:QSP_Trace_Est}. This approach has established algorithms for evaluating the von Neumman entropy~\cite{wang2022new}, R\'enyi entropies~\cite{Subramanian_2021}, fidelities~\cite{gilyen2022improved}, and other related properties.

However, as currently framed, parallel QSP suffers from two limitations that render it inapplicable to general property estimation: parallel QSP (1) applies to a limited class of polynomials $x^k R(x)$ where $R(x)$ is real and non-negative over $x\in \mathbb{R}$, and (2) requires knowledge of a low-norm factorization of $R(x)$ to achieve a reasonable measurement cost (e.g., $\text{poly}(d) 2^{O(k)}$ rather than $2^{O(d)}$). 

In this section we show how to overcome both of these challenges by developing a method that enables parallel QSP to accommodate arbitrary polynomials, while maintaining a query depth $O(d/k)$ and guaranteeing a reasonable measurement cost. This dramatically expands the class of polynomials amenable to parallel QSP, and furnishes property estimation algorithms with reduced query depth.

\subsection{Prelude}
To formalize our problem of interest, suppose we wish to estimate a property by the trace of a real, degree-$d$ polynomial $P(x)$:
\begin{equation}
    w = \tr(P(\rho)), 
\end{equation}
where
\begin{equation}
    P(x) = \sum_{n=0}^d a_n x^n, \ \  \|P\|_{[-1,1]} \leq 1 .
\end{equation}
For a general $P(x)$, estimating $w$ with standard QSP requires a query depth $2d = O(d)$, and also a number of measurements $O(1/\epsilon^2)$ to resolve $w$ with additive error $\epsilon$. In contrast, here we will use parallel QSP to parallelize this computation over $k$ threads, and achieve a query depth $O(d/k + k )$. The resulting measurement cost will depend on the chosen decomposition and factorization of $P(x)$. 

As we remarked above, parallel QSP cannot be directly applied to an arbitrary polynomial $P(x)$. Instead, in order to parallelize the computation over $k$ threads, we split $P(x)$ into a sum of two \textit{constituent polynomials}:
\begin{equation}\label{eq:Poly_decomp}
\begin{aligned}
    P(x) &= \sum_{n=0}^{k-1} a_n x^n + x^k \sum_{n=k}^d a_{n}x^{n-k}  \\
    &=: P_{<k}(x) + x^k P_{\geq k}(x). 
\end{aligned}
\end{equation}
where,
\begin{equation}
\begin{aligned}
    & P_{<k}(x) := \sum_{n=0}^{k-1} a_n x^n, \quad P_{\geq k}(x) := \sum_{n=0}^{d-k} a_{n+k}x^{n}, 
\end{aligned}
\end{equation}
are the constituent polynomials of $P(x)$. $P_{<k}(x) $ and $P_{\geq k }(x)$ are real polynomials of degree $k-1$ and $d-k$, respectively. With this decomposition, the desired property can be written as a sum of two \emph{constituent traces}
\begin{equation}\label{eq:Constituent_traces}
\begin{aligned}
    & w = w_{<k} + w_{\geq k}, \\
    & w_{<k} = \tr(P_{< k} (\rho)), \  w_{\geq k} = \tr(\rho^k P_{\geq k} (\rho)). 
\end{aligned}
\end{equation}
Therefore, to estimate $w$, it will equivalently suffice to estimate $w_{<k}$ and $w_{\geq k}$.

Importantly, $w_{<k}$ is the trace of a polynomial of degree $k-1$, which can be easily estimated with standard QSP at query depth $2(k-1)$ and width $O(1)$. On the other hand, $w_{\geq k}$ is the trace of $\rho^k$ times a polynomial of degree $d-k$, which nearly fits into the framework of parallel QSP. By incorporating appropriate algebraic manipulations to ensure that $P_{\geq k} (x)$ is non-negative, we will estimate $w_{\geq k}$ with parallel QSP at a query depth $\approx (d-k)/k < d/k$ and circuit width $O(k)$. Therefore, the overall requisite query depth to estimate $w$ is guaranteed to never exceed $\approx \max \{ 2k, d/k \}$. In practice, $k \ll d$ (e.g., a large degree polynomial parallelized over a few threads), in which case the query depth reduces to $ \approx d/k$.

In this manner, property estimation with parallel QSP can be viewed as a hybrid of standard and parallel QSP, where the low degree terms are estimated with standard QSP, and the higher degree terms with parallel QSP. Below, we will investigate this procedure more closely. We first consider the case in which $P_{\geq k} (x)$ is non-negative and hence directly amenable to parallel QSP. We next show that even if $P_{\geq k} (x)$ is not non-negative, it can be decomposed into a basis of non-negative polynomials and thus made amenable to parallel QSP. In both situations, we include bounds on the requisite query depth and measurement costs to estimate $w$ to a desired level of error. In the first situation, the cost crucially depends on the factorization of $P_{\geq k} (x)$ as per Theorem~\ref{thm:parallel_qsp_poly_char}. In the second case, the measurement cost depends on the factorization of our purported decomposition; we prove the existence of a low-norm factorization such that this contribution scales as $O(d^4 2^{O(k)}/k^2)$.

Lastly, as we will see in the following, although $P(x)$ is bounded as $\| P \|_{[-1,1]} \leq 1$, the constituent polynomials are not necessarily bounded the same: $\| P_{< k} \|_{[-1,1]}, \| P_{\geq k}  \|_{[-1,1]} \nleq 1$. As a result, the measurement cost of estimating $w$ will also depend on the norms $\| P_{< k} \|_{[-1,1]}$ and $\| P_{\geq k}  \|_{[-1,1]}$. We prove in Appendix~\ref{app:bounds_partial_sum_polys} that for any bounded polynomial $P(x)$, its constituent polynomials are upper bounded as 
\begin{equation}\label{eq:magnitude_bounds}
    \begin{aligned}
        & \sup_{P(x), \ \|P \|_{[-1,1]} \leq 1} \big\|P_{< k} \big\|_{[-1,1]} \leq O\Bigg( \frac{d^{k-1}}{(k-1)!} \Bigg), \\
        & \sup_{P(x), \ \|P \|_{[-1,1]} \leq 1} \big\| P_{\geq k}  \big\|_{[-1,1]} \leq O\Bigg( \frac{d^k}{k!} \sqrt{\frac{d}{k}}\Bigg) . 
    \end{aligned}
\end{equation}
Therefore, for $k=O(1)$, it is necessarily the case that $\|P_{< k} \|_{[-1,1]}, \|P_{< k} \|_{[-1,1]} = O({\rm poly}(d))$ scales at worst as a polynomial in $d$. Moreover, we emphasize that these are worst case bounds, and that many polynomials of interest have constituent polynomials with much smaller norms. For example, the polynomial approximation to the exponential function $e^{-\beta (x+1)}$ (e.g., for thermal state preparation) has constituent polynomials both upper bounded in magnitude by $O(1)$, independent of $d$.

\subsection{Estimation by Direct Application of Parallel QSP}
If $P_{\geq k}(x)$ is non-negative, then we can use the above intuition to estimate the property $w  =\tr(P(\rho))$ by a direct application of parallel QSP, and achieve a query depth $\approx d/k $:
\begin{theorem}[Parallel QSP for Property Estimation: Direct Application] \label{thm:parallel_qsp_prop_est_1}
Consider a real polynomial $P(x)$ of degree $d$, that is bounded as $\| P \|_{[-1,1]} \leq 1$. Let $P(x)$ decompose according to Eq.~\eqref{eq:Poly_decomp} as 
\begin{equation}
    P(x) = P_{< k} (x) + x^k P_{\geq k} (x),
\end{equation}
and suppose that $P_{\geq k} (x)$ is non-negative over $x \in \mathbb{R}$. By invoking parallel QSP across $k$ threads, we can estimate $w = \tr(P(\rho))$ with query depth at most ${\rm max} \{ 2(k-1)  , 2 \lceil \tfrac{d-k}{2k} \rceil \} \lesssim {\rm max} \{ 2k  , d/k \}  = O(d/k + k)$. The number of measurements required to resolve $w$ to additive error $\epsilon$ is 
\begin{equation}
    O \Bigg( \frac{ \big\| P_{< k}  \big\|_{[-1,1]}^2 +  \mathcal{K}\left(P_{\geq k}  \right)^4  }{\epsilon^2 } \Bigg) ,  
\end{equation}
where $ \mathcal{K}\left(P_{\geq k}  \right) $ is the factorization constant defined in Eq.~\eqref{eq:factorization_const} and depends on the chosen factorization of $P_{\geq k}  (x)$. 
\end{theorem}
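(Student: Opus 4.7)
The plan is to invoke the constituent-trace decomposition $w = w_{<k} + w_{\geq k}$ of Eq.~\eqref{eq:Constituent_traces} and estimate the two pieces on separate circuits, summing the results. Because the two subroutines run on independent circuits, the overall query depth is the maximum of their individual depths and the overall measurement budget is the sum. This splits the task into a parallel-QSP piece (for $w_{\geq k}$, the high-degree part) and a standard-QSP piece (for $w_{<k}$, the low-degree part).

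For $w_{\geq k} = \tr(\rho^k P_{\geq k}(\rho))$, I would apply Theorem~\ref{thm:parallel_qsp_poly_char} directly with $R = P_{\geq k}$. The hypothesis that $P_{\geq k}$ is non-negative on $\mathbb{R}$ forces $\deg(P_{\geq k}) = d-k$ to be even (an odd-degree real polynomial cannot stay non-negative as $|x|\to\infty$), which is precisely the parity hypothesis of Lemma~\ref{thm:Real_NonNeg_Poly_Char}. The lemma then supplies a factorization $P_{\geq k}(x) = \prod_{j=1}^{k}|\mathcal{R}_j(x)|^2$ with each $\deg(\mathcal{R}_j)\le\lceil(d-k)/(2k)\rceil$, and invoking Theorem~\ref{thm:parallel_qsp_poly_char} yields an estimate of $w_{\geq k}$ to error $\epsilon/2$ at query depth $2\lceil(d-k)/(2k)\rceil$, circuit width $O(k)$, and measurement cost $O(\mathcal{K}(P_{\geq k})^4/\epsilon^2)$.

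For $w_{<k} = \tr(P_{<k}(\rho))$, I would peel off the constant term as $P_{<k}(x) = a_0 + x\,Q(x)$ with $\deg(Q) = k-2$, reducing the task to estimating $\tr(\rho\,Q(\rho))$ (the $a_0\tr(I)$ contribution is a deterministic constant and can be added classically). After rescaling so that the implemented polynomial is bounded by one on $[-1,1]$, this is precisely the Hadamard-test trace-estimation setup of Sec.~\ref{sec:QSP_Trace_Est}, applied with a QSP sequence of query depth at most $2(k-1)$. Propagating the rescaling factor through the error analysis, and using the fact that $\|Q\|_{[-1,1]} = O(\|P_{<k}\|_{[-1,1]})$ up to the $|a_0|$ absorbed into $P_{<k}$ itself, yields a measurement cost scaling as $O(\|P_{<k}\|_{[-1,1]}^2/\epsilon^2)$ to pin down $w_{<k}$ to error $\epsilon/2$.

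Summing the two error allocations and measurement costs gives the claimed $O((\|P_{<k}\|_{[-1,1]}^2 + \mathcal{K}(P_{\geq k})^4)/\epsilon^2)$ overhead, while the overall query depth is $\max\{2(k-1),\,2\lceil(d-k)/(2k)\rceil\}$. The one genuinely nontrivial conceptual step is recognizing that the non-negativity hypothesis on $P_{\geq k}$ is exactly the compatibility condition that opens the door to Lemma~\ref{thm:Real_NonNeg_Poly_Char}; everything else is careful bookkeeping through two independent rescalings, which I expect to be routine once the decomposition is in place.
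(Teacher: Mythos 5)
Your proposal follows essentially the same route as the paper: split $w$ into the constituent traces of Eq.~\eqref{eq:Constituent_traces}, estimate $w_{<k}$ with a standard-QSP Hadamard test at query depth $2(k-1)$, estimate $w_{\geq k}$ by feeding $R = P_{\geq k}$ (whose non-negativity indeed forces even degree) into Lemma~\ref{thm:Real_NonNeg_Poly_Char} and Theorem~\ref{thm:parallel_qsp_poly_char}, and then add the error allocations and measurement costs. Your extra step of peeling off $a_0$ and writing $P_{<k}(x) = a_0 + xQ(x)$ so that the Hadamard test with input $\rho$ avoids any dimension factor is a detail the paper leaves implicit; the only caveat is that $\|Q\|_{[-1,1]} = O(\|P_{<k}\|_{[-1,1]})$ does not hold in general (a Bernstein-extremal example such as $\sin((k-1)\arcsin x)$ picks up a factor of order $k$), so your accounting strictly gives $O(k^2\|P_{<k}\|_{[-1,1]}^2/\epsilon^2)$ for the low-degree piece---a minor overhead that the paper's own terse treatment of this step likewise does not address.
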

\begin{proof}
    Decompose $w=w_{<k}+w_{\geq k}$ as per Eq.~\eqref{eq:Constituent_traces}. We then want to estimate $w_{<k}$ and $w_{\geq k}$ each to error $\leq \epsilon/2$, such that their sum estimates $w$ with error at most $\epsilon$. 

    First, one can estimate $w_{<k}$ to error $\epsilon/2$ with standard QSP, using for instance a Hadamard test. Because $P_{< k} (x)$ is a real polynomial of degree $k-1$ and indefinite parity, this can be achieved with query depth $2(k-1)$, and a number of measurements $O \left( \| P_{< k}  \|_{[-1,1]}^2 / \epsilon^2  \right) $. 

    Next, if $P_{\geq k} (x)$ is non-negative over $x\in \mathbb{R}$, then the constituent trace $w_{\geq k}$ obeys the conditions of Theorem~\ref{thm:parallel_qsp_poly_char} and can be directly estimated with parallel QSP. Because $P_{\geq k} (x)$ is a polynomial of degree $d-k$, this can be achieved at a query depth $2 \lceil \tfrac{d-k}{2k} \rceil$, and a number of measurements $O \left( \mathcal{K}( P_{\geq k}  )^4 / \epsilon^2 \right) $. 
    
    In aggregate, the maximal query depth is $\max \{ 2(k-1), 2 \lceil \tfrac{d-k}{2k} \rceil \}$, and the total number of measurements is 
    \begin{equation}
    \begin{aligned}
        &O \left( \frac{ \big\| P_{< k}  \big\|_{[-1,1]}^2}{ \epsilon^2 } \right) + O \left( \frac{\mathcal{K}( P_{\geq k}  )^4}{\epsilon^2 }\right) . 
    \end{aligned}
    \end{equation} 
\end{proof}

Evidently, the measurement cost of Theorem~\ref{thm:parallel_qsp_prop_est_1} depends on the quantities $\| P_{< k}  \|_{[-1,1]}$ and $\mathcal{K}(P_{\geq k} )$, which necessarily depend on the polynomial $P(x)$ under consideration. According to the bounds of Eq.~\eqref{eq:magnitude_bounds}, we can guarantee that for any such polynomial, $\| P_{< k}  \|_{[-1,1]} \leq \text{poly}(d)$, implying that this term grows polynomially in $d$, even in the worst case. On the other hand, as we mentioned in Sec.~\ref{sssec:parallel_QSP_alg}, minimizing $\mathcal{K}(P_{\geq k} )$ requires determining a low-norm factorization of $P_{\geq k} (x)$, whose factorization constant is not prohibitively large.

For numerical intuition, we have developed code to implement parallel QSP according to Theorem~\ref{thm:parallel_qsp_prop_est_1}. Provided a number of threads $k$ and a polynomial $P(x)$, this code decomposes $P(x)$ into constituent polynomials $P_{< k} (x)$ and $P_{\geq k} (x)$, factors $P_{\geq k}(x)$ across $k$ threads, and then determines the corresponding QSP phases of the factor polynomials. Our code can be found on GitHub at Ref.~\cite{parallel_QSP}; see Appendix~\ref{sec:dicussion_code_implementation} for more details.


\subsection{Estimation by Parallel QSP and Decomposition into a Well-Behaved Basis}

For an arbitrary polynomial $P(x)$, the constituent polynomial $P_{\geq k} (x)$ is not necessarily non-negative, which renders Theorem~\ref{thm:parallel_qsp_prop_est_1} inapplicable. Likewise, while simply factorizing $P_{\geq k}(x)$ can be done efficiently, guaranteeing that this corresponds to a modest factorization constant is more difficult, and in general requires knowledge about the structure of $P_{\geq k} (x)$. If either of these criteria are unsatisfied, then as we show here, parallel QSP can still be applied by taking special pre-computation steps to reduce the problem back to Theorem~\ref{thm:parallel_qsp_prop_est_1}.

We achieve this by decomposing $P_{\geq k} (x)$ into a basis of polynomials that are each amenable to parallel QSP and known to admit a low-norm factorization. We additionally prove that the measurement cost incurred by this factorization scales as $O(d^4 2^{O(k)}/k^2)$, which crucially maintains polynomial scaling in $d$. This enables parallelization of a large class of property estimation problems, and furnishes the main result of this paper:
\begin{theorem}[Parallel QSP for Arbitrary Property Estimation: Definite Parity]\label{thm:parallel_qsp_prop_est_3}
Let $P(x)$ be a real polynomial of degree $d$ and definite parity, that is bounded as $\| P \|_{[-1,1]} \leq 1$. By invoking parallel QSP across $k$ threads, where $k$ has the same parity as $d$, we can estimate
\begin{equation}
    w = \tr(P(\rho))
\end{equation}
with a circuit of width $O(k)$ and query depth at most $\lfloor \frac{d-k}{2k} \rfloor + k-1 = O(d/k+k)$. The number of measurements required to resolve $w$ to additive error $\epsilon$ is 
\begin{equation}
\begin{aligned}
    &O\Bigg( \frac{\|P_{< k} \|_{[-1,1]}^2}{\epsilon^2} + \frac{\|P_{\geq k} \|_{[-1,1]}^2 d^4 \big( 1+\sqrt{2} \big)^{4k} }{k^2 \epsilon^2 } \Bigg) \\
    &= O\Bigg( \frac{\|P_{< k} \|_{[-1,1]}^2 + \|P_{\geq k} \|_{[-1,1]}^2 d^4 2^{O(k)}}{\epsilon^2} \Bigg).
\end{aligned}
\end{equation}
\end{theorem}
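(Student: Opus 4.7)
The plan is to split the target trace via the constituent-polynomial decomposition $P(x)=P_{<k}(x)+x^k P_{\geq k}(x)$, giving $w = w_{<k}+w_{\geq k}$ per Eq.~\eqref{eq:Constituent_traces}, and estimate each summand with a different flavor of QSP. The parity hypothesis that $d$ and $k$ share parity (matching the parity of $P$) forces $P_{<k}$ to have the parity of $P$ and forces $P_{\geq k}$ to be an \emph{even} polynomial of degree at most $d-k$, both facts I would exploit. The low-degree part $w_{<k}=\tr(P_{<k}(\rho))$ is estimated directly via a Hadamard test on a standard QSP sequence block-encoding $P_{<k}(\rho)/\|P_{<k}\|_{[-1,1]}$; because $P_{<k}$ has definite parity and degree at most $k-1$, this achieves additive error $\epsilon/2$ in $O(\|P_{<k}\|_{[-1,1]}^2/\epsilon^2)$ measurements at query depth $k-1$.

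The main work lies in handling $w_{\geq k}=\tr(\rho^k P_{\geq k}(\rho))$. Since $P_{\geq k}$ need not be non-negative on $\mathbb{R}$, Theorem~\ref{thm:parallel_qsp_prop_est_1} does not directly apply, so I would decompose $P_{\geq k}$ into a weighted combination of even, globally non-negative polynomials each admitting a low-norm parallel-QSP factorization. The natural choice is a \emph{squared-Chebyshev basis}: expand $P_{\geq k}(x)=\sum_{n\text{ even}}\alpha_n T_n(x)$ (with $|\alpha_n|\leq 2\|P_{\geq k}\|_{[-1,1]}$ by a standard Chebyshev coefficient bound), then apply the identity $T_{2m}(x)=2T_m(x)^2-1$ to rewrite
\begin{equation*}
  P_{\geq k}(x)=\sum_{m=0}^{\lfloor (d-k)/2\rfloor} c_m\,T_m(x)^2 \;+\; c_*,
\end{equation*}
where each $T_m(x)^2\geq 0$ on $\mathbb{R}$ has even degree $2m\leq d-k$ and $|c_m|=O(\|P_{\geq k}\|_{[-1,1]})$. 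Thus $w_{\geq k}$ decomposes into $O(d)$ traces $\tr(\rho^k T_m(\rho)^2)$ plus a pure generalized-swap-test term $c_*\tr(\rho^k)$; the latter is handled at $O(c_*^2/\epsilon^2)$ cost.

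For each term $\tr(\rho^k T_m(\rho)^2)$, I would apply Theorem~\ref{thm:parallel_qsp_poly_char} using a factorization inherited from the Chebyshev composition identity $T_{km'}(x)=T_k(T_{m'}(x))$: after ceiling adjustments (cf.\ Lemma~\ref{thm:Real_NonNeg_Poly_Char}), write $T_m(x) = 2^{k-1}\prod_{i=1}^k\bigl(T_{m'}(x)-y_i\bigr)$ with $m'\approx m/k$ and $y_i=\cos((2i-1)\pi/(2k))$ the roots of $T_k$. Each factor is a definite-parity polynomial of degree at most $\lceil (d-k)/(2k)\rceil$ with $\|T_{m'}-y_i\|_{[-1,1]}\leq 1+|y_i|$. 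After redistributing the leading coefficient $2^{k-1}$ symmetrically and using Chebyshev product identities (e.g., $\prod_i(1\pm y_i)=T_k(\pm 1)/2^{k-1}$) to evaluate the resulting product of $\|\cdot\|_{[-1,1]}$ bounds, one arrives at a per-basis-element estimate $\mathcal{K}(T_m^2)^4 = O\bigl((1+\sqrt{2})^{4k}\bigr)$, yielding query depth at most $\lceil (d-k)/(2k)\rceil + k-1$.

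Finally, I would allocate the remaining $\epsilon/2$ error budget across the basis terms by setting each term's additive error to $\delta=\Omega(\epsilon/(d\,\|P_{\geq k}\|_{[-1,1]}))$, so that the triangle inequality controls the total error on $w_{\geq k}$. Summing the per-term measurement cost $O(\mathcal{K}^4/\delta^2)$ over the $O(d)$ basis elements and packaging the $k$-dependent structure inside the factorization constants gives the promised $O(\|P_{\geq k}\|_{[-1,1]}^2\,d^4\,(1+\sqrt{2})^{4k}/(k^2\epsilon^2))$ measurement cost for $w_{\geq k}$, which combined with the cost of $w_{<k}$ yields the theorem. The main obstacle is the tight quantitative bookkeeping in this last step: (i) extracting the sharp $(1+\sqrt{2})^{4k}$ rather than a naive $2^{O(k)}$ bound from the Chebyshev-composition factorization via explicit evaluation of $\prod_i(1+|y_i|)$; (ii) carefully propagating the $d$-dependence from the $O(d)$ basis size and the $O(d/k)$ query-depth savings into the stated $d^4/k^2$ polynomial factor; and (iii) ensuring that the ceiling adjustments in the composition factorization do not inflate the factor norms beyond the claimed bound.
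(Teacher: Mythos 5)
Your high-level strategy mirrors the paper's (split $P=P_{<k}+x^kP_{\geq k}$, treat $w_{<k}$ with standard QSP, re-express $P_{\geq k}$ over squared Chebyshev polynomials, and budget error over $O(d)$ non-negative terms), but the step that carries the real weight of the proof has a genuine gap. Your factorization $T_m(x)=2^{k-1}\prod_{i=1}^k\bigl(T_{m'}(x)-y_i\bigr)$ is an identity only when $m=km'$ exactly; for the generic terms $T_m(x)^2$ with $m\leq (d-k)/2$ not divisible by $k$, the right-hand side equals $T_{km'}(x)$, not $T_m(x)$, and the promised ``ceiling adjustments (cf.\ Lemma~\ref{thm:Real_NonNeg_Poly_Char})'' do not rescue it: grouping the roots of $T_m$ arbitrarily into $k$ factors gives no control on the factor norms, and a naive grouping of Chebyshev roots yields a factorization constant scaling like $d^{O(k)}$ rather than $2^{O(k)}$, which destroys the claimed measurement cost. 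The paper's proof exists precisely to handle these off-multiple degrees: it first uses the product relation $T_{a\cdot 2k+2b}=2T_{a\cdot 2k}T_{2b}-T_{a\cdot 2k-2b}$ recursively to rewrite $P_{\geq k}$ as $\sum_{a,b}\tilde{c}_{a\cdot 2k+2b}\,T_{a\cdot 2k}(x)T_{2b}(x)$ with $\|\tilde{c}\|_1=O(\|P_{\geq k}\|_{[-1,1]}d^2/k)$, and only then applies the composition $T_{a\cdot2k}(x)T_{2b}(x)=T_{2k}(T_a(x))T_2(T_b(x))$ expanded in powers of $T_a(x),T_b(x)$; the resulting terms $T_a(x)^{2j}T_b(x)^{2l}$ have factorization constant exactly $1$, and importance sampling over the coefficients yields the $d^4/k^2$ factor. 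That coefficient bookkeeping is the missing idea in your proposal.

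Even in the benign case $m=km'$, your quantitative claim (i) fails: the factorization constant of your root-based factorization is $\mathcal{K}=2^{k-1}\prod_{i=1}^k(1+|y_i|)$, and because the roots of $T_k$ come in $\pm$ pairs, $\prod_i(1+|y_i|)=\prod_{y_i>0}(1+y_i)^2$, which is \emph{not} computable from $T_k(\pm1)$ (that gives $\prod_i(1\pm y_i)=\prod_{y_i>0}(1-y_i^2)$); numerically and asymptotically $\mathcal{K}\sim e^{4Gk/\pi}\approx(3.21)^k>(1+\sqrt{2})^k$, so the best you would obtain is the weaker $2^{O(k)}$ form of the bound, not the stated $(1+\sqrt{2})^{4k}$, which in the paper arises instead from $\sum_j|t_{2k,2j}|=O\bigl((1+\sqrt{2})^{2k}\bigr)$ squared under importance sampling. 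Two further slips: the factors $T_{m'}(x)-y_i$ have \emph{indefinite} parity whenever $m'$ is odd and $y_i\neq0$, so they require generalized QSP at doubled query depth $\approx d/k$, exceeding the stated $\lfloor\tfrac{d-k}{2k}\rfloor+k-1$ once $d\gg k^2$ (the paper's factors are products of Chebyshev polynomials, hence definite parity and unit norm); and the ``$+k-1$'' in your depth claim is never accounted for by your factorization, whereas in the paper it comes from the extra factor $T_b(x)$ with $b\leq k-1$.
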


For brevity, we defer the full proof of this theorem to Appendix~\ref{app:ParallelQSPPropertyEstimation}. As an overview, the proof works by first decomposing $P_{\geq k} (x)$ into the basis of Chebyshev polynomials. By then using properties of the Chebyshev polynomials (specifically, their composition and product relations), we can re-express $P_{\geq k} (x)$ as a linear combination of products of squared Chebyshev polynomials. These products are each amenable to parallel QSP and clearly exhibit a low-norm factorization, contributing a factor of $O(d^4 2^{O(k)}/k^2 )$ to the measurement cost. In addition, as the Chebyshev polynomials are real and of definite parity, they can be implemented directly with QSP and correspond to a query depth $\approx d/2k$, in contrast to the the query depth $\approx d/k$ for general polynomials according Theorem~\ref{thm:parallel_qsp_prop_est_1}.

We can also extend Theorem~\ref{thm:parallel_qsp_prop_est_3} to polynomials of indefinite parity, which yields an analogous result: 
\begin{theorem}[Parallel QSP for Arbitrary Property Estimation: Indefinite Parity] \label{thm:parallel_qsp_prop_est_2}
    Let $P(x)$ be a real polynomial of degree $d$, that is bounded as $\| P \|_{[-1,1]} \leq 1$. By invoking parallel QSP across $k$ threads, we can estimate
    \begin{equation}
        w = \tr(P(\rho))
    \end{equation}
    with a circuit of width $O(k)$ and query depth at most $\big\lfloor \frac{d-k}{2(k-1)} \big\rfloor + k-2 \approx d/2k + k = O(d/k+k)$. The number of measurements required to resolve $w$ to additive error $\epsilon$ is 
    \begin{equation}
    \begin{aligned}
        &O\Bigg( \frac{\|P_{< k} \|_{[-1,1]}^2}{\epsilon^2} + \frac{\|P_{\geq k} \|_{[-1,1]}^2 d^4 \big( 1+\sqrt{2} \big)^{4k} }{k^2 \epsilon^2 } \Bigg) \\
        & = O\Bigg( \frac{\|P_{< k} \|_{[-1,1]}^2 + \|P_{\geq k} \|_{[-1,1]}^2 d^4 2^{O(k)}}{\epsilon^2} \Bigg).
    \end{aligned}
    \end{equation} 
\end{theorem}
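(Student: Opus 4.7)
The plan is to reduce Theorem~\ref{thm:parallel_qsp_prop_est_2} to Theorem~\ref{thm:parallel_qsp_prop_est_3} by splitting $P(x)$ into its even and odd parity components. Define $P_e(x) = \tfrac{1}{2}(P(x)+P(-x))$ and $P_o(x) = \tfrac{1}{2}(P(x)-P(-x))$, so that $P = P_e + P_o$, with $P_e$ even, $P_o$ odd, $\deg(P_e), \deg(P_o) \leq d$, and both bounded by $\|P_e\|_{[-1,1]}, \|P_o\|_{[-1,1]} \leq 1$ by the triangle inequality (using the symmetry of $[-1,1]$). By linearity of the trace,
\begin{equation}
    w = \tr(P(\rho)) = \tr(P_e(\rho)) + \tr(P_o(\rho)),
\end{equation}
so it suffices to estimate each summand separately with parallel QSP, each to additive error $\epsilon/2$, and sum the results.

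The second step is to apply Theorem~\ref{thm:parallel_qsp_prop_est_3} to each of $P_e$ and $P_o$. The key difficulty is that Theorem~\ref{thm:parallel_qsp_prop_est_3} requires the thread count to match the parity of the polynomial's degree, but $\deg(P_e)$ is even while $\deg(P_o)$ is odd, so for any fixed $k$ only one of the two has matching parity. I would resolve this by applying Theorem~\ref{thm:parallel_qsp_prop_est_3} with $k$ threads to the piece whose degree parity matches $k$, and with $k-1$ threads to the other (whose parity then matches $k-1$). The overall query depth is the maximum of the two resulting bounds; for large $d$ the $(k-1)$-thread application dominates, giving $\lfloor (d-k)/(2(k-1)) \rfloor + k - 2$ as claimed, with any off-by-one absorbed into the decomposition thresholds.

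For the measurement cost, I would bound the constituent-polynomial norms of $P_e$ and $P_o$ in terms of those of $P$. Direct computation shows $(P_e)_{<k}(x) = \tfrac{1}{2}(P_{<k}(x) + P_{<k}(-x))$, and analogous $\pm$ symmetrizations for $(P_o)_{<k}$ and for the $\geq k$ constituents (the sign depending on the parity of $k$). Since $[-1,1]$ is symmetric about zero, these symmetrizations obey
\begin{equation}
    \|(P_e)_{<k}\|_{[-1,1]},\ \|(P_o)_{<k}\|_{[-1,1]} \leq \|P_{<k}\|_{[-1,1]},
\end{equation}
and likewise for the $\geq k$ pieces. Inserting these into the measurement-cost expression of Theorem~\ref{thm:parallel_qsp_prop_est_3} for each of $P_e$ and $P_o$ and summing (which costs at most a constant factor of $2$) yields the bound in the theorem statement, with the factor $(1+\sqrt{2})^{4k}/k^2$ inherited directly from Theorem~\ref{thm:parallel_qsp_prop_est_3}.

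The main obstacle will be the parity bookkeeping required to match the query-depth bound exactly, particularly because the applications of Theorem~\ref{thm:parallel_qsp_prop_est_3} to $P_e$ and $P_o$ use slightly different decomposition thresholds ($k$ vs.~$k-1$), and one must verify that the constituent-norm bounds still collapse cleanly onto $\|P_{<k}\|_{[-1,1]}$ and $\|P_{\geq k}\|_{[-1,1]}$. An alternative route, avoiding the even/odd split entirely, would be to directly reprove Theorem~\ref{thm:parallel_qsp_prop_est_3} with a Chebyshev decomposition that uses both even and odd Chebyshev polynomials simultaneously; this would generate the $k-1$ factor in the denominator more naturally but forfeit the convenience of invoking Theorem~\ref{thm:parallel_qsp_prop_est_3} as a black box.
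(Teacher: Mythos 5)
Your proposal follows essentially the same route as the paper's proof: split $P$ into even and odd parts bounded by the triangle inequality, estimate each trace to error $\epsilon/2$ via Theorem~\ref{thm:parallel_qsp_prop_est_3}, and handle the parity constraint by running one piece on $k$ threads and the other on $k-1$. Your extra symmetrization argument bounding the constituent norms of $P_e$ and $P_o$ by $\|P_{<k}\|_{[-1,1]}$ and $\|P_{\geq k}\|_{[-1,1]}$ is a valid (and slightly more careful) justification of a step the paper states without detail.
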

    
\begin{proof}
    Decompose $P(x)$ into its even and odd components: $ P(x) = P_{\text{even}}(x) + P_{\text{odd}}(x)$. These are both bounded as $\| P_{\text{even}} \|_{[-1,1]}, \  \| P_{\text{odd}} \|_{[-1,1]} \leq \| P \|_{[-1,1]} \leq 1$ by the triangle inequality. Therefore, we can apply Theorem~\ref{thm:parallel_qsp_prop_est_3} to estimate the traces $\tr(P_{\text{even}}(\rho))$ and $\tr(P_{\text{odd}}(\rho))$ each to additive error $\epsilon/2$, such that their sum approximates $\tr(P(\rho))$ to error $\epsilon$. 
    
    However, Theorem~\ref{thm:parallel_qsp_prop_est_3} requires that $k$ have the same parity as the polynomial whose trace is being estimated. This is not possible for both $P_{\text{even}}(x)$ and $P_{\text{odd}}(x)$ given only a single value of $k$. Thus, if $k$ is even (odd), then estimate $\tr(P_{\text{even}}(\rho))$  over $k$ ($k-1$) threads and $\tr(P_{\text{odd}}(\rho))$ over $k-1$ ($k$) threads. This amounts to replacing $k$ with $k-1$ in the requisite query depth and number of measurements. As per Theorem~\ref{thm:parallel_qsp_prop_est_3}, this corresponds to a query depth at most $\big\lfloor \frac{d-k+1}{2(k-1)} \big\rfloor + k-2 = O(d/k+k)$, and a total number of measurements 
    \begin{equation}
    \begin{aligned}
        &O\Bigg( \frac{\|P_{< k} \|_{[-1,1]}^2 + \|P_{\geq k} \|_{[-1,1]}^2 d^4 2^{O(k)}}{\epsilon^2} \Bigg).
    \end{aligned}
    \end{equation} 
\end{proof}
    
Ultimately, Theorem~\ref{thm:parallel_qsp_prop_est_2} is applicable to any real, bounded polynomial $P(x)$, with a dependence only on the norms of the constituent polynomials. As a result, this theorem captures most properties of interest and renders parallel QSP applicable to a broad class of problems. In the following section, we will see that in problems of interest, this norm scales as a polynomial in $d$ or even remains constant. For completeness, we can assess the worst-case overhead (corresponding to the most ill-behaved polynomial) by inserting the general bounds Eq.~\eqref{eq:magnitude_bounds}, which equate to a worst-case overhead $\| P_{\geq k} \|_{[-1,1]}^2 = O\big(\frac{d^{2k+1}}{k \cdot (k!)^2} \big) $. While this expression grows as a degree-$2k+1$ polynomial in $d$, it also decays quadratic-factorially in $k$, ultimately dominating the otherwise $2^{O(k)}$ factor in the measurement overhead. We now turn to applications of parallel QSP in property estimation.

\section{Applications} \label{sec:applications}
Here we use parallel QSP to develop parallelized algorithms for various problems in property estimation. We highlight the estimation of R\'enyi entropy, general polynomials, partition functions, and the von Neumman entropy.

\subsection{R\'enyi Entropy: Integer Order}

One of the most straightforward demonstrations of parallel QSP is the computation of the R\'enyi entropy. For a state $\rho$, the R\'enyi entropy of order $\alpha$ is defined as $S_\alpha(\rho) = \frac{1}{1-\alpha}\log\big(\tr(\rho^\alpha) \big)$ for $\alpha>0$, $\alpha \neq 1$. The R\'enyi entropy provides a probe of entanglement in both quantum and classical simulation~\cite{Brydges_2019, Elben_2018, Linke_2018, Hastings_2010, Hibat_Allah_2020}, and can be used to approximate the spectrum of $\rho$ through entanglement spectroscopy~\cite{Johri_2017}. 

Let us first consider estimating $S_\alpha(\rho)$ for integer orders $\alpha \geq 2$, deferring non-integer orders to Sec.~\ref{sec:Renyi_Nonint}. In this case, prior work has introduced QSP-based algorithms that implement $\rho^\alpha$ as a QSP polynomial, enabling the estimation of $S_\alpha(\rho)$ with query depth $\alpha$ and width $O(1)$~\cite{Subramanian_2021, wang2022new, Wang_2023}. On the other hand, Refs.~\cite{Hastings_2010, Johri_2017, Quek_2024} invoke the generalized swap test across $\alpha$ systems to evaluate $S_\alpha(\rho)$, corresponding to a query depth $O(1)$ and width $O(\alpha)$. Here, we will illustrate how parallel QSP interpolates between these two regimes, achieving query depth $O(\alpha/k)$ and width $O(k)$.

To demonstrate this, note that the polynomial of interest here is $P(\rho) = \rho^\alpha $. This is a monomial that trivially factorizes into a product of smaller monomials. Following this logic, we arrive at the following theorem:
\begin{theorem}[Parallel QSP for Estimating R\'enyi Entropy: Integer Order]\label{thm:PQSP_Renyi_Int}
    For positive integers $\alpha \geq 2$, one can invoke parallel QSP across $k$ to estimate the R\'enyi entropy $S_\alpha(\rho) = \frac{1}{1-\alpha}\log\big(\tr(\rho^\alpha) \big)$ with a circuit of query depth $\lfloor \frac{1}{k}\lfloor \frac{\alpha-k}{2}\rfloor \rfloor+1 = O(\alpha/k)$ and width $O(k)$. The number of measurements required to achieve additive error $\epsilon$ is $O\Big(\frac{1}{s_\alpha^2 \alpha^2 \epsilon^2} \Big)$ where $s_\alpha = \tr(\rho^\alpha)$. 
\end{theorem}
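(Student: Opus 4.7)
The plan is to recognize $s_\alpha := \tr(\rho^\alpha)$ as a trivial instance of the parallel QSP trace $\tr(\rho^k R(\rho))$ handled by Theorem~\ref{thm:parallel_qsp_poly_char}, with $R(x) = x^{\alpha - k}$, and then to propagate the resulting additive-error estimate through the logarithm in the definition of $S_\alpha$.

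First, I would factorize $R(x) = x^{\alpha - k}$ into monomial factors $P_j(x) = x^{m_j}$ with exponents distributed as evenly as possible across the $k$ threads, $m_j \in \{\lfloor \tfrac{\alpha-k}{2k}\rfloor, \lceil \tfrac{\alpha-k}{2k}\rceil\}$ subject to $\sum_{j=1}^k 2 m_j = \alpha - k$. This is a valid identity whenever $\alpha - k$ is even; when $\alpha - k$ is odd I would first round down by taking $\sum_j 2m_j = 2\lfloor (\alpha-k)/2\rfloor$ and absorb the remaining single factor of $\rho$ into the swap-test prefactor (so that one thread's initial state carries one extra block-encoding query), which is exactly what the nested floors in the stated depth expression encode. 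Either way, each $P_j$ is a monomial of definite parity, so it is block-encoded in standard QSP with query depth $m_j$, giving maximum depth $\lfloor \tfrac{1}{k}\lfloor \tfrac{\alpha-k}{2}\rfloor\rfloor + 1 = O(\alpha/k)$ across the $k$ arms and total circuit width $O(k)$.

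Second, I would compute the factorization constant. Since $|x^{m_j}| \leq 1$ for $x \in [-1,1]$, every factor norm satisfies $\|P_j\|_{[-1,1]} = 1$, and hence
\begin{equation}
    \mathcal{K}(R) \;=\; \prod_{j=1}^k \|P_j\|_{[-1,1]} \;=\; 1.
\end{equation}
Theorem~\ref{thm:parallel_qsp_poly_char} then guarantees that the parallel QSP circuit estimates $s_\alpha$ to additive error $\epsilon'$ with $O(1/\epsilon'^2)$ shots, incurring no factorization-constant overhead because the monomial factorization is perfectly behaved.

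Third, I would propagate error through the logarithm. Since $S_\alpha = \tfrac{1}{1-\alpha}\log s_\alpha$, the sensitivity $|\partial_{s_\alpha} S_\alpha| = \tfrac{1}{|1-\alpha|\,s_\alpha}$ shows that additive error $\epsilon$ in $S_\alpha$ is ensured by additive error $\epsilon' = O((\alpha-1)\,s_\alpha\,\epsilon)$ in $s_\alpha$. Substituting into $O(1/\epsilon'^2)$ yields the claimed measurement cost $O\!\big(1/(s_\alpha^2 \alpha^2 \epsilon^2)\big)$. The only real subtlety, and the main obstacle, is the parity handling: a pure monomial $x^{\alpha - k}$ with odd exponent is not the square of any polynomial, so the clean $k$-fold monomial factorization requires $\alpha - k$ even; accommodating the opposite parity without blowing up the depth or measurement cost is exactly what forces the floor/${+}1$ structure in the depth bound.
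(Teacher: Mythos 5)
Your proposal follows essentially the same route as the paper's proof: factor the monomial $\rho^{\lfloor(\alpha-k)/2\rfloor}$ as evenly as possible across the $k$ threads, observe that each factor is a bounded, definite-parity monomial so the factorization constant is $1$ and the per-thread query depth is $\lfloor \frac{1}{k}\lfloor\frac{\alpha-k}{2}\rfloor\rfloor+1$, and then convert additive error in $s_\alpha$ to additive error in $S_\alpha$ via the same $\epsilon' = s_\alpha\epsilon(\alpha-1)$ rescaling. The one small imprecision is the odd-parity leftover: the paper handles the residual factor $\rho^{(\alpha-k)\bmod 2}$ by inserting an extra copy of the state $\rho$ as an additional thread in the generalized swap test (no block-encoding query and no effect on query depth), not by giving one thread an extra block-encoding query, and the $+1$ in the depth bound arises from the uneven division of $\lfloor(\alpha-k)/2\rfloor$ over $k$ threads rather than from the parity fix.
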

\begin{proof}
    Decompose $P(\rho)$ according to Eq.~\eqref{eq:Poly_decomp} as
    \begin{equation}\label{eq:Renyi_decomp}
        P(\rho) = \rho^\alpha = \rho^k \rho^{(\alpha - k) \bmod 2} \big| \rho^{\lfloor \frac{\alpha-k}{2} \rfloor} \big|^2, 
    \end{equation} 
    where we assume $\alpha > k$ (otherwise there is no need to parallelize). This nearly fits into the family of polynomials amenable to parallel QSP as per Theorem~\ref{thm:Parallel_QSP}, the only difference being the additional factor $\rho^{(\alpha - k) \bmod 2}$. This factor is only relevant if $(\alpha - k) \bmod 2 = 1$, in which case it can be easily incorporated by including an additional thread in the initial state $\rho$ into the generalize swap test stage, which increases the threads to $k+1$ and still equates to a width $O(k)$.

    This decomposition corresponds to constituent polynomials $P_{< k} (\rho) = 0$ and $P_{\geq k} (\rho) = \rho^{\lfloor \frac{\alpha-k}{2}\rfloor}$. While $P_{< k}(\rho)$ is non-existent, $P_{\geq k} (\rho)$ is bounded as $\| P_{\geq k} \|_{[-1,1]} \leq 1$ and factorizes into a product of monomials:
    \begin{equation}\label{eq:Renyi_factorization}
    \begin{aligned}
        & \rho^{\lfloor \frac{\alpha-k}{2} \rfloor} = \\ 
        & \quad \prod_{j=1}^{\lfloor \frac{\alpha-k}{2}\rfloor \bmod k} \rho^{\lfloor \frac{1}{k}\lfloor \frac{\alpha-k}{2}\rfloor \rfloor +1 } \times \prod_{j'=\lfloor \frac{\alpha-k}{2}\rfloor \bmod k + 1}^{k} \rho^{\lfloor \frac{1}{k}\lfloor \frac{\alpha-k}{2}\rfloor \rfloor} . 
    \end{aligned}
    \end{equation} 
    These factor polynomials are all real and of definite parity, and thus can be implemented directly via QSP with a query depth at most $\lfloor \frac{1}{k}\lfloor \frac{\alpha-k}{2}\rfloor \rfloor+1 = O(\alpha/k)$. Because these monomials are bounded in magnitude by $1$, the corresponding factorization constant is simply $\mathcal{K}(P_{\geq k}) = 1$. Plugging these values into Theorem~\ref{thm:parallel_qsp_prop_est_1}, the measurement cost of estimating $s_\alpha = \tr(\rho^\alpha)$ to additive error $\epsilon'$ is $O(1/\epsilon'^2)$. 

    However, our quantity of interest is $S_\alpha (\rho ) = \frac{1}{1-\alpha} \log( s_\alpha )$. Ref.~\cite{Subramanian_2021} shows that an estimate $\tilde{s}_\alpha$ of $s_\alpha$ to within multiplicative error $\varepsilon$ (i.e., $|\tilde{s}_\alpha / s_\alpha - 1| \leq \varepsilon$) provides an approximation to $S_\alpha (\rho )$ with additive error $\varepsilon/(\alpha-1)$:
    \begin{equation}\label{eq:Renyi_err_bound}
        \Big|\frac{1}{1-\alpha} \log( \tilde{s}_\alpha ) - S_\alpha(\rho) \Big| \leq \frac{\varepsilon}{\alpha - 1}.
    \end{equation}
    Therefore, to achieve additive error $\epsilon$ in the estimate of the R\'enyi entropy, select $\epsilon' =  s_\alpha \varepsilon = s_\alpha \epsilon (\alpha-1)$, which translates to a total number of measurements 
    \begin{equation}
        O\Big(\frac{1}{s_\alpha^2 \alpha^2 \epsilon^2 } \Big). 
    \end{equation}
\end{proof}

As a hybrid of QSP and the generalized swap test, parallel QSP combines the strengths of both algorithms in estimating the R\'enyi entropy. Its query depth $O(\alpha/k)$ and width $O(k)$ provide a smooth interpolation between the circuit requirements of these approaches, allowing for full use of one's available quantum resources. 

\subsection{General Polynomials in the Monomial Basis}

Above, we used parallel QSP to parallelize the computation of the trace of a monomial $\tr(\rho^\alpha)$. This naturally extends to more general polynomials $P(\rho) = \sum_{n=0}^d c_j \rho^n$ by parallelizing each monomial. Following this line of thought, we can prove the following:
\begin{theorem}[Parallel QSP for Estimation of General Polynomial Traces]\label{thm:PQSP_GeneralPoly}
    For a degree-$d$ polynomial $P(x) = \sum_{n=0}^d c_n x^n $, one can invoke parallel QSP across $k$ threads to estimate $\tr(P(\rho))$ with a circuit of query depth $\lfloor \frac{1}{k}\lfloor \frac{d-k}{2}\rfloor \rfloor+1 = O(d/k)$ and width $O(k)$. The number of measurements required to attain additive error $\epsilon$ is 
    \begin{equation}
        O\left(\frac{\| c \|_1^2}{\epsilon^2} \right),
    \end{equation}
    where $\| c \|_1 = \sum_{n=0}^d |c_n|$ is the 1-norm of the polynomial coefficients. 
\end{theorem}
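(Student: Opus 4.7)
The plan is to combine linearity of trace with the monomial subroutine of Theorem~\ref{thm:PQSP_Renyi_Int} and a single-shot importance-sampling estimator, so that the measurement count scales with $\|c\|_1^2$ rather than the $(d+1)\|c\|_1^2$ that a flat per-term allocation of shots would incur.

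First I would expand
\[
w \;=\; c_0 \tr(I) + c_1 \tr(\rho) + \sum_{n=2}^{d} c_n s_n,\qquad s_n := \tr(\rho^n).
\]
The first two terms are classical constants ($\tr(I)$ is the known Hilbert-space dimension and $\tr(\rho)=1$), so only the sum over $n\ge 2$ requires a quantum circuit. For each such $n$, the factorization in Eqs.~\eqref{eq:Renyi_decomp}--\eqref{eq:Renyi_factorization} provides a parallel QSP circuit of width $O(k)$ and query depth at most $\lfloor\tfrac{1}{k}\lfloor\tfrac{n-k}{2}\rfloor\rfloor+1$; monotonicity in $n$ gives the uniform bound $\lfloor\tfrac{1}{k}\lfloor\tfrac{d-k}{2}\rfloor\rfloor+1$ stated in the theorem (terms with $n<k$ are handled by standard QSP at strictly smaller depth). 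Using the Hadamard-test formula \eqref{eq:z_expression_Hadamard_Test}, a single shot of this circuit yields a $\{-1,0,1\}$-valued random variable $\hat{Z}_n := 2 A_n B_n - A_n$, where $A_n$ indicates QSP success and $B_n$ is the ancilla outcome of the generalized swap test conditional on it; one reads off $\mathbb{E}[\hat{Z}_n] = s_n$ and $|\hat{Z}_n|\le 1$.

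Next I would assemble the global estimator by importance sampling over the monomial index. Put $C := \sum_{n=2}^{d}|c_n|\le \|c\|_1$ and, at each round, draw $n\in\{2,\dots,d\}$ with probability $p_n := |c_n|/C$, run one shot of the associated parallel QSP circuit to obtain $\hat{Z}_n$, and return $Y := C\,\mathrm{sign}(c_n)\,\hat{Z}_n$. Using $p_n C = |c_n|$ and $|c_n|\,\mathrm{sign}(c_n) = c_n$ yields $\mathbb{E}[Y] = \sum_{n\ge 2} c_n s_n$ and $|Y|\le C\le\|c\|_1$. Hoeffding's inequality then implies that the sample mean of $M = O(\|c\|_1^2/\epsilon^2)$ i.i.d.\ copies of $Y$ approximates this sum to additive error $\epsilon$ with constant probability (boosted to high probability by median-of-means if desired), and adding back the classical constant $c_0\tr(I)+c_1$ produces the desired estimate of $w$ at the stated query depth and width.

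The main obstacle, though a minor one, is verifying that a single run of the full parallel QSP circuit --- QSP post-selection combined with the generalized swap test of Sec.~\ref{sec:Generalized_Swap_Test} --- genuinely realizes a bounded, unbiased estimator of $s_n$. Once $\hat{Z}_n = 2 A_n B_n - A_n$ is identified from Eq.~\eqref{eq:z_expression_Hadamard_Test}, boundedness follows immediately from $A_n,B_n\in\{0,1\}$ and unbiasedness from linearity of expectation; all remaining steps are textbook linearity of trace plus a standard importance-sampling concentration bound.
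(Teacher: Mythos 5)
Your proposal is correct and follows essentially the same route as the paper's proof: expand $\tr(P(\rho))=\sum_n c_n\tr(\rho^n)$, estimate each monomial trace with the parallel QSP construction of Theorem~\ref{thm:PQSP_Renyi_Int} (depth $\lfloor \tfrac{1}{k}\lfloor \tfrac{d-k}{2}\rfloor\rfloor+1$, width $O(k)$), and combine the terms by importance sampling with weights $|c_n|/\|c\|_1$, yielding the $O(\|c\|_1^2/\epsilon^2)$ measurement cost. Your write-up is if anything more explicit than the paper's (single-shot unbiased estimator $\hat{Z}_n$, Hoeffding, classical handling of the $n=0,1$ terms); the only small caveat is that terms with $2\le n<k$ are most cleanly handled by the generalized swap test alone at query depth $0$, since standard QSP at depth $n$ could exceed the stated bound when $d$ is close to $k$.
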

\begin{proof}
    Our aim is to parallelize the computation of $\tr(P(\rho)) = \sum_{n=0}^d c_n \tr(\rho^n) $ by using the method of Theorem~\ref{thm:PQSP_Renyi_Int} for parallelizing the monomial trace $\tr(\rho^n)$. While one could achieve this by estimating each trace $\tr(\rho^n)$ in sequence (from $n=0$ up to $n=d$), a more efficient approach is furnished by importance sampling. That is, sample an integer $n$ from the distribution $p(n) = |c_n|/\| c \|_1 $ where $\| c\|_1 = \sum_{n=0}^d |c_n| $, and evaluate the estimator of $\tr(\rho^n)$ (i.e. the measurement result of the parallel QSP circuit). As we show in Appendix~\ref{app:Importance_Sampling}, this procedure provides an estimator for ${\tr(P(\rho))}/{\| c \|_1} $. Then, estimating $\tr(P(\rho))$ to additive error $\epsilon$ is equivalent to estimating ${\tr(P(\rho))}/{\| c \|_1} $ to error $\epsilon/\| c \|_1$, which requires a measurement cost $O(\| c \|_1^2/\epsilon^2)$. 

    In measuring the estimator of each monomial trace $\tr(\rho^n)$, employ parallel QSP according to Theorem~\ref{thm:PQSP_Renyi_Int}. The corresponding query depth is at most $\lfloor \frac{1}{k}\lfloor \frac{d-k}{2}\rfloor \rfloor+1 = O(d/k)$, and the circuit width is $O(k)$. 
\end{proof}

As the measurement cost grows with $\|c\|_1$, this approach is best suited for polynomials where this 1-norm is not prohibitively large. For many polynomials of interest, $\|c\|_1$ is $\text{poly}(d)$ or even just $O(1)$. For instance, a truncation of $e^{-\beta x}$ achieves $\|c\|_1 = O(e^\beta)$, independent of $d$. However, $\|c\|_1$ can grow exponentially with $d$ for certain polynomials, in which case this approach is prohibitively costly. For example, the order $d$ Chebyshev polynomial $T_d(x)$ has 1-norm $\| c\|_1 = 2^{O(d)}$. In this case, it is better to invoke Theorem~\ref{thm:parallel_qsp_prop_est_2} to estimate $\tr(P(\rho))$, which applies to all polynomials agnostic to their $1$-norm, and achieves a cost that necessarily scales polynomially in $d$.

While estimation of functions in the monomial basis serves as a pedagogical introduction to the mechanics of parallel QSP (after all, factoring monomials is trivial!), the true strength of our approach lies in handling polynomials with large 1-norm $\| c\|_1$ in the monomial basis, such as the Chebyshev polynomials $T_d(x)$. In such cases, prior techniques~\cite{Quek_2024, Yirka_2021}, including Theorem~\ref{thm:PQSP_GeneralPoly}, incur a cost that scales with $\| c\|_1$ and can be exponentially large in $d$. In contrast, parallel QSP, as explained in Theorem~\ref{thm:parallel_qsp_prop_est_2}, applies a polynomial transformation coherently, with a complexity determined by a function norm that is often significantly smaller than $\|c\|_1$. This makes parallel QSP especially powerful for property estimation tasks where the target polynomial is not sparse or has large coefficients, such as those used in estimating the non-integer R\'enyi entropy or the von Neumann entropy. We explore these polynomials later in Secs.~\ref{sec:Renyi_Nonint} and~\ref{sec:VonNeumann}.

\subsection{Partition Function}
A problem to which Theorem~\ref{thm:PQSP_GeneralPoly} applies nicely is the estimation of $\tr(e^{-\beta \rho})$. This can be used to evaluate a partition function $Z = \tr(e^{-\beta H})$ when the Hamiltonian $H$ is encoded in a density matrix as $\rho \propto H+cI$, such as in sample-based Hamiltonian simulation~\cite{Kimmel_2017} or density matrix exponentiation~\cite{Lloyd_2014}. Here we can prove the following: 
\begin{theorem}[Estimation of $\tr(e^{-\beta \rho})$ with Parallel QSP]
    One an invoke parallel QSP across $k$ threads to estimate $\tr(e^{-\beta \rho})$ with a circuit of query depth $O\left(\frac{\beta}{k} \log(\beta D/\epsilon) \right)$ and width $O(k)$, where $D = \text{dim}(\rho)$. To achieve additive error $\epsilon$, the requisite number of measurements is $O\big( \frac{e^{2\beta}}{\epsilon^2} \big)$.  
\end{theorem}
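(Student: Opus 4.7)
The plan is to reduce the problem to Theorem~\ref{thm:PQSP_GeneralPoly} by producing a polynomial approximation of $e^{-\beta x}$ whose degree controls the query depth and whose monomial $1$-norm controls the measurement cost. Concretely, I would first fix a polynomial $P(x) = \sum_{n=0}^d c_n x^n$ that approximates $e^{-\beta x}$ uniformly on $[-1,1]$ to additive error $\epsilon/(2D)$, where $D = \dim(\rho)$. The natural choice is the truncated Taylor series, for which the Lagrange remainder on $[-1,1]$ is bounded by $\sum_{n \geq d+1} \beta^n/n! \leq (e\beta/(d+1))^{d+1}/(1-\beta/(d+2))$ for $d + 2 > \beta$. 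A short calculation shows that $d = O(\beta \log(\beta D/\epsilon))$ suffices to drive the remainder below $\epsilon/(2D)$.

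Next, since $\|P(\rho) - e^{-\beta \rho}\|_{\rm op} \leq \epsilon/(2D)$, the triangle inequality together with $|\tr(M)| \leq D \|M\|_{\rm op}$ gives
\begin{equation}
    \bigl|\tr(P(\rho)) - \tr(e^{-\beta \rho})\bigr| \leq D \cdot \frac{\epsilon}{2D} = \frac{\epsilon}{2},
\end{equation}
so it suffices to estimate $\tr(P(\rho))$ to additive error $\epsilon/2$. Applying Theorem~\ref{thm:PQSP_GeneralPoly} to this polynomial parallelizes the computation across $k$ threads with query depth $\lfloor \tfrac{1}{k}\lfloor \tfrac{d-k}{2}\rfloor\rfloor + 1 = O(d/k) = O\bigl(\tfrac{\beta}{k} \log(\beta D/\epsilon)\bigr)$ and width $O(k)$, at a measurement cost $O(\|c\|_1^2/\epsilon^2)$.

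Finally, I would bound the monomial $1$-norm of the Taylor truncation as
\begin{equation}
    \|c\|_1 = \sum_{n=0}^d \frac{\beta^n}{n!} \leq \sum_{n=0}^\infty \frac{\beta^n}{n!} = e^\beta,
\end{equation}
which plugs into the measurement cost from Theorem~\ref{thm:PQSP_GeneralPoly} to give $O(e^{2\beta}/\epsilon^2)$ and completes the argument. The only nontrivial step is a bookkeeping one: choosing $d$ large enough that the polynomial approximation error, after being inflated by the dimension factor $D$ inherent to traces, is absorbed into $\epsilon/2$, while simultaneously keeping $d$ within the claimed $O(\beta \log(\beta D/\epsilon))$ scaling. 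Both requirements are met by the Taylor-remainder estimate above, so no additional analytical machinery beyond Theorem~\ref{thm:PQSP_GeneralPoly} is required.
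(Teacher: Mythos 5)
Your proposal is correct and follows essentially the same route as the paper: truncate the Taylor series of $e^{-\beta x}$ at degree $d = O(\beta \log(\beta D/\epsilon))$ so the trace error after the dimension factor $D$ is at most $\epsilon/2$, bound the monomial $1$-norm by $e^{\beta}$, and invoke Theorem~\ref{thm:PQSP_GeneralPoly} for the query depth $O(d/k)$, width $O(k)$, and measurement cost $O(e^{2\beta}/\epsilon^2)$. The only cosmetic difference is that you verify the truncation degree via an explicit Lagrange-remainder estimate, whereas the paper cites the known approximation result.
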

\begin{proof}
    Consider the polynomial $P(x) = \sum_{n=0}^d \frac{(-\beta)^n}{n!}x^n$. This approximates the exponential $e^{-\beta x}$ with an additive error at most $\epsilon'$ over $x\in[0,1]$ by choosing a degree $d = O(\beta \log(\beta/\epsilon'))$~\cite{Berry_2015_Simulating}. As our goal is to evaluate $\tr(P(\rho))$ to error $\epsilon$, select a polynomial error $\epsilon' = \epsilon/2D$, where $D=\text{dim}(\rho)$. This guarantees that $|\tr(P(\rho)) - \tr(e^{-\beta \rho})| \leq |\tr(\epsilon ')| = \epsilon/2$. Therefore, if we can evaluate $\tr(P(\rho))$ to error $\epsilon/2$, we will approximate $\tr(e^{-\beta \rho})$ to error $\epsilon$. 

    To perform this evaluation, we first note the 1-norm bound:
    \begin{equation}
        \| c\|_1 = \sum_{n=0}^d \frac{\beta^n}{n!} < e^\beta. 
    \end{equation}
    According to Theorem~\ref{thm:PQSP_GeneralPoly}, we can estimate $\tr(P(\rho))$ with a circuit of query depth $O(d/k) = O(\frac{\beta}{k}\log(\beta D/\epsilon))$ and width $O(k)$, and the requisite number of measurements is 
    \begin{equation}
        O \Bigg( \frac{\| c \|_1^2 }{(\epsilon/2)^2} \Bigg) = O \Big( \frac{e^{2\beta}}{\epsilon^2} \Big). 
    \end{equation} 
\end{proof}

Relative to standard QSP, parallel QSP reduces the query depth by a factor $O(k)$, without worsening the scaling of the measurement cost. This is because $e^{-\beta x}$ admits a Taylor series whose coefficients have a 1-norm bounded by $O(1)$, independent of the truncation degree.

\subsection{R\'enyi Entropy: Non-Integer Order}\label{sec:Renyi_Nonint}
It is also of interest to compute the R\'enyi entropy for non-integer $\alpha$. Rather straightforwardly, this may be acheived by implementing $\rho^\alpha$ as a QSP polynomial and estimating its trace~\cite{Subramanian_2021, wang2022new}. This however is more costly to compute than the case of integer order because the function $x^\alpha$ is non-analytic at $x=0$, and hence can only be well approximated by a polynomial away from this singular point. Due to this singularity, polynomial approximations to $x^\alpha$ have coefficients whose 1-norm scales exponentially as $\|c\|_1 = 2^{O(d)}$~\cite{Quek_2024}, thus rendering Theorem~\ref{thm:PQSP_GeneralPoly} impractical. In this case, it is advantageous to instead use Theorem~\ref{thm:parallel_qsp_prop_est_2}, whose measurement cost is guaranteed to scale polynomially in $d$.

We can use this reasoning to prove the following theorem for estimating the R\'enyi entropy. For generality, we state this theorem for a degree-$d$ polynomial approximation of $x^\alpha$. As we will discuss after the proof, the necessary degree depends on properties of $\rho$ (condition number, rank, etc.).
\begin{theorem}[Parallel QSP for Estimating R\'enyi Entropy: Non-Integer Order]\label{thm:PQSP_Renyi_NonInt}
    Suppose $\alpha>0$ is a non-integer, and let $P(x)$ be a degree-$d$ polynomial approximation to $x^\alpha$. Then, one can invoke parallel QSP across $k$ threads to estimate the R\'enyi entropy $S_\alpha(\rho) = \frac{1}{1-\alpha}\log\big(\tr(\rho^\alpha) \big)$ as $\tilde{S}_\alpha(\rho) = \frac{1}{1-\alpha}\log\big(\tr(P(\rho)) \big)$, with a circuit of query depth at most $\big\lfloor \frac{d-k}{2(k-1)} \big\rfloor + k-2 = O(d/k+k)$ and width $O(k)$. The number of measurements necessary to estimate $\tilde{S}_\alpha(\rho)$ to additive error $\epsilon$ is 
    \begin{equation}\label{eq:RenyiEstMeasCost}
        O\Bigg( \Bigg(\frac{d^{k+2}}{(k+1)!} \Bigg)^2 \frac{d}{k} \frac{1}{\tilde{s}_\alpha^2 \epsilon^2 (\alpha-1)^2} \Bigg) , 
    \end{equation}
    where $\tilde{s}_\alpha = \tr(P(\rho)) \approx \tr(\rho^\alpha)$. 
\end{theorem}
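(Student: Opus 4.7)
The plan is to reduce the theorem to Theorem~\ref{thm:parallel_qsp_prop_est_2} in two steps: first use parallel QSP to estimate $\tilde{s}_\alpha = \tr(P(\rho))$ to some additive error $\epsilon'$, then convert that into the stated additive error $\epsilon$ on $\tilde{S}_\alpha(\rho) = \frac{1}{1-\alpha}\log(\tilde{s}_\alpha)$ by the same local-Lipschitz argument used in Eq.~\eqref{eq:Renyi_err_bound}. Since $P(x)$ is a real, bounded polynomial of indefinite parity, Theorem~\ref{thm:parallel_qsp_prop_est_2} applies out of the box, and the circuit width $O(k)$ together with the query depth $\lfloor \tfrac{d-k}{2(k-1)} \rfloor + k-2$ are inherited verbatim. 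Thus the only nontrivial work is tracking the measurement cost.

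For the measurement cost, I would instantiate Theorem~\ref{thm:parallel_qsp_prop_est_2} on $P(x)$ and, because $x^\alpha$ has a branch singularity at the origin and no special parity or non-negativity structure is assumed of $P(x)$, I would not attempt to hand-craft a low-norm factorization; instead I would plug in the generic worst-case bounds of Eq.~\eqref{eq:magnitude_bounds}, namely $\|P_{<k}\|_{[-1,1]}^2 = O(d^{2k-2}/((k-1)!)^2)$ and $\|P_{\geq k}\|_{[-1,1]}^2 = O(d^{2k+1}/(k \cdot (k!)^2))$. The second of these dominates; after multiplication by the $d^4/k^2$ factor from the theorem, one obtains an expression of order $d^{2k+5}/(k^3 (k!)^2 \cdot \epsilon'^2)$, which coincides up to $O(1)$ constants with $(d^{k+2}/(k+1)!)^2 \cdot (d/k) / \epsilon'^2$ — exactly the form appearing in Eq.~\eqref{eq:RenyiEstMeasCost}.

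To promote this to additive error on $\tilde{S}_\alpha(\rho)$, I would repeat the argument behind Eq.~\eqref{eq:Renyi_err_bound}: since $\tfrac{d}{ds}\log(s) = 1/s$, a multiplicative error $\varepsilon = \epsilon'/\tilde{s}_\alpha$ on $\tilde{s}_\alpha$ induces additive error $\varepsilon/|\alpha-1|$ on $\tilde{S}_\alpha(\rho)$, so it suffices to choose $\epsilon' = \tilde{s}_\alpha (\alpha-1) \epsilon$. Substituting this back produces the prefactor $1/(\tilde{s}_\alpha^2 (\alpha-1)^2 \epsilon^2)$ stated in Eq.~\eqref{eq:RenyiEstMeasCost}, completing the argument.

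The main obstacle I expect is bookkeeping the $(1+\sqrt{2})^{4k}$ factor carried by Theorem~\ref{thm:parallel_qsp_prop_est_2} against its apparent absence in the target bound. The resolution should be the observation already flagged in the discussion after Theorem~\ref{thm:parallel_qsp_prop_est_2}: the super-factorial denominator $1/(k!)^2$ in the worst-case $\|P_{\geq k}\|^2$ eventually dominates any $2^{O(k)}$ growth, so the exponential-in-$k$ factor is subsumed into the implicit constants of the $O(\cdot)$ notation without spoiling the claimed asymptotic scaling in $d$ and $k$. Beyond this verification, the remaining steps — substituting the worst-case norms, collecting polynomial-in-$k$ factors, and inverting the logarithm — are routine and require no new technical ideas.
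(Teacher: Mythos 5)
Your proposal matches the paper's own proof essentially step for step: it invokes Theorem~\ref{thm:parallel_qsp_prop_est_2} for the generic (indefinite-parity) polynomial, inserts the worst-case norm bounds of Eq.~\eqref{eq:magnitude_bounds} so that the $\|P_{\geq k}\|^2$ term dominates and the $(1+\sqrt{2})^{4k}$ factor is absorbed by the factorial decay, and then converts additive error on $\tilde{s}_\alpha$ to additive error on $\tilde{S}_\alpha(\rho)$ via the logarithm argument of Eq.~\eqref{eq:Renyi_err_bound} with $\epsilon' = \tilde{s}_\alpha(\alpha-1)\epsilon$. No substantive differences or gaps relative to the paper's argument.
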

\begin{proof}
    Without a specific polynomial approximation to $x^\alpha$ (which we will address after this proof)  nor an analytic factorization thereof, we can invoke the generic construction of Theorem~\ref{thm:parallel_qsp_prop_est_2}, which is applicable to all polynomials. This reduces the query depth to $\big\lfloor \frac{d-k}{2(k-1)} \big\rfloor + k-2 = O(k+d/k)$. Inserting the generic bounds of Eq.~\eqref{eq:magnitude_bounds}, we find that the measurement cost to estimate $\tilde{s}_\alpha = \tr(P(\rho))$ to additive error $\epsilon'$ is
    \begin{equation}
    \begin{aligned}
        &O\Bigg( \Bigg(\frac{d^{k}}{k!} \sqrt{\frac{d}{k}} \Bigg)^2  \frac{d^4}{k^2} \frac{(1+\sqrt{2})^{4k}}{\epsilon'^2} \Bigg) \\
        & \qquad \qquad \qquad \qquad \qquad = O\Bigg( \Bigg(\frac{d^{k+2}}{(k+1)!}\Bigg)^2 \frac{d}{k \epsilon'^2} \Bigg) . 
    \end{aligned}
    \end{equation}
    where the second line follows from the factorial $k!$ dominating the exponential $2^{O(k)}$. Then, mirroring the proof of Theorem~\ref{thm:PQSP_Renyi_Int}, in order to estimate $\tilde{S}_\alpha(\rho)$ to additive error $\epsilon$, it suffices to select $\epsilon' = s_\alpha \varepsilon = s_\alpha \epsilon (\alpha-1)$, which equates to a total number of measurements 
    \begin{equation}
        O\Bigg( \Bigg(\frac{d^{k+2}}{(k+1)!} \Bigg)^2 \frac{d}{k} \frac{1}{\tilde{s}_\alpha^2 \epsilon^2 (\alpha-1)^2} \Bigg). 
    \end{equation}

    Lastly, note that we have used the generic bounds of Eq.~\eqref{eq:magnitude_bounds}. Although these are worst-case, and could possibly be tightened for a specific polynomial approximation, we expect them to be relatively tight for polynomial approximations to functions with singularities like $x^\alpha$. 
\end{proof}

As anticipated, parallel QSP reduces the depth by a factor of $O(k)$, and increases the number of measurements by a polynomial in $d$ and a factor that decays quadratic-factorially in $k$. To precisely approximate the R\'enyi entropy using this result, we need to determine a sufficient polynomial approximation to $x^\alpha$. Refs.~\cite{Wang_2023, Gily_n_2019} provide an odd polynomial approximation to $x^\alpha$ that suffers additive error at most $\epsilon$ over $x \in [\delta,1]$ for some $\delta>0$, and show that the degree of this polynomial is $O(\alpha + \frac{1}{\delta}\log(1/\epsilon))$. Using this polynomial, there are two predominant approaches for R\'enyi entropy estimation. 

First, as in Ref.~\cite{Subramanian_2021}, one can choose $\delta$ to be the smallest non-zero eigenvalue of $\rho$ (or a lower bound thereof), such that $P(\rho) \approx \rho^\alpha$ over the support of $\rho$. This choice is equivalent to setting $\delta = 1/\kappa$ where $\kappa$ is the condition number of $\rho$. Then, to estimate $S_\alpha(\rho)$ to error $\epsilon$, we can select $\epsilon' = s_\alpha \epsilon |\alpha-1|/2D$ where $D = \text{dim}(\rho)$, corresponding to a degree $d = O(\alpha + \kappa \log(D/|\alpha-1| s_\alpha \epsilon))$. This choice guarantees that $|\tilde{s}_\alpha - s_\alpha| = |\tr(P(\rho)) - \tr(\rho^\alpha) | \leq \tr(\epsilon') = s_\alpha \epsilon |\alpha-1|/2$, such that $|\tilde{S}_\alpha(\rho) - S_\alpha(\rho)| \leq \epsilon/2$ by Eq.~\eqref{eq:Renyi_err_bound}. Therefore, to estimate $S_\alpha(\rho)$ to error $\epsilon$, it suffices to estimate $\tilde{S}_\alpha(\rho)$ to error $\epsilon/2$. According to Theorem~\ref{thm:PQSP_Renyi_NonInt}, the corresponding parallel QSP algorithm requires a number of measurements 
\begin{equation}\label{eq:condition_num_meas}
\begin{aligned}
    &O\Bigg( \Bigg(\frac{d^{k+2}}{(k+1)!} \Bigg)^2 \frac{d}{k} \frac{1}{\tilde{s}_\alpha^2 \epsilon^2 (\alpha-1)^2} \Bigg) \\
    & \qquad \qquad \qquad \qquad \qquad = O\Bigg( \frac{\text{poly}(\alpha, \kappa, \log(D))}{s_\alpha^2 \epsilon^2 (\alpha-1)^2} \Bigg) . 
\end{aligned}
\end{equation}

On the other hand, if the rank $r=\text{rank}(\rho)$ is known, then as Ref.~\cite{wang2022new} shows, one can alternatively select a value $\delta = O(1/r)$ while maintaining an accurate approximation of the R\'enyi entropy. Here, one can again select $\epsilon' = s_\alpha \epsilon |\alpha-1|/2r$, such that $|\tilde{s}_\alpha - s_\alpha| = |\tr(P(\rho)) - \tr(\rho^\alpha) | \leq \tr(\epsilon') = s_\alpha \epsilon |\alpha-1|/2$, implying $|\tilde{S}_\alpha(\rho) - S_\alpha(\rho)| \leq \epsilon/2$ by Eq.~\eqref{eq:Renyi_err_bound}. These choices correspond to a polynomial degree $d = O(\alpha + r \log(r/s_\alpha \epsilon |\alpha-1|))$. As above, it suffices to estimate $\tilde{S}_\alpha(\rho)$ to error $\epsilon/2$, in which case Theorem~\ref{thm:PQSP_Renyi_NonInt} indicates that parallel QSP necessitates a number of measurements 
\begin{equation}
\begin{aligned}
    & O\Bigg( \frac{\text{poly}(\alpha, r)}{s_\alpha^2 \epsilon^2 (\alpha-1)^2} \Bigg) . 
\end{aligned}
\end{equation}
This scales more favorably than the above approach leading to Eq.~\eqref{eq:condition_num_meas}, because $r \leq \kappa$ for any density matrix.

In both cases, the query depth is reduced relative to the associated sequential algorithms~\cite{Subramanian_2021, wang2022new, Wang_Quantum_2023}, at the expense of increasing the measurement cost by a factor $O(\text{poly}(\alpha, \kappa, \log(D)))$ or $O(\text{poly}(\alpha, r))$. As such, these methods are best deployed on states with small condition number or low-rank, e.g. $\kappa, r = O(\text{polylog}(D))$, which arise in a variety of experimental~\cite{Gross_2010, Butucea_2015, Araujo_2024}, computational~\cite{Bridgeman_2017, perezgarcia2007matrix, Jarzyna_2013}, and theoretical contexts~\cite{Lloyd_2014, ezzell2022quantum}. In these cases, quantum algorithms for R\'enyi entropy estimation achieve an exponential advantage over classical algorithms, which scale as $O(\text{poly}(D))$~\cite{wang2022new, Wang_Quantum_2023}. Fortunately, because the measurement cost of parallel QSP scales polynomially in $\kappa$ and $r$, parallel QSP crucially retains this exponential speedup.

\subsection{Von Neumann Entropy}\label{sec:VonNeumann}
Another ubiquitous quantity in quantum physics is the von Neumann entropy $S(\rho) = -\tr(\rho \ln \rho)$, which characterizes entanglement~\cite{nielsen2010quantum}, defines thermal states~\cite{chowdhury2020variational}, describes phase transitions~\cite{Skinner_2019}, and dictates the rate at which information can be transmitted across a quantum channel~\cite{Schumacher_1995}. Given such broad interest, QSP-based algorithms have been put forth to estimate the von Neumann entropy~\cite{wang2022new, Wang_2023}, which work by approximating $-\rho \ln \rho$ as a QSP polynomial and taking its trace. Similar to the situation of the non-integer $\alpha$-R\'enyi entropy, the underlying function $-x\ln x$ is singular at $x=0$, and can only be approximated by a polynomial away from this point. As such, polynomial approximations to $-x\ln x$ also exhibit coefficients whose 1-norm scales exponentially as $\|c\|_1 = 2^{O(d)}$~\cite{Quek_2024}, thus rendering Theorem~\ref{thm:PQSP_GeneralPoly} impractical and suggests the use of  Theorem~\ref{thm:parallel_qsp_prop_est_2} instead.

We can therefore invoke parallel QSP to prove the following theorem, analogous to Theorem~\ref{thm:PQSP_Renyi_NonInt}. As above, we state this for a degree-$d$ polynomial approximation of $-x\ln x$, where $d$ will depend on the properties $\rho$, which we address after the proof.
\begin{theorem}[Parallel QSP for Estimating von Neumann Entropy]\label{thm:PQSP_VonNeumann}
    Let $P(x)$ be a degree-$d$ polynomial approximation to  $-x\ln x$. Then, parallel QSP enables the estimation of the von Neumann entropy $S(\rho) = -\tr(\rho \ln \rho)$ as $\tilde{S}(\rho) = \tr(P(\rho))$, with a circuit of query depth at most $\big\lfloor \frac{d-k}{2(k-1)} \big\rfloor + k-2 = O(d/k + k)$ and width $O(k)$. The number of measurements necessary to estimate $\tilde{S}(\rho)$ to additive error $\epsilon$ is 
    \begin{equation}\label{eq:VonNeumannEstMeasCost}
        O\Bigg( \Bigg(\frac{d^{k+2}}{(k+1)!} \Bigg)^2 \frac{d}{k} \frac{1}{\epsilon^2} \Bigg). 
    \end{equation}
\end{theorem}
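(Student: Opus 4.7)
The plan is to invoke Theorem~\ref{thm:parallel_qsp_prop_est_2} almost verbatim, treating $P(x)$ as a generic real, bounded, degree-$d$ polynomial of indefinite parity. After rescaling $P$ by a constant (e.g.\ a factor of order $1/e$ since $\max_{x\in[0,1]}(-x\ln x)=1/e$) so that $\|P\|_{[-1,1]}\le 1$, Theorem~\ref{thm:parallel_qsp_prop_est_2} directly certifies a parallel QSP circuit of width $O(k)$ and query depth at most $\lfloor (d-k)/(2(k-1)) \rfloor + k-2$, matching the claimed bounds. The only structural simplification relative to Theorem~\ref{thm:PQSP_Renyi_NonInt} is that the von Neumann entropy is \emph{already} a trace, so the estimator $\tilde S(\rho) = \tr(P(\rho))$ needs no outer logarithm inversion: resolving the trace to additive error $\epsilon$ immediately yields $\tilde S(\rho)$ to additive error $\epsilon$, eliminating the factor of $1/(\tilde s_\alpha(\alpha-1))^2$ that dressed the R\'enyi cost.

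For the measurement count, I would plug into the cost formula of Theorem~\ref{thm:parallel_qsp_prop_est_2} the worst-case norm bounds of Eq.~\eqref{eq:magnitude_bounds},
\begin{equation}
    \|P_{<k}\|_{[-1,1]} = O\!\left(\tfrac{d^{k-1}}{(k-1)!}\right), \qquad
    \|P_{\geq k}\|_{[-1,1]} = O\!\left(\tfrac{d^{k}}{k!}\sqrt{\tfrac{d}{k}}\right),
\end{equation}
since we do not have in hand an explicit factorization nor tighter norm estimates for a concrete polynomial approximant to $-x\ln x$. The $\|P_{\geq k}\|_{[-1,1]}^2 \,d^4\,(1+\sqrt{2})^{4k}/k^2$ contribution dominates the $\|P_{<k}\|_{[-1,1]}^2$ contribution. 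Regrouping $d^{2k}\!\cdot\!(d/k)\!\cdot\!(d^4/k^2) = d^{2k+5}/k^3$ and pairing $1/(k!)^2$ with one factor of $(k+1)^2$ to form $1/((k+1)!)^2$ gives the advertised form $(d^{k+2}/(k+1)!)^2 \cdot d/(k\epsilon^2)$, with the residual $(1+\sqrt{2})^{4k}$ factor absorbed into the $O(\cdot)$ using the fact that $k!$ grows super-exponentially and hence dominates any $2^{O(k)}$ (the same absorption used in the proof of Theorem~\ref{thm:PQSP_Renyi_NonInt}).

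The main obstacle is conceptual rather than technical: justifying that appealing to the worst-case bounds of Eq.~\eqref{eq:magnitude_bounds} is really necessary (and therefore not wasteful) for a polynomial approximating $-x\ln x$. The reason is precisely that $-x\ln x$ is non-analytic at $x=0$, so any polynomial that approximates it well on an interval including a neighborhood of $0$ has a monomial-basis $1$-norm $\|c\|_1 = 2^{O(d)}$~\cite{Quek_2024}; this rules out the cheaper Theorem~\ref{thm:PQSP_GeneralPoly} route and forces us to rely on a function-norm-based cost, at which point the generic bounds Eq.~\eqref{eq:magnitude_bounds} are the natural handle. A secondary concern worth flagging in the proof is that these bounds are worst-case, so for structured polynomial approximants (minimax approximants on $[\delta,1]$, or rational approximants lifted to polynomials) one might expect sharper norm estimates yielding a better constant; however, since the theorem claims only the worst-case scaling, this looseness is harmless and the stated bound follows.
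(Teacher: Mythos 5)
Your proposal matches the paper's proof essentially step for step: invoke Theorem~\ref{thm:parallel_qsp_prop_est_2} for a generic bounded polynomial, insert the worst-case norm bounds of Eq.~\eqref{eq:magnitude_bounds}, absorb the $(1+\sqrt{2})^{4k}$ factor into the factorial to reach Eq.~\eqref{eq:VonNeumannEstMeasCost}, and note that no logarithm-inversion step (and hence no $1/(\tilde{s}_\alpha(\alpha-1))^2$ factor) is needed since the entropy is already a trace. Your aside about rescaling by $\approx 1/e$ is harmless but not part of the paper's argument, which simply takes $P$ to be a bounded polynomial approximant as in Theorem~\ref{thm:parallel_qsp_prop_est_2}.
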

\begin{proof}
    This proof is nearly identical to Theorem~\ref{thm:PQSP_Renyi_NonInt}. In the absence of a specific polynomial approximation to $-x\ln(x)$ nor a factorization thereof, we again apply Theorem~\ref{thm:parallel_qsp_prop_est_2} to estimate $\tr(P(\rho))$ to additive error $\epsilon$, and insert the generic bounds of Eq.~\eqref{eq:magnitude_bounds}. This reduces the query depth to $\big\lfloor \frac{d-k}{2(k-1)} \big\rfloor + k - 2 = O(d/k+k)$ and necessitates a measurement cost
    \begin{equation}
    \begin{aligned}
        &O\Bigg( \Bigg(\frac{d^{k}}{k!} \sqrt{\frac{d}{k}} \Bigg)^2  \frac{d^4}{k^2} \frac{(1+\sqrt{2})^{4k}}{\epsilon^2} \Bigg) = \\
        & O\Bigg( \Bigg(\frac{d^{k+2}}{(k+1)!}\Bigg)^2 \frac{d}{k \epsilon^2} \Bigg) . 
    \end{aligned}
    \end{equation}
    Again, we expect the generic bounds to be nearly saturated for polynomial approximations to a singular function like $-x \ln x$. 
\end{proof}

As above, parallel QSP reduces the query depth by a factor $O(k)$ and increases the number of measurements by a factor $\text{poly}(d)$. What remains is to provide a sufficient polynomial approximation to $-x\ln x$. Similar to the function $x^\alpha$, there exists an odd polynomial approximation to to $-x\ln x$ that suffers additive error at most $\epsilon '$ over $x\in[\delta,1]$ for some $\delta>0$, and the requisite degree of this polynomial is $O(\frac{1}{\delta} \log(1/\epsilon'))$~\cite{wang2022new, Wang_2023, Gily_n_2019}.

Analogous to our treatment of the R\'enyi entropy, one can use this polynomial and select $\delta = 1/\kappa$ and $\epsilon' = \epsilon/2D$, where $D = \text{dim}(\rho)$. This guarantees that $|\tilde{S}(\rho) - S(\rho) | \leq \epsilon/2$ and corresponds to a degree $d = O(\kappa \log(D/\epsilon))$. It then suffices to estimate $\tilde{S}(\rho)$ to error $\epsilon/2$, in which case Theorem~\ref{thm:PQSP_VonNeumann} indicates that parallel QSP requires query depth $O(d/k+k)$ and a number of measurements $O(\frac{\text{poly}(\kappa, \log D)}{\epsilon^2})$. 

Alternatively, if the rank $r=\text{rank}(\rho)$ is known, one can select $\delta = O(1/r)$ and $\epsilon' = O(\epsilon/r)$, while maintaining an accurate approximation to the von Neumann entropy~\cite{wang2022new}. This equates to a degree $d=O(r\log(r/\epsilon))$. Then, Theorem~\ref{thm:PQSP_VonNeumann} readily implies that parallel QSP estimates $S(\rho)$ with query depth $O(d/k+k)$ and a number of measurements $O(\frac{\text{poly}(r)}{\epsilon^2})$.

Again, these methods are best deployed on states with $\kappa, r = O(\text{polylog}(D))$, in which case quantum algorithms provide an exponential improvement over classical algorithms~\cite{wang2022new, Wang_Quantum_2023}. Because the measurement cost of parallel QSP scales as $\text{poly}(\kappa, \log(D) )$ or $\text{poly}(r)$, parallel QSP retains this exponential speedup.

\section{Discussion and Conclusion} \label{sec:discussion}

In this work, we have presented a parallelized version of quantum signal processing, tailored to property estimation. Our algorithm parallelizes the computation of a property $\tr(P(\rho))$ over $k$ threads, and reduces query depth by a factor $O(k)$ while increasing circuit width by $O(k)$, characterizing a tradeoff between temporal and spatial resources in QSP. The core of our construction rests on the ability to factorize a polynomial of degree $d$ into a product of $k$ degree-$O(d/k)$ polynomials. Within a quantum circuit, these polynomials are prepared in parallel and then multiplied together using a generalized swap test. This methodology is applicable to general polynomials of a density matrix, with a measurement cost that depends on the chosen factorization. In justification of its utility, we apply parallel QSP to a variety of problems in property estimation, ranging from the estimation of entropies to partition functions.

Our work has important implications for intermediate-scale quantum computation. First, because parallel QSP reduces circuit depth, it is well suited for platforms limited to shallow circuits by short coherence times~\cite{Somoroff_2023}. Secondly, parallel QSP opens up new opportunities for co-designing parallel quantum architectures, with applications in distributed quantum computing~\cite{caleffi2022distributed} and resource-intensive tasks on near-term devices~\cite{dgn_24_low_depth_qsp}. In this context, separate devices perform QSP in parallel and then send their state to a central device that performs a swap test, with the key benefit that errors remain isolated across devices until the swap test. Lastly, parallel QSP can help to reduce the overhead incurred by error mitigation or error correction. For example, in the structured, algorithmic-level correction of coherent errors in QSP~\cite{tan2023error}, parallel QSP reduces the overhead in circuit depth from $O(d^n)$ to $O((\frac{d}{k})^n)$ when errors are corrected to order $n$.


There exist numerous possible improvements to our parallel QSP algorithm. One limitation is the measurement overhead accrued when parallelizing over a generic polynomial, which does not always permit a factorization into polynomials of bounded magnitude on an interval. While here we circumvented this issue by decomposing into a well-behaved basis, it remains an open problem to bound the factorization constant for a generic polynomial and provide tighter bounds on the resulting measurement overhead. 
In addition, a theory of QSP on SU$(d)$ may prove useful in combining multiple monomials together \cite{laneve2023quantum,lu2024quantum} with correspondingly better overhead. 
With an eye toward improving property estimation algorithms, it may also prove fruitful to integrate parallel QSP with randomized measurements techniques~\cite{Elben_2018, Elben_2022}.

Parallel QSP suggests multiple future directions for the design of further parallel quantum algorithms. First, while we discussed parallel QSP in the context of univariate polynomials, one could extend this algorithm to multivariate polynomials~\cite{Quek_2024} to encompass relevant metrics like the fidelity $\tr(\sqrt{ \sqrt{\rho} \sigma \sqrt{\rho}}) $ and trace distances $\frac{1}{2} \tr(|\rho-\sigma|^n)$. Although parallel QSP can straightforwardly accommodate products of univariate polynomials by direct substitution into the parallel QSP circuit (Fig.~\ref{fig:Parallel_QSP_Circuit}), designing a generic multivariate polynomial is a more pressing challenge as a general theory of multivariate QSP has yet to be established \cite{rossi_m_qsp_22,nemeth2023variants, gomes2024multivariable}.
Secondly, beyond polynomial factorization, other forms of parallelization are also possible. For example, by exploring spatial locality, one could use spline interpolations to approximate a well-behaved polynomial by lower-degree polynomials over restricted intervals; implementation of this idea may require access to projectors that divide the input space into restricted intervals. 
In addition, we note that there exists concurrent work considering depth reduction of QSP sequences, albeit in the setting of parameter estimation for calibration \cite{dgn_24_low_depth_qsp}. Understanding the relation of our methods to such ideas remains an interesting open question for future work.

Together, these prospective generalizations of parallel QSP, combined with parallel classical computing \cite{alexeev2024quantum}, offer utility in other areas, such as the parallel quantum simulation of quantum chemical systems and materials \cite{liu2023bootstrap, bauer2020quantum}. Fundamentally, as properties of polynomial factorizations underlies parallel QSP, this suggests that other functional analytic properties could be leveraged to develop new and improved quantum algorithms through QSP. Given that polynomials have been extensively studied for centuries, boasting a rich array of properties from complex analysis to algebraic geometry, further exploration of the connections between polynomials and QSP may point to novel quantum algorithms.

\textit{Acknowledgements}: The authors thank Patrick Rall, Danial Motlagh, and Alexander Zlokapa for useful discussion. JMM acknowledges support from the National Science Foundation Graduate Research Fellowship under Grant No. 2141064. This project was supported by the U.S. Department of Energy, Office of Science, National Quantum Information Science Research Centers, Co-design Center for Quantum Advantage (C$^2$QA) under contract number DE-SC0012704. YL also acknowledges support from the Department of Energy under contract number DE-SC0025384. This research was also supported in part by NTT Research.

\bibliography{References}

\begin{thebibliography}{10}

\bibitem{ylc_fixed_point_14}
Theodore~J. Yoder, Guang~Hao Low, and Isaac~L. Chuang.
\newblock ``Fixed-point quantum search with an optimal number of queries''.
\newblock \href{https://dx.doi.org/https://doi.org/10.1103/PhysRevLett.113.210501}{Phys. Rev. Lett. {\bf 113}, 210501}~(2014).

\bibitem{Low_2016}
Guang~Hao Low, Theodore~J. Yoder, and Isaac~L. Chuang.
\newblock ``Methodology of resonant equiangular composite quantum gates''.
\newblock \href{https://dx.doi.org/http://dx.doi.org/10.1103/PhysRevX.6.041067}{Physical Review X~{\bf 6}}~(2016).

\bibitem{Low_2017}
Guang~Hao Low and Isaac~L. Chuang.
\newblock ``Optimal hamiltonian simulation by quantum signal processing''.
\newblock \href{https://dx.doi.org/http://dx.doi.org/10.1103/PhysRevLett.118.010501}{Physical Review Letters~~{\bf 118}}~(2017).

\bibitem{Low_2019}
Guang~Hao Low and Isaac~L. Chuang.
\newblock ``Hamiltonian simulation by qubitization''.
\newblock \href{https://dx.doi.org/10.22331/q-2019-07-12-163}{Quantum {\bf 3}, 163}~(2019).

\bibitem{Gily_n_2019}
Andr{\'a}s Gily{\'e}n, Yuan Su, Guang~Hao Low, and Nathan Wiebe.
\newblock ``Quantum singular value transformation and beyond: exponential improvements for quantum matrix arithmetics''.
\newblock \href{https://dx.doi.org/http://dx.doi.org/10.1145/3313276.3316366}{Proceedings of the 51st Annual ACM SIGACT Symposium on Theory of Computing}~(2019).

\bibitem{martyn2021grand}
John~M Martyn, Zane~M Rossi, Andrew~K Tan, and Isaac~L Chuang.
\newblock ``Grand unification of quantum algorithms''.
\newblock \href{https://dx.doi.org/https://doi.org/10.1103/PRXQuantum.2.040203}{PRX Quantum {\bf 2}, 040203}~(2021).

\bibitem{Haah_2019}
Jeongwan Haah.
\newblock ``Product decomposition of periodic functions in quantum signal processing''.
\newblock \href{https://dx.doi.org/10.22331/q-2019-10-07-190}{Quantum {\bf 3}, 190}~(2019).

\bibitem{chao2020finding}
Rui Chao, Dawei Ding, Andras Gily{\'e}n, Cupjin Huang, and Mario Szegedy.
\newblock ``Finding angles for quantum signal processing with machine precision''~(2020).
\newblock  \href{http://arxiv.org/abs/2003.02831}{arXiv:2003.02831}.

\bibitem{Dong_2021}
Yulong Dong, Xiang Meng, K.~Birgitta Whaley, and Lin Lin.
\newblock ``Efficient phase-factor evaluation in quantum signal processing''.
\newblock \href{https://dx.doi.org/10.1103/physreva.103.042419}{Physical Review A~{\bf 103}}~(2021).

\bibitem{dlnw_infinite_qsp_22}
Yulong Dong, Lin Lin, Hongkang Ni, and Jiasu Wang.
\newblock ``Infinite quantum signal processing''.
\newblock \href{https://dx.doi.org/10.22331/q-2024-12-10-1558}{Quantum {\bf 8}, 1558}~(2024).

\bibitem{ms_query_comp_fun_24}
Ashley Montanaro and Changpeng Shao.
\newblock ``Quantum and classical query complexities of functions of matrices''.
\newblock In Proceedings of the 56th Annual ACM Symposium on Theory of Computing.
\newblock \href{https://dx.doi.org/10.1145/3618260.3649665}{Page 573–584}.
\newblock STOC 2024~New York, NY, USA~(2024). Association for Computing Machinery.

\bibitem{cs_qsvt_tang_tian_23}
Ewin Tang and Kevin Tian.
\newblock ``A {CS} guide to the quantum singular value transformation''.
\newblock \href{https://dx.doi.org/10.1137/1.9781611977936.13}{Chapter : 2024 Symposium on Simplicity in Algorithms (SOSA), pages 121--143}.
\newblock SIAM. ~(2024).

\bibitem{almtw_szego_qsp_24}
Michel Alexis, Lin Lin, Gevorg Mnatsakanyan, Christoph Thiele, and Jiasu Wang.
\newblock ``Infinite quantum signal processing for arbitrary {S}zeg\"{o} functions''~(2024).
\newblock  \href{http://arxiv.org/abs/2407.05634}{arXiv:2407.05634}.

\bibitem{berntson2024complementary}
Bjorn~K. Berntson and Christoph Sünderhauf.
\newblock ``Complementary polynomials in quantum signal processing''~(2024).
\newblock  \href{http://arxiv.org/abs/2406.04246}{arXiv:2406.04246}.

\bibitem{debry_qsp_23}
Kyle DeBry, Jasmine Sinanan-Singh, Colin~D. Bruzewicz, David Reens, May~E. Kim, Matthew~P. Roychowdhury, Robert McConnell, Isaac~L. Chuang, and John Chiaverini.
\newblock ``Experimental quantum channel discrimination using metastable states of a trapped ion''.
\newblock \href{https://dx.doi.org/10.1103/PhysRevLett.131.170602}{Phys. Rev. Lett. {\bf 131}, 170602}~(2023).

\bibitem{kikuchi2023realization}
Yuta Kikuchi, Conor Mc~Keever, Luuk Coopmans, Michael Lubasch, and Marcello Benedetti.
\newblock ``Realization of quantum signal processing on a noisy quantum computer''.
\newblock \href{https://dx.doi.org/https://doi.org/10.1038/s41534-023-00762-0}{npj Quantum Information~ {\bf 9}, 93}~(2023).

\bibitem{ps_architecture_11}
David~A Patterson, Frederick~P Brooks~Jr, Ivan~E Sutherland, and Charles~P Thacker.
\newblock ``Computer architecture''.
\newblock Elsevier Science. ~(2011).

\bibitem{Johri_2017}
Sonika Johri, Damian~S. Steiger, and Matthias Troyer.
\newblock ``Entanglement spectroscopy on a quantum computer''.
\newblock \href{https://dx.doi.org/10.1103/physrevb.96.195136}{Physical Review B~{\bf 96}}~(2017).

\bibitem{Elben_2018}
A.~Elben, B.~Vermersch, M.~Dalmonte, J.I. Cirac, and P.~Zoller.
\newblock ``Rényi entropies from random quenches in atomic hubbard and spin models''.
\newblock \href{https://dx.doi.org/10.1103/physrevlett.120.050406}{Physical Review Letters~{\bf 120}}~(2018).

\bibitem{Buhrman_2001}
Harry Buhrman, Richard Cleve, John Watrous, and Ronald de~Wolf.
\newblock ``Quantum fingerprinting''.
\newblock \href{https://dx.doi.org/10.1103/physrevlett.87.167902}{Physical Review Letters~{\bf 87}}~(2001).

\bibitem{oszmaniec_relational_24}
Micha{\l} Oszmaniec, Daniel~J Brod, and Ernesto~F Galv{\~a}o.
\newblock ``Measuring relational information between quantum states, and applications''.
\newblock \href{https://dx.doi.org/https://doi.org/10.1088/1367-2630/ad1a27}{New Journal of Physics {\bf 26}, 013053}~(2024).

\bibitem{low2017quantum}
Guang~Hao Low.
\newblock ``Quantum signal processing by single-qubit dynamics''.
\newblock PhD thesis.
\newblock Massachusetts Institute of Technology.
\newblock ~(2017).
\newblock  url:~\url{http://hdl.handle.net/1721.1/115025}.

\bibitem{Quek_2024}
Yihui Quek, Eneet Kaur, and Mark~M. Wilde.
\newblock ``Multivariate trace estimation in constant quantum depth''.
\newblock \href{https://dx.doi.org/10.22331/q-2024-01-10-1220}{Quantum {\bf 8}, 1220}~(2024).

\bibitem{tan2023error}
Andrew~K. Tan, Yuan Liu, Minh~C. Tran, and Isaac~L. Chuang.
\newblock ``Perturbative model of noisy quantum signal processing''.
\newblock \href{https://dx.doi.org/10.1103/PhysRevA.107.042429}{Phys. Rev. A {\bf 107}, 042429}~(2023).

\bibitem{rycs_noisy_qsp_22}
Zane~M Rossi, Jeffery Yu, Isaac~L Chuang, and Sho Sugiura.
\newblock ``Quantum advantage for noisy channel discrimination''.
\newblock \href{https://dx.doi.org/https://doi.org/10.1103/PhysRevA.105.032401}{Physical Review A~ {\bf 105}, 032401}~(2022).

\bibitem{caleffi2022distributed}
Marcello Caleffi, Michele Amoretti, Davide Ferrari, Jessica Illiano, Antonio Manzalini, and Angela~Sara Cacciapuoti.
\newblock ``Distributed quantum computing: A survey''.
\newblock \href{https://dx.doi.org/https://doi.org/10.1016/j.comnet.2024.110672}{Computer Networks {\bf 254}, 110672}~(2024).

\bibitem{quek2022exponentially}
Yihui Quek, Daniel Stilck~França, Sumeet Khatri, Johannes~Jakob Meyer, and Jens Eisert.
\newblock ``Exponentially tighter bounds on limitations of quantum error mitigation''.
\newblock \href{https://dx.doi.org/10.1038/s41567-024-02536-7}{Nature Physics {\bf 20}, 1648--1658}~(2024).

\bibitem{Peng_2020}
Tianyi Peng, Aram~W. Harrow, Maris Ozols, and Xiaodi Wu.
\newblock ``Simulating large quantum circuits on a small quantum computer''.
\newblock \href{https://dx.doi.org/10.1103/physrevlett.125.150504}{Physical Review Letters~{\bf 125}}~(2020).

\bibitem{motlagh2023generalized}
Danial Motlagh and Nathan Wiebe.
\newblock ``Generalized quantum signal processing''.
\newblock \href{https://dx.doi.org/10.1103/PRXQuantum.5.020368}{PRX Quantum {\bf 5}, 020368}~(2024).

\bibitem{dong2021efficient}
Yulong Dong, Xiang Meng, K~Birgitta Whaley, and Lin Lin.
\newblock ``Efficient phase-factor evaluation in quantum signal processing''.
\newblock \href{https://dx.doi.org/https://doi.org/10.1103/PhysRevA.103.042419}{Physical Review A~ {\bf 103}, 042419}~(2021).

\bibitem{Ying_2022}
Lexing Ying.
\newblock ``Stable factorization for phase factors of quantum signal processing''.
\newblock \href{https://dx.doi.org/10.22331/q-2022-10-20-842}{Quantum {\bf 6}, 842}~(2022).

\bibitem{yamamoto2024robust}
Shuntaro Yamamoto and Nobuyuki Yoshioka.
\newblock ``Robust angle finding for generalized quantum signal processing''~(2024).
\newblock  \href{http://arxiv.org/abs/2402.03016}{arXiv:2402.03016}.

\bibitem{Subramanian_2021}
Sathyawageeswar Subramanian and Min-Hsiu Hsieh.
\newblock ``Quantum algorithm for estimating $\alpha$ -renyi entropies of quantum states''.
\newblock \href{https://dx.doi.org/10.1103/physreva.104.022428}{Physical Review A~{\bf 104}}~(2021).

\bibitem{gilyen2019distributional}
András Gilyén and Tongyang Li.
\newblock ``Distributional property testing in a quantum world''~(2019).
\newblock  \href{http://arxiv.org/abs/1902.00814}{arXiv:1902.00814}.

\bibitem{aharonov2006polynomial}
Dorit Aharonov, Vaughan Jones, and Zeph Landau.
\newblock ``A polynomial quantum algorithm for approximating the jones polynomial''.
\newblock In Proceedings of the Thirty-Eighth Annual ACM Symposium on Theory of Computing.
\newblock \href{https://dx.doi.org/10.1145/1132516.1132579}{Page 427–436}.
\newblock STOC '06~New York, NY, USA~(2006). Association for Computing Machinery.

\bibitem{gilyen2022improved}
András Gilyén and Alexander Poremba.
\newblock ``Improved quantum algorithms for fidelity estimation''~(2022).
\newblock  \href{http://arxiv.org/abs/2203.15993}{arXiv:2203.15993}.

\bibitem{wang2022new}
Qisheng Wang, Ji~Guan, Junyi Liu, Zhicheng Zhang, and Mingsheng Ying.
\newblock ``New quantum algorithms for computing quantum entropies and distances''.
\newblock \href{https://dx.doi.org/10.1109/TIT.2024.3399014}{IEEE Transactions on Information Theory {\bf 70}, 5653--5680}~(2024).

\bibitem{nielsen2010quantum}
Michael~A Nielsen and Isaac~L Chuang.
\newblock ``Quantum computation and quantum information''.
\newblock Cambridge University Press. ~(2010).

\bibitem{Brassard_2002}
Gilles Brassard, Peter Høyer, Michele Mosca, and Alain Tapp.
\newblock ``Quantum amplitude amplification and estimation''.
\newblock \href{https://dx.doi.org/10.1090/conm/305/05215}{Quantum Computation and Information~Page 53–74}~(2002).

\bibitem{Ekert_2002}
Artur~K. Ekert, Carolina~Moura Alves, Daniel K.~L. Oi, Michał Horodecki, Paweł Horodecki, and L.~C. Kwek.
\newblock ``Direct estimations of linear and nonlinear functionals of a quantum state''.
\newblock \href{https://dx.doi.org/10.1103/physrevlett.88.217901}{Physical Review Letters~{\bf 88}}~(2002).

\bibitem{Hastings_2010}
Matthew~B. Hastings, Iván González, Ann~B. Kallin, and Roger~G. Melko.
\newblock ``Measuring renyi entanglement entropy in quantum monte carlo simulations''.
\newblock \href{https://dx.doi.org/10.1103/physrevlett.104.157201}{Physical Review Letters~{\bf 104}}~(2010).

\bibitem{brun2004measuring}
Todd~A. Brun.
\newblock ``Measuring polynomial functions of states''~(2004).
\newblock  \href{http://arxiv.org/abs/quant-ph/0401067}{arXiv:quant-ph/0401067}.

\bibitem{Horodecki_2002}
Paweł Horodecki and Artur Ekert.
\newblock ``Method for direct detection of quantum entanglement''.
\newblock \href{https://dx.doi.org/10.1103/physrevlett.89.127902}{Physical Review Letters~{\bf 89}}~(2002).

\bibitem{Suba__2019}
Yiğit Subaşı, Lukasz Cincio, and Patrick~J Coles.
\newblock ``Entanglement spectroscopy with a depth-two quantum circuit''.
\newblock \href{https://dx.doi.org/10.1088/1751-8121/aaf54d}{Journal of Physics A: Mathematical and Theoretical {\bf 52}, 044001}~(2019).

\bibitem{Yirka_2021}
Justin Yirka and Yiğit Subaşı.
\newblock ``Qubit-efficient entanglement spectroscopy using qubit resets''.
\newblock \href{https://dx.doi.org/10.22331/q-2021-09-02-535}{Quantum {\bf 5}, 535}~(2021).

\bibitem{edelman1995polynomial}
Alan Edelman and Hiroshi Murakami.
\newblock ``Polynomial roots from companion matrix eigenvalues''.
\newblock \href{https://dx.doi.org/https://doi.org/10.1090/S0025-5718-1995-1262279-2}{Mathematics of Computation {\bf 64}, 763--776}~(1995).

\bibitem{Lowe_2023}
Angus {Lowe}, Matija {Medvidovi{\'c}}, Anthony {Hayes}, Lee~J. {O'Riordan}, Thomas~R. {Bromley}, Juan~Miguel {Arrazola}, and Nathan {Killoran}.
\newblock ``Fast quantum circuit cutting with randomized measurements''.
\newblock \href{https://dx.doi.org/10.22331/q-2023-03-02-934}{Quantum {\bf 7}, 934}~(2023).

\bibitem{tosta2023randomized}
Allan Tosta, Thais de~Lima~Silva, Giancarlo Camilo, and Leandro Aolita.
\newblock ``Randomized semi-quantum matrix processing''.
\newblock \href{https://dx.doi.org/10.1038/s41534-024-00883-0}{npj Quantum Information~{\bf 10}}~(2024).

\bibitem{wang2024faster}
Yue Wang and Qi~Zhao.
\newblock ``Faster quantum algorithms with "fractional"-truncated series''~(2024).
\newblock  \href{http://arxiv.org/abs/2402.05595}{arXiv:2402.05595}.

\bibitem{martyn2024halving}
John~M. Martyn and Patrick Rall.
\newblock ``Halving the cost of quantum algorithms with randomization''.
\newblock \href{https://dx.doi.org/10.1038/s41534-025-01003-2}{npj Quantum Information~{\bf 11}}~(2025).

\bibitem{Berry_2006}
Dominic~W. Berry, Graeme Ahokas, Richard Cleve, and Barry~C. Sanders.
\newblock ``Efficient quantum algorithms for simulating sparse hamiltonians''.
\newblock \href{https://dx.doi.org/10.1007/s00220-006-0150-x}{Communications in Mathematical Physics {\bf 270}, 359–371}~(2006).

\bibitem{chia2023impossibility}
Nai-Hui Chia, Kai-Min Chung, Yao-Ching Hsieh, Han-Hsuan Lin, Yao-Ting Lin, and Yu-Ching Shen.
\newblock ``On the impossibility of general parallel fast-forwarding of hamiltonian simulation''.
\newblock In Proceedings of the Conference on Proceedings of the 38th Computational Complexity Conference.
\newblock \href{https://dx.doi.org/10.4230/LIPIcs.CCC.2023.33}{CCC '23~}Dagstuhl, DEU~(2023). Schloss Dagstuhl--Leibniz-Zentrum fuer Informatik.

\bibitem{Kimmel_2017}
Shelby Kimmel, Cedric Yen-Yu Lin, Guang~Hao Low, Maris Ozols, and Theodore~J. Yoder.
\newblock ``Hamiltonian simulation with optimal sample complexity''.
\newblock \href{https://dx.doi.org/10.1038/s41534-017-0013-7}{npj Quantum Information~{\bf 3}}~(2017).

\bibitem{parallel_QSP}
Kevin Cheng.
\newblock ``{Parallel} {QSP}''.
\newblock \url{https://github.com/kevinchengg/parallelQSP}~(2024).

\bibitem{Brydges_2019}
Tiff Brydges, Andreas Elben, Petar Jurcevic, Benoît Vermersch, Christine Maier, Ben~P. Lanyon, Peter Zoller, Rainer Blatt, and Christian~F. Roos.
\newblock ``Probing rényi entanglement entropy via randomized measurements''.
\newblock \href{https://dx.doi.org/10.1126/science.aau4963}{Science {\bf 364}, 260–263}~(2019).

\bibitem{Linke_2018}
N.~M. Linke, S.~Johri, C.~Figgatt, K.~A. Landsman, A.~Y. Matsuura, and C.~Monroe.
\newblock ``Measuring the rényi entropy of a two-site fermi-hubbard model on a trapped ion quantum computer''.
\newblock \href{https://dx.doi.org/10.1103/physreva.98.052334}{Physical Review A~{\bf 98}}~(2018).

\bibitem{Hibat_Allah_2020}
Mohamed Hibat-Allah, Martin Ganahl, Lauren~E. Hayward, Roger~G. Melko, and Juan Carrasquilla.
\newblock ``Recurrent neural network wave functions''.
\newblock \href{https://dx.doi.org/10.1103/physrevresearch.2.023358}{Physical Review Research~{\bf 2}}~(2020).

\bibitem{Wang_2023}
Youle Wang, Lei Zhang, Zhan Yu, and Xin Wang.
\newblock ``Quantum phase processing and its applications in estimating phase and entropies''.
\newblock \href{https://dx.doi.org/10.1103/physreva.108.062413}{Physical Review A~{\bf 108}}~(2023).

\bibitem{Lloyd_2014}
Seth Lloyd, Masoud Mohseni, and Patrick Rebentrost.
\newblock ``Quantum principal component analysis''.
\newblock \href{https://dx.doi.org/10.1038/nphys3029}{Nature Physics {\bf 10}, 631–633}~(2014).

\bibitem{Berry_2015_Simulating}
Dominic~W. Berry, Andrew~M. Childs, Richard Cleve, Robin Kothari, and Rolando~D. Somma.
\newblock ``Simulating hamiltonian dynamics with a truncated {T}aylor series''.
\newblock \href{https://dx.doi.org/http://dx.doi.org/10.1103/PhysRevLett.114.090502}{Physical Review Letters~{\bf 114}}~(2015).

\bibitem{Wang_Quantum_2023}
Youle Wang, Benchi Zhao, and Xin Wang.
\newblock ``Quantum algorithms for estimating quantum entropies''.
\newblock \href{https://dx.doi.org/10.1103/PhysRevApplied.19.044041}{Phys. Rev. Appl. {\bf 19}, 044041}~(2023).

\bibitem{Gross_2010}
David Gross, Yi-Kai Liu, Steven~T. Flammia, Stephen Becker, and Jens Eisert.
\newblock ``Quantum state tomography via compressed sensing''.
\newblock \href{https://dx.doi.org/10.1103/physrevlett.105.150401}{Physical Review Letters~{\bf 105}}~(2010).

\bibitem{Butucea_2015}
Cristina Butucea, Mădălin Guţă, and Theodore Kypraios.
\newblock ``Spectral thresholding quantum tomography for low rank states''.
\newblock \href{https://dx.doi.org/10.1088/1367-2630/17/11/113050}{New Journal of Physics {\bf 17}, 113050}~(2015).

\bibitem{Araujo_2024}
Israel~F. Araujo, Carsten Blank, Ismael C.~S. Araújo, and Adenilton~J. da~Silva.
\newblock ``Low-rank quantum state preparation''.
\newblock \href{https://dx.doi.org/10.1109/tcad.2023.3297972}{IEEE Transactions on Computer-Aided Design of Integrated Circuits and Systems {\bf 43}, 161–170}~(2024).

\bibitem{Bridgeman_2017}
Jacob~C Bridgeman and Christopher~T Chubb.
\newblock ``Hand-waving and interpretive dance: an introductory course on tensor networks''.
\newblock \href{https://dx.doi.org/10.1088/1751-8121/aa6dc3}{Journal of Physics A: Mathematical and Theoretical {\bf 50}, 223001}~(2017).

\bibitem{perezgarcia2007matrix}
D.~Perez-Garcia, F.~Verstraete, M.~M. Wolf, and J.~I. Cirac.
\newblock ``Matrix product state representations''~(2007).
\newblock  \href{http://arxiv.org/abs/quant-ph/0608197}{arXiv:quant-ph/0608197}.

\bibitem{Jarzyna_2013}
Marcin Jarzyna and Rafał Demkowicz-Dobrzański.
\newblock ``Matrix product states for quantum metrology''.
\newblock \href{https://dx.doi.org/10.1103/physrevlett.110.240405}{Physical Review Letters~{\bf 110}}~(2013).

\bibitem{ezzell2022quantum}
Nic Ezzell, Zoë Holmes, and Patrick~J. Coles.
\newblock ``The quantum low-rank approximation problem''~(2022).
\newblock  \href{http://arxiv.org/abs/2203.00811}{arXiv:2203.00811}.

\bibitem{chowdhury2020variational}
Anirban~N. Chowdhury, Guang~Hao Low, and Nathan Wiebe.
\newblock ``A variational quantum algorithm for preparing quantum gibbs states''~(2020).
\newblock  \href{http://arxiv.org/abs/2002.00055}{arXiv:2002.00055}.

\bibitem{Skinner_2019}
Brian Skinner, Jonathan Ruhman, and Adam Nahum.
\newblock ``Measurement-induced phase transitions in the dynamics of entanglement''.
\newblock \href{https://dx.doi.org/10.1103/physrevx.9.031009}{Physical Review X~{\bf 9}}~(2019).

\bibitem{Schumacher_1995}
Benjamin Schumacher.
\newblock ``Quantum coding''.
\newblock \href{https://dx.doi.org/10.1103/PhysRevA.51.2738}{Phys. Rev. A {\bf 51}, 2738--2747}~(1995).

\bibitem{Somoroff_2023}
Aaron Somoroff, Quentin Ficheux, Raymond~A. Mencia, Haonan Xiong, Roman Kuzmin, and Vladimir~E. Manucharyan.
\newblock ``Millisecond coherence in a superconducting qubit''.
\newblock \href{https://dx.doi.org/10.1103/PhysRevLett.130.267001}{Phys. Rev. Lett. {\bf 130}, 267001}~(2023).

\bibitem{dgn_24_low_depth_qsp}
Yulong Dong, Jonathan~A. Gross, and Murphy~Yuezhen Niu.
\newblock ``Optimal low-depth quantum signal-processing phase estimation''.
\newblock \href{https://dx.doi.org/10.1038/s41467-025-56724-x}{Nature Communications~{\bf 16}}~(2025).

\bibitem{laneve2023quantum}
Lorenzo Laneve.
\newblock ``Quantum signal processing over {SU(N)}''~(2024).
\newblock  \href{http://arxiv.org/abs/2311.03949}{arXiv:2311.03949}.

\bibitem{lu2024quantum}
Xi~Lu, Yuan Liu, and Hongwei Lin.
\newblock ``Quantum signal processing and quantum singular value transformation on {U(N)}''~(2024).
\newblock  \href{http://arxiv.org/abs/2408.01439}{arXiv:2408.01439}.

\bibitem{Elben_2022}
Andreas Elben, Steven~T. Flammia, Hsin-Yuan Huang, Richard Kueng, John Preskill, Benoît Vermersch, and Peter Zoller.
\newblock ``The randomized measurement toolbox''.
\newblock \href{https://dx.doi.org/10.1038/s42254-022-00535-2}{Nature Reviews Physics {\bf 5}, 9–24}~(2022).

\bibitem{rossi_m_qsp_22}
Zane~M. Rossi and Isaac~L. Chuang.
\newblock ``Multivariable quantum signal processing ({M-QSP}): prophecies of the two-headed oracle''.
\newblock \href{https://dx.doi.org/10.22331/q-2022-09-20-811}{Quantum {\bf 6}, 811}~(2022).

\bibitem{nemeth2023variants}
Bal{\'a}zs N{\'e}meth, Blanka K{\"o}v{\'e}r, Bogl{\'a}rka Kulcs{\'a}r, Roland~Botond Mikl{\'o}si, and Andr{\'a}s Gily{\'e}n.
\newblock ``On variants of multivariate quantum signal processing and their characterizations''~(2023).
\newblock  \href{http://arxiv.org/abs/2312.09072}{arXiv:2312.09072}.

\bibitem{gomes2024multivariable}
Niladri Gomes, Hokiat Lim, and Nathan Wiebe.
\newblock ``Multivariable qsp and bosonic quantum simulation using iterated quantum signal processing''~(2024).
\newblock  \href{http://arxiv.org/abs/2408.03254}{arXiv:2408.03254}.

\bibitem{alexeev2024quantum}
Yuri Alexeev, Maximilian Amsler, Marco~Antonio Barroca, Sanzio Bassini, Torey Battelle, Daan Camps, David Casanova, Young jai Choi, Frederic~T Chong, Charles Chung, et~al.
\newblock ``Quantum-centric supercomputing for materials science: A perspective on challenges and future directions''.
\newblock \href{https://dx.doi.org/https://doi.org/10.1016/j.future.2024.04.060}{Future Generation Computer Systems}~(2024).

\bibitem{liu2023bootstrap}
Yuan Liu, Oinam~R Meitei, Zachary~E Chin, Arkopal Dutt, Max Tao, Isaac~L Chuang, and Troy Van~Voorhis.
\newblock ``Bootstrap embedding on a quantum computer''.
\newblock \href{https://dx.doi.org/https://doi.org/10.1021/acs.jctc.3c00012}{Journal of Chemical Theory and Computation {\bf 19}, 2230--2247}~(2023).

\bibitem{bauer2020quantum}
Bela Bauer, Sergey Bravyi, Mario Motta, and Garnet Kin-Lic Chan.
\newblock ``Quantum algorithms for quantum chemistry and quantum materials science''.
\newblock \href{https://dx.doi.org/https://doi.org/10.1021/acs.chemrev.9b00829}{Chemical Reviews {\bf 120}, 12685--12717}~(2020).

\bibitem{Berry_2014}
Dominic~W. Berry, Andrew~M. Childs, Richard Cleve, Robin Kothari, and Rolando~D. Somma.
\newblock ``Exponential improvement in precision for simulating sparse hamiltonians''.
\newblock \href{https://dx.doi.org/http://dx.doi.org/10.1145/2591796.2591854}{Proceedings of the 46th Annual ACM Symposium on Theory of Computing}~(2014).

\bibitem{Berry_2015_Hamiltonian}
Dominic~W. Berry, Andrew~M. Childs, and Robin Kothari.
\newblock ``Hamiltonian simulation with nearly optimal dependence on all parameters''.
\newblock In 2015 IEEE 56th Annual Symposium on Foundations of Computer Science.
\newblock \href{https://dx.doi.org/10.1109/focs.2015.54}{Page 792–809}.
\newblock IEEE~(2015).

\bibitem{pyqsp}
Isaac Chuang, Andrew Tan, and John~M Martyn.
\newblock ``Py{QSP}: Python {Q}uantum {S}ignal {P}rocessing''.
\newblock \url{https://github.com/ichuang/pyqsp}~(2021).

\bibitem{rivlin2020chebyshev}
Theodore~J Rivlin.
\newblock ``Chebyshev polynomials''.
\newblock Courier Dover Publications. ~(2020).

\bibitem{boyd2001chebyshev}
John~P Boyd.
\newblock ``Chebyshev and fourier spectral methods''.
\newblock Courier Corporation. ~(2001).

\bibitem{rs_analytic_poly_02}
Q.~I. Rahman and G.~Schmiesser.
\newblock ``Analytic theory of polynomials''.
\newblock Clarendon Press, Oxford. ~(2002).

\bibitem{arfken2011mathematical}
George~B Arfken, Hans~J Weber, and Frank~E Harris.
\newblock ``Mathematical methods for physicists: a comprehensive guide''.
\newblock Academic press. ~(2011).

\bibitem{oliver1978note}
J~Oliver.
\newblock ``A note on the signs of truncated chebyshev polynomials''.
\newblock \href{https://dx.doi.org/https://doi.org/10.1007/BF01931700}{BIT Numerical Mathematics {\bf 18}, 233--235}~(1978).

\end{thebibliography}
\bibliographystyle{quantum}

\appendix

\section{Implementation of Arbitrary Polynomials with Linear-Combination-of-Unitaries and Generalized QSP}\label{app:Arbitrary_polynomials}
As we mentioned in Sec.~\ref{sec:Overview_QSP}, QSP generates polynomials that are restricted to obey the conditions of Eq.~\eqref{eq:qsp_conditions}, such as having definite parity. However, there exist techniques to expand this class of polynomials and implement arbitrary polynomials, provided they are bounded in magnitude. 

One method of achieving this is with linear-combination-of-unitaries (LCU) circuits~\cite{Berry_2014, Berry_2015_Simulating, Berry_2015_Hamiltonian}. In this context, an LCU circuit is composed of controlled QSP sequences and allows one to sum together polynomials. This can be used to construct a polynomial of indefinite parity by summing together its even and odd components, as we illustrate in Fig.~\ref{fig:LCU_circuit}. The LCU construction is sequential, implying that for a polynomial of degree $d$, the requisite query depth is $2d$. However, an LCU circuit rescales the sum by a constant, and thus requires amplitude amplification or additional measurements to accommodate for this rescaling. In the context of parallel QSP, where block encodings are multiplied together, this rescaling will accumulate exponentially in the number of threads and increase the measurement cost.

\begin{figure}[htbp]
    \centering
    \includegraphics[width=0.98\linewidth]{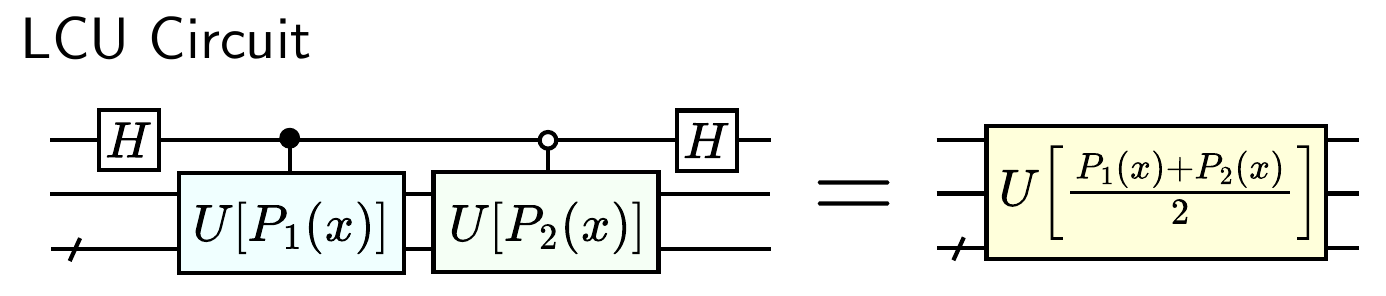}
    \caption{The linear-combination-of-unitaries (LCU) circuit that encodes a sum of two block encodings. Here $U [P_1(x)]$ and $ U [P_2(x)] $ block-encode polynomials $P_1(x)$ and $P_2(x)$, and are realized through QSP. Note how the sum of these polynomials is rescaled by a factor of $2$. Also observe that this construction is sequential, and thus doubles the query depth. }
    \label{fig:LCU_circuit}
\end{figure}

A more promising approach is the recently introduced construction of \emph{generalized QSP}~\cite{motlagh2023generalized}. Generalized QSP proposes a generalization of QSP sequence (Eq.~\eqref{eq:QSP_seqeunce}) that instead uses controlled block encoding:
\begin{equation}\label{eq:encodings_GQSP}
    \begin{bmatrix}
        U[A] & 0 \\
        0 & I 
    \end{bmatrix} , \qquad 
    \begin{bmatrix}
        I & 0 \\
        0 & U[A]^\dag 
    \end{bmatrix}. 
\end{equation}
By interspersing these with general SU(2) rotations (in contrast to $z$-rotations of Eq.~\eqref{eq:QSP_seqeunce}), Ref.~\cite{motlagh2023generalized} shows how the resulting sequence block-encodes a polynomials $P(U[A])$, restricted only by the constraint $\|P\|_{[-1,1]} \leq 1$. In particular, this construction requires $2d$ queries to the block encodings of Eq.~\eqref{eq:encodings_GQSP} to generate a degree-$d$ polynomial. This alleviates the parity constraint of ordinary QSP, while also avoiding the rescaling imposed by the LCU method. 

In the context of ordinary QSP, where we assume access to a block encoding, generalized QSP applies as follows. By a technique known as qubitization~\cite{Low_2019, Gily_n_2019, martyn2021grand}, which itself underlies QSP, there exists a privileged basis in which a block encoding of $A$ is equivalent to an $x$-rotation:
\begin{equation}
    U[A] = 
    \begin{bmatrix}
        A & i\sqrt{1-A^2} \\
        i\sqrt{1-A^2} & A
    \end{bmatrix} = e^{iX \arccos(A)} . 
\end{equation}
By then controlling on this block encoding and its inverse with an additional qubit, we can construct the inputs of Eq.~\eqref{eq:encodings_GQSP} needed for generalized QSP. Then, executing generalized QSP on this, we can encode a polynomial with coefficients $c_n$:
\begin{equation}
    P(U[A]) = \sum_{n=0}^d c_n U[A]^n = \sum_{n=0}^d c_n e^{iX n\arccos(A)} . 
\end{equation}
By projecting the remaining qubit into the $|0\rangle \langle 0 |$ element, we attain the polynomial \begin{equation}
\begin{aligned}
   \langle 0 | P(U[A]) | 0 \rangle & = \sum_{n=0}^d c_n \cos( n\arccos(A)) \\
   & = \sum_{n=0}^d c_n T_n(A) =: \tilde{P}(A)  
\end{aligned}
\end{equation}
where $T_n(x)$ is the order $n$ Chebyshev polynomial (see Appendix~\ref{app:bounds_partial_sum_polys} for review). As a linear combination of Chebyshev polynomials, this can represent an arbitrary degree-$d$ polynomial $\tilde{P}(A)$, provided it is bounded as $\| \tilde{P} \|_{[-1,1]} \leq 1$.

Together, these results are summarised as follows. A degree-$d$ polynomial of definite parity can be implemented directly through QSP with a query depth $d$. On the other hand, a degree-$d$ polynomial of indefinite parity can be implemented via generalized QSP at the expense of an additional control qubit, access to the inverse block-encoding unitary $U[A]^\dag$, and an increased query depth $2d$.

\section{Discussion of Code Implementation} \label{sec:dicussion_code_implementation}
The code for our implementation of parallel QSP can be found at Ref.~\cite{parallel_QSP}. With the goal of applying parallel QSP to the estimation of the property $w= \tr(P(\rho))$, this code takes as input a degree-$d$ polynomial $P(x)$ and an integer $k\geq 1$ corresponding to the number of threads over which to parallelize over. The code then executes the construction of Theorem~\ref{thm:parallel_qsp_prop_est_1}: it decomposes $P(x)$ into constituent polynomials as per Eq.~\eqref{eq:Poly_decomp}, assuming that $P(x)\geq 0$ is non-negative over the real axis. It then numerically finds the roots of $P_{\geq k}(x)$ to factorize it into $k$ factor polynomials according to Theorem~\ref{thm:parallel_qsp_poly_char}. This factorization is done arbitrarily and is not optimized to minimize the factorization constant $\mathcal{K}(P_{\geq k}) $; this optimization would be an interesting problem to study for future work.

For each resulting factor polynomials, we run a QSP phase finding algorithm. Using the $|+\rangle \langle + |$ block of a QSP sequence, this algorithm determines the QSP phases corresponding to the even/odd and real/imaginary components of the factor polynomials, such that the factor polynomials can be constructed via an LCU circuit. As the degree of each factor polynomial is reduced to $O(d/k)$, we can employ a simple phase finding algorithm mirroring that of Ref.~\cite{pyqsp}: we estimate the QSP phases by minimizing the sum of the squares error across $x\in [-1,1]$.

\section{Upper Bounds on the Magnitude of Constituent Polynomials} \label{app:bounds_partial_sum_polys}
Here we will prove bounds on properties of the constituent polynomials discussed in Sec.~\ref{sec:PQSP_for_Property_Estimation}. Consider a real degree-$d$ polynomial $P(x)$ that is bounded in magnitude by $1$ over $x \in [-1,1]$: 
\begin{equation}
   P(x) = \sum_{n=0}^d a_n x^n, \quad  \|P\|_{[-1,1]} \leq 1 .
\end{equation}
As in Sec.~\ref{sec:PQSP_for_Property_Estimation}, we split this polynomial into two constituent polynomials, $P_{< k} (x)$ and $P_{\geq k} (x)$, as
\begin{equation}
\begin{aligned}
    P(x) &= \sum_{n=0}^{k-1} a_n x^n + x^k \sum_{n=k}^d a_{n}x^{n-k} \\
    &=: P_{< k} (x) + x^k P_{\geq k} (x), 
\end{aligned}
\end{equation}
where
\begin{equation}\label{eq:partial_sum_polynomials_app}
\begin{aligned}
    & P_{< k} (x) := \sum_{n=0}^{k-1} a_n x^n, \quad P_{\geq k} (x) := \sum_{n=0}^{d-k} a_{n+k}x^{n}, 
\end{aligned}
\end{equation}
are the constituent polynomials of $P(x)$, of degree $k-1$ and $d-k$, respectively.

In order to estimate the trace $z = \tr( P(\rho))$ using parallel QSP, we estimate $w_{<k} = \tr( P_{< k}  (\rho))$ and $w_{\geq k} = \tr( P_{\geq k}  (\rho))$ separately. As shown in Sec.~\ref{sec:PQSP_for_Property_Estimation}, the measurement cost of obtaining these estimates depends on the magnitudes of the constituent polynomials $P_{< k} (x)$ and $P_{\geq k} (x)$. Although $\|P\|_{[-1,1]} \leq 1$, $P_{< k} (x)$ and $P_{\geq k} (x)$ are not necessarily bounded in magnitude by $1$. Below, we show that for any such polynomial $P(x)$, the constituent polynomials are bounded as 
\begin{equation}
    \begin{aligned}
        & \|P_{< k}\|_{[-1,1]} \leq O\Bigg( \frac{d^{k-1}}{(k-1)!} \Bigg) , \\
        & \|P_{\geq k}\|_{[-1,1]} \leq O\Bigg( \frac{d^k}{k!} \sqrt{\frac{d}{k}}\Bigg) . 
    \end{aligned}
\end{equation}
Therefore, $\|P_{< k}\|_{[-1,1]}, \|P_{\geq k} \|_{[-1,1]} \leq O({\rm poly}(d))$. This implies that the measurement cost of parallel QSP for property estimation is at worst polynomial in $d$.

In proving these bounds we will invoke the Chebyshev polynomials~\cite{rivlin2020chebyshev}. For context, recall that the $n$th Chebyshev polynomial $T_n(x)$ is a degree $n$ polynomial defined on $|x| \leq 1$ as
\begin{equation}
    T_n(x) = \cos(n \arccos(x)).
\end{equation}
It is well-established that $T_n(x)$ is a degree $n$ polynomial of fixed parity (either even or odd, depending on $n$) and bounded magnitude $|T_n(x)|_{[-1,1]} = 1$~\cite{boyd2001chebyshev, rivlin2020chebyshev}. They may be expressed as polynomials as
\begin{equation}
    T_n(x) = \sum_{j=0}^n t_{n,j} x^j,
\end{equation}
where $t_{n,j}$ are the corresponding coefficients.
It can also be shown that $T_n(x)$ can be re-expressed as
\begin{equation}\label{eq:Chebsyhev_reexpression}
    T_n(x) = \frac{1}{2} \Big( \big(x-\sqrt{x^2-1}\big)^n + \big(x+\sqrt{x^2-1} \big)^n \Big). 
\end{equation}

The Chebyshev polynomials provide a convenient basis for expanding functions on $x\in [-1,1]$. A function $F(x)$ can be expanded as
\begin{equation}
    F(x) = \frac{c_0}{2} + \sum_{n=1}^\infty c_n T_n(x) , 
\end{equation}
where
\begin{equation}
     c_n = \frac{2}{\pi} \int_{-1}^1 \frac{F(x) T_n(x)}{\sqrt{1-x^2}} dx 
\end{equation}
are the Chebyshev coefficients for all $n\geq 0$. These coefficients can be bounded as
\begin{equation}\label{eq:ChebyCoefBound}
    |c_n| \leq \frac{2}{\pi} \int_{-1}^1 \frac{\| F \|_{[-1,1]}}{\sqrt{1-x^2}} dx = \| F \|_{[-1,1]}. 
\end{equation}
Moreover, if $F(x)$ is degree-$d$ polynomial, then this series truncates at order $d$.

\subsection{Bound on the Magnitude of $P_{< k} (x)$}
To bound the magnitude of $P_{< k} (x)$, we begin by noting the result of the following theorems from Ref.~\cite{rs_analytic_poly_02}:
\begin{theorem}[Bound on Coefficients of Bounded Polynomials; Theorems~16.3.1 and~16.3.2 of Ref.~\cite{rs_analytic_poly_02}]\label{thm:CoefficientBound}
    Let $T_d(x) = \sum_{n=0}^d t_{d,n} x^n$ denote the Chebyshev polynomial of degree $d$. Let $P(x) = \sum_{n=0}^d a_n x^n$ be a degree-$d$ polynomial that is bounded at the Chebyshev nodes as
    \begin{equation}
        \begin{aligned}
            & \big|P\big(\cos(\tfrac{\pi \nu }{d})\big)\big| \leq 1 \quad  \text{for } \nu = 0, 1, ..., d , \\
            & \big|P\big(\cos \big(\tfrac{\pi \nu }{d-1}\big) \big)\big| \leq 1\  \text{for } \nu = 0, 1, ..., d-1 .
        \end{aligned}
    \end{equation}
    Then, for even $d$, the coefficients are bounded as
    \begin{equation}
        \begin{aligned}
            & |a_{2j}| \leq |t_{d,2j}| \\
            & |a_{2j-1} | \leq |t_{d-1,2j-1}|
        \end{aligned}
    \end{equation}
    for $j=0,1,..., d/2$. Analogously, for odd $d$, the coefficients are bounded as
    \begin{equation}
        \begin{aligned}
            & |a_{2j}| \leq |t_{d-1,2j}| \\
            & |a_{2j-1} | \leq |t_{d,2j-1}|
        \end{aligned}
    \end{equation}
    for $j=0,1,..., (d-1)/2$. Equality is achieved for $P(x) = T_d(x)$.
\end{theorem}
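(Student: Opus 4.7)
The plan is to reduce the coefficient bound to two extremal problems on polynomials of definite parity — one per node set — and attack each via Lagrange interpolation at Chebyshev-Lobatto nodes. First I would split $P(x) = P_e(x) + P_o(x)$ into even and odd parts. Both hypothesized node sets are closed under $x \mapsto -x$, since nodes of the form $\cos(\pi\nu/N)$ come in sign-symmetric pairs $\cos(\pi\nu/N) = -\cos(\pi(N-\nu)/N)$. Averaging $P(x_\nu) \pm P(-x_\nu)$ therefore yields $|P_e(x_\nu)|, |P_o(x_\nu)| \leq 1$ at both node sets. Since $a_{2j}$ lives only in $P_e$ and $a_{2j-1}$ only in $P_o$, the theorem splits into two parallel subproblems whose comparison polynomials ($T_d$ or $T_{d-1}$) are selected by parity, with the correct node set being the one indexed by the same integer as the comparison polynomial.

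Second, I would handle each subproblem via the substitution $v = 2x^2 - 1 = T_2(x)$, which folds a definite-parity polynomial onto a polynomial of halved degree on $[-1,1]$ (with an extra factor of $x$ pulled out in the odd case). Under this substitution, the distinct images of the original Chebyshev-Lobatto nodes are precisely the Chebyshev-Lobatto nodes of the halved degree, and their count matches the folded polynomial's degree plus one. Lagrange interpolation is therefore exact, and undoing the substitution expresses each $a_n$ as a linear functional of the values $P(x_\nu)$ with coefficients $c_{\nu,n}$ depending only on the nodes. The triangle inequality then yields $|a_n| \leq \sum_\nu |c_{\nu,n}|$.

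The main obstacle is to identify $\sum_\nu |c_{\nu,n}|$ with $|t_{d,n}|$ (or $|t_{d-1,n}|$), i.e.\ to show that the comparison Chebyshev polynomial saturates the Lagrange bound simultaneously at every monomial coefficient of matched parity. Since $T_d(x_\nu) = (-1)^\nu$ at $x_\nu = \cos(\pi\nu/d)$, saturation reduces to the identity $\mathrm{sgn}(c_{\nu,n}) = (-1)^\nu$ for all admissible $\nu, n$. I would prove this from the closed-form expansion of the Lagrange basis at Chebyshev-Lobatto nodes furnished by discrete orthogonality of Chebyshev polynomials at those nodes: $\ell_\nu(v)$ admits a cosine-sum representation in terms of products $T_m(v_\nu) T_m(v)$ with endpoint terms halved. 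Expanding each $T_m$ into monomials and tracking parity, the alternating factor $T_m(v_\nu) = \cos(2\pi m\nu/d)$ produces exactly the required sign pattern. This sign-matching is the genuine content of the theorem; once it is in hand, the bounds $|a_n| \leq |t_{d,n}|$ and $|a_n| \leq |t_{d-1,n}|$ are witnessed by equality at $P = T_d$ and $P = T_{d-1}$ respectively, and a Markov-type uniqueness argument (anchored on the classical Chebyshev extremal theorem for the leading coefficient) provides an alternative route to the same sign identity should the direct computation prove unwieldy.
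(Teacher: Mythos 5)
You should first note that the paper does not prove this statement at all: it is quoted verbatim as Theorems~16.3.1 and~16.3.2 of the cited reference (this is V.~A.~Markov's classical coefficient theorem), so your attempt is being measured against the classical proof rather than anything in the paper. Your skeleton is the right one and matches the classical strategy: split into even and odd parts using the sign-symmetry of the node sets, represent the relevant part by exact Lagrange interpolation at the appropriate node family, bound each coefficient by the triangle inequality, and identify the resulting bound with the corresponding Chebyshev coefficient by showing the interpolation functionals have the alternating sign pattern $\mathrm{sgn}(c_{\nu,n}) = (-1)^\nu$ (up to a fixed $n$-dependent sign), so that $P = T_d$ or $T_{d-1}$ saturates.

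The genuine gap is exactly the step you flag as "the genuine content": the sign identity is asserted, not proven, and the proposed justification does not deliver it. After inserting the discrete-orthogonality representation of the Lagrange basis and composing with $T_2$, the coefficient functional takes the form
\begin{equation}
  c_{\nu,2j} \;\propto\; w_\nu \sideset{}{''}\sum_{m\ge j} (-1)^{m-j}\,\big|t_{2m,2j}\big|\,\cos\!\big(\tfrac{\pi m \nu}{N}\big),
\end{equation}
and the claim that this alternates in sign with $\nu$ is \emph{not} a parity-bookkeeping fact: for generic nonnegative weights in place of $|t_{2m,2j}|$ the alternation fails (e.g.\ weight concentrated on a single interior $m$ gives sign pattern $\cos(\pi m\nu/N)$, which is not $(-1)^\nu$). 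The alternation is a quantitative statement about the magnitudes of the Chebyshev coefficients, and establishing it is precisely where Markov's proof (and the textbook proof you are implicitly reconstructing) does real work, e.g.\ by analyzing the coefficients of $(1-x^2)T_d'(x)/(x-\eta_\nu)$ directly. Your fallback via "uniqueness anchored on the Chebyshev extremal theorem for the leading coefficient" also does not close this: the leading-coefficient case is classical, but extending it to interior coefficients \emph{is} the theorem, and uniqueness of the leading-coefficient extremizer does not imply the sign identity for the other functionals. Two secondary inaccuracies: the folded images of the nodes are the Chebyshev--Lobatto nodes of halved degree only in the parity-matched case (for the opposite-parity part, after pulling out the factor $x$, the folded nodes are $\cos(2\pi\nu/(d-1))$ or $\cos(2\pi\nu/d)$, which are not a Lobatto family, so the orthogonality formula you invoke for $\ell_\nu$ does not apply there even though the node count still suffices for exact interpolation); and pulling out the factor $x$ changes the nodal bound to $|R(v_\nu)|\le 1/|\eta_\nu|$, so the functional must be kept in terms of $P_o(\eta_\nu)$ rather than the folded values — fixable, but it needs saying.
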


Applicability of this theorem demands that $P(x)$ be bounded by $1$ at the Chebyshev nodes; this condition is satisfied for a bounded polynomial $\|P\|_{[-1,1]} \leq 1$ as we consider here. Therefore, Theorem~\ref{thm:CoefficientBound} indicates that the coefficients of any such bounded polynomial are necessarily upper bounded in magnitude by the coefficients of Chebyshev polynomials. To employ this result in practice, it will be useful to also show that the coefficients of the Chebyshev polynomials scale as $|t_{d,n}| \leq O\big(\tfrac{d^n}{n!} \big)$:
\begin{lemma}[Bound on Coefficients of the Chebyshev Polynomials]\label{thm:ChebyshevCoefficientBound}
    The coefficients of the Chebyshev polynomials are bounded in magnitude as
    \begin{equation}
        |t_{d,n}| \leq \frac{(d+n)^n}{n!} = O\left( \frac{d^n}{n!} \right)
    \end{equation}
\end{lemma}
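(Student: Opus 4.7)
My plan is to reduce the claim to an explicit computation using the standard closed-form expression for the Chebyshev coefficients. The starting point will be the identity
\[
T_d(x) = \frac{d}{2}\sum_{k=0}^{\lfloor d/2\rfloor}\frac{(-1)^k (d-k-1)!}{k!\,(d-2k)!}\,(2x)^{d-2k},
\]
which can be derived either from the three-term recurrence $T_{d+1}=2xT_d-T_{d-1}$ by induction, or by binomially expanding the closed form $T_d(x)=\tfrac{1}{2}\bigl[(x-\sqrt{x^2-1})^d+(x+\sqrt{x^2-1})^d\bigr]$ recalled in Eq.~\eqref{eq:Chebsyhev_reexpression} and collecting the surviving even powers of $\sqrt{x^2-1}$.

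With this identity in hand, I would extract $t_{d,n}$ by setting $n=d-2k$. This immediately gives $t_{d,n}=0$ whenever $d$ and $n$ have opposite parity (in which case the bound is trivial), and otherwise a short rearrangement yields
\[
|t_{d,n}| = \frac{d}{d+n}\cdot\frac{2^n}{n!}\cdot\frac{\bigl((d+n)/2\bigr)!}{\bigl((d-n)/2\bigr)!}.
\]
The key observation is then that the factorial ratio $((d+n)/2)!/((d-n)/2)!$ is a product of exactly $n$ consecutive positive integers, each at most $(d+n)/2$, so it is bounded above by $((d+n)/2)^n$. The factor $2^n$ from the bound combines with the $2^{-n}$ hidden in this upper estimate to give $(d+n)^n$, and the harmless prefactor $d/(d+n)\le 1$ is discarded. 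This produces the claimed inequality $|t_{d,n}|\le (d+n)^n/n!$. The asymptotic form $O(d^n/n!)$ then follows from the fact that any nonzero coefficient requires $n\le d$, so $(d+n)^n\le (2d)^n=2^n d^n$ with the constant $2^n$ absorbed into the $O(\cdot)$.

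I expect the only real obstacle to be bookkeeping: selecting a form of the coefficient formula in which the ratio $((d+n)/2)!/((d-n)/2)!$ appears cleanly, and handling the parity condition $d\equiv n\pmod 2$ so the vanishing cases are correctly accounted for. There is no genuine analytic difficulty. A possible alternative route would be to apply Cauchy's integral formula $t_{d,n}=\tfrac{1}{2\pi i}\oint T_d(z)z^{-n-1}dz$ on a circle $|z|=R>1$, combined with the bound $|T_d(z)|\le (R+\sqrt{R^2-1})^d$, and then optimize $R\approx 1+n/d$. This produces a bound of the same order, but it obscures where the explicit factor $(d+n)^n/n!$ comes from and yields a worse constant, so I would prefer the elementary route above.
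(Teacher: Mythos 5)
Your proof is correct and follows essentially the same route as the paper: both start from the explicit coefficient formula $t_{d,n} = (-1)^{(d-n)/2} 2^{n-1} d \,\frac{((d+n)/2-1)!}{((d-n)/2)!\,n!}$ (the paper cites it, you derive it from the standard sum/closed form) and then bound the factorial ratio, your "product of $n$ consecutive integers each at most $(d+n)/2$" being exactly the paper's step $\binom{(d+n)/2}{n} \leq \frac{((d+n)/2)^n}{n!}$. The only cosmetic difference is that you pull out $((d+n)/2)!/((d-n)/2)!$ rather than a binomial coefficient; the final inequality and the (equally informal, $n$-fixed) passage to $O(d^n/n!)$ match the paper.
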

\begin{proof}
    Explicitly, the Chebyshev coefficients are~\cite{arfken2011mathematical}
    \begin{equation}\label{eq:ChebyPolyCoef}
        t_{d,n} = (-1)^{\frac{d-n}{2}} 2^{n-1} d \frac{\big( \frac{d+n}{2}-1\big)!}{\big( \frac{d-n}{2}\big) !\  n!} .
    \end{equation}
    This is upper bounded as 
    \begin{equation}
    \begin{aligned}
        |t_{d,n}| &= 2^{n-1} d \frac{\big( \frac{d+n}{2}-1\big)!}{\big( \frac{d-n}{2}\big) !\ n!} = \frac{2^{n-1} d}{\frac{d+n}{2}} {\frac{d+n}{2} \choose n} \\
        &\leq 2^n \frac{\big( \tfrac{d+n}{2} \big)^n} {n!} = \frac{(d+n)^n}{n!} = O\left( \frac{d^n}{n!} \right),
    \end{aligned}
    \end{equation}
    where we have used that $ {a \choose b} \leq \frac{a^b}{b!}$. 
\end{proof}

We can then merge Theorem~\ref{thm:CoefficientBound} and Lemma~\ref{thm:ChebyshevCoefficientBound} to bound $P_{< k} (x)$ as follows:
\begin{theorem}\label{thm:P^1_Bound}
    For any degree-$d$ polynomial $P(x)$ that is bounded as $\max_{x \in [-1,1]} |P(x)| \leq 1 $, its constituent polynomial $P_{< k} (x)$ (as defined in Eq.~\ref{eq:partial_sum_polynomials_app}) is necessarily bounded as
    \begin{equation}
        \| P_{< k} \|_{[-1,1]} \leq  O\left( \frac{d^{k-1}}{(k-1)!} \right) = {\rm poly}(d)  . 
    \end{equation}
\end{theorem}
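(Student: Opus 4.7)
The plan is to combine the two results immediately preceding the statement. Theorem C.1 gives, coefficient by coefficient, $|a_n|\leq |t_{d,n}|$ or $|t_{d-1,n}|$ whenever $\|P\|_{[-1,1]}\leq 1$, and Lemma C.2 converts those Chebyshev coefficients into the explicit bound $|t_{d,n}|\leq (d+n)^n/n! = O(d^n/n!)$. From there, bounding $P_{<k}(x)$ on $[-1,1]$ via the triangle inequality reduces the problem to estimating a truncated exponential-like sum, which will be dominated (up to a constant) by its last term.

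Concretely, first I would fix $x\in[-1,1]$ and write
\begin{equation}
|P_{<k}(x)| \leq \sum_{n=0}^{k-1}|a_n|\,|x|^n \leq \sum_{n=0}^{k-1}|a_n|,
\end{equation}
using $|x|^n\leq 1$. Next, invoke Theorem C.1 to replace each $|a_n|$ by the larger of $|t_{d,n}|$ and $|t_{d-1,n}|$, and then Lemma C.2 to bound each of these by $(d+n)^n/n! \leq (2d)^n/n!$ (for $n\leq d$). This yields
\begin{equation}
\|P_{<k}\|_{[-1,1]} \;\leq\; \sum_{n=0}^{k-1} \frac{(2d)^n}{n!}.
\end{equation}

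The final step is to check that this truncated sum is indeed $O(d^{k-1}/(k-1)!)$. The key observation is that the sequence $(2d)^n/n!$ is monotonically increasing in $n$ as long as $n\leq 2d$; in the regime of interest ($k\leq d$, and in particular $k\ll d$) this forces the last term to dominate. Factoring it out,
\begin{equation}
\sum_{n=0}^{k-1}\frac{(2d)^n}{n!} \;=\; \frac{(2d)^{k-1}}{(k-1)!}\sum_{j=0}^{k-1}\frac{(k-1)!}{(k-1-j)!}\,(2d)^{-j},
\end{equation}
and bounding $(k-1)!/(k-1-j)!\leq (k-1)^j$ shows that the ratio to the last term is at most $\sum_{j\geq 0}\bigl((k-1)/(2d)\bigr)^j\leq 2$ once $d\geq k-1$. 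Hence $\|P_{<k}\|_{[-1,1]}\leq 2\cdot(2d)^{k-1}/(k-1)! = O(d^{k-1}/(k-1)!)$, as claimed.

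The proof is essentially bookkeeping once Theorem C.1 and Lemma C.2 are in hand; the only place that requires mild care is the last step, where one must verify that summing $k$ geometric-like terms does not inflate the order beyond the leading term. This is the ``main obstacle'' in the sense of being the sole non-mechanical estimate, but it is controlled cleanly by the observation that $(2d)^n/n!$ is increasing for $n\leq 2d$, so the partial sum is within a universal constant of its maximal summand whenever $k\leq d$.
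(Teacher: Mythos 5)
Your proposal is correct and follows essentially the same route as the paper's proof: triangle inequality on $[-1,1]$, then Theorem~\ref{thm:CoefficientBound} to dominate $|a_n|$ by Chebyshev coefficients, then Lemma~\ref{thm:ChebyshevCoefficientBound} to bound those by $O(d^n/n!)$, with the partial sum absorbed into its last term. The only difference is that you verify explicitly (via the geometric-ratio argument, valid for $k-1\leq d$) that the truncated sum is within a constant of its leading term, a step the paper states without detail.
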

\begin{proof}
    To bound the magnitude of $P_{< k} (x)$, we first note that the triangle inequality implies
    \begin{equation}
        \| P_{< k} \|_{[-1,1]}  = \max_{x \in [-1,1]} \left| \sum_{n=0}^{k-1} a_n x^n \right| \leq \sum_{n=0}^{k-1} |a_n| .
    \end{equation}
    In conjunction with Theorem~\ref{thm:CoefficientBound} and Lemma~\ref{thm:ChebyshevCoefficientBound}, we can bound this as
    \begin{equation}
        \sum_{n=0}^{k-1} |a_n| \leq \sum_{n=0}^{k-1} |t_{d,n}| \leq \sum_{n=0}^{k-1} O\left(\frac{d^{n}}{n!} \right) = O\left(\frac{d^{k-1}}{(k-1)!} \right). 
    \end{equation}
\end{proof}

Therefore, $P_{< k} (x)$ is necessarily bounded in magnitude by $O(d^{k-1}/(k-1)!)$. While a precise bound on $P_{< k} (x)$ depends on the coefficients of the polynomial $P(x)$ (and can grow slower than $O(d^{k-1}/(k-1)!)$), Theorem~\ref{thm:P^1_Bound} indicates that even in the worst case, the magnitude of $P_{< k} (x)$ only grows polynomially in $d$. We used this result in Sec.~\ref{sec:PQSP_for_Property_Estimation} to prove that the estimation of $w_{<k} = \tr( P_{< k}  (\rho))$ requires at most $\text{poly}(d)$ measurements.

\subsection{Bound on the Magnitude of $P_{\geq k} (x)$}
To bound $P_{\geq k} (x)$, it will not suffice to consider the sum of the magnitudes of the corresponding coefficients, as we did for $P_{< k} (x)$. In general, this sum can grow exponentially with $d$, which we aim to avoid. For example, for the Chebyshev polynomials, this sum is $\sum_n |t_{d,n}| = 2^{O(d)}$.

Instead, we will derive our result by first considering the constituent polynomials of the Chebyshev polynomials:
\begin{equation}
    T_n(x) =: T_{n;< k}(x) + x^k T_{n; \geq k}(x). 
\end{equation}
We will also consider the polynomial constructed from truncating the low order terms of a Chebyshev polynomial:
\begin{equation}
    \mathcal{T}_{n;k}(x) := \sum_{j=k}^n t_{n,j} x^j
\end{equation}
This is related to the constituent polynomial $T_{n; \geq k}(x)$ as
\begin{equation}\label{eq:truncated_Cheby_relation}
    T_{n; \geq k}(x) = \mathcal{T}_{n;k}(x)/x^k. 
\end{equation}
Because the Chebyshev polynomials have fixed parity, it is only relevant to consider $\mathcal{T}_{n;k}$ for $n$ and $k$ of the same parity (i.e., both either even or odd). Ref.~\cite{oliver1978note} studied these truncated Chebyshev polynomials and showed they have definite sign:
\begin{theorem}[Main Result of Ref.~\cite{oliver1978note}]\label{thm:TruncatedChebySign}
    The truncated Chebyshev polynomials $\mathcal{T}_{n;k}(x)$ have definite sign over $x \in [0,1]$:
    \begin{equation}
    \begin{aligned}
        & (-1)^l \mathcal{T}_{n;n-2l}(x) \geq 0 \text{ for } x \in [0,1],
    \end{aligned}
    \end{equation}
    for all $l=0,1,...,\lfloor \frac{n}{2} \rfloor -1 $. 
\end{theorem}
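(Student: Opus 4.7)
The plan is to prove $(-1)^l \mathcal{T}_{n;n-2l}(x) \geq 0$ on $[0,1]$ by induction on $n$, with the case $l = 0$ serving as an immediate substep in every $n$: by Eq.~\eqref{eq:ChebyPolyCoef}, $\mathcal{T}_{n;n}(x) = t_{n,n} x^n = 2^{n-1} x^n \geq 0$ on $[0,1]$. The small-$n$ base cases $n \in \{2, 3, 4\}$ for the full claim can be verified by direct computation (e.g., $\mathcal{T}_{4;2}(x) = 8x^2(x^2-1) \leq 0$ on $[0,1]$, matching sign $(-1)^1$).

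For the inductive step, I would exploit the Chebyshev recurrence $T_n(x) = 2x T_{n-1}(x) - T_{n-2}(x)$, which at the coefficient level gives $t_{n,j} = 2 t_{n-1,j-1} - t_{n-2,j}$. Summing this identity over $j \geq k$ translates into a recurrence for the truncated polynomials themselves,
\begin{equation*}
    \mathcal{T}_{n;k}(x) = 2x\, \mathcal{T}_{n-1;k-1}(x) - \mathcal{T}_{n-2;k}(x).
\end{equation*}
Setting $k = n - 2l$ and multiplying by $(-1)^l$ yields
\begin{equation*}
    (-1)^l \mathcal{T}_{n;n-2l}(x) = 2x \cdot (-1)^l \mathcal{T}_{n-1;(n-1)-2l}(x) + (-1)^{l-1}\, \mathcal{T}_{n-2;(n-2)-2(l-1)}(x),
\end{equation*}
so both summands on the right are non-negative on $[0,1]$ by the inductive hypothesis applied at the index pairs $(n-1, l)$ and $(n-2, l-1)$, \emph{provided} each lies within the admissible range $l' \leq \lfloor n'/2 \rfloor - 1$.

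The main obstacle is a single boundary case where the inductive hypothesis fails to cover $(n-1, l)$. A parity check shows this happens only for even $n$ at the uppermost admissible value $l = n/2 - 1$; for odd $n$, both index pairs $(n-1, l)$ and $(n-2, l-1)$ stay within range for every admissible $l$, and the recurrence closes on its own. In the even boundary case $k = 2$, one has $t_{n,0} = T_n(0) = (-1)^{n/2}$, so $\mathcal{T}_{n;2}(x) = T_n(x) - (-1)^{n/2}$ and
\begin{equation*}
    (-1)^l \mathcal{T}_{n;2}(x) = 1 - (-1)^{n/2} T_n(x) \geq 0,
\end{equation*}
which follows immediately from the universal bound $|T_n(x)| \leq 1$ on $[-1,1]$. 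The admissibility constraint $l \leq \lfloor n/2 \rfloor - 1$ is sharp: beyond it, $\mathcal{T}_{n;n-2l}(x)$ incorporates the constant (for even $n$) or linear (for odd $n$) term of $T_n(x)$ and inherits the oscillatory character of the full Chebyshev polynomial, precluding sign-definiteness.
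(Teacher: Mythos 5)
Your proof is correct, but note that the paper itself offers no proof of this statement to compare against: Theorem~\ref{thm:TruncatedChebySign} is imported verbatim as the main result of Ref.~\cite{oliver1978note}, so you have supplied a self-contained argument where the paper relies on a citation. Your key identity $\mathcal{T}_{n;k}(x) = 2x\,\mathcal{T}_{n-1;k-1}(x) - \mathcal{T}_{n-2;k}(x)$ follows correctly from the coefficient-level form of Eq.~\eqref{eq:Cheby_recurrence}, and it is exactly $x^k$ times the recurrence in Eq.~\eqref{eq:truncated_recurrence} that the paper itself uses (together with this sign theorem) in the induction proving Corollary~\ref{thm:Cheby_2MagnitudeBound}; so your argument is stylistically consonant with the paper and, if inserted, would make Appendix~C self-contained rather than dependent on Oliver's note. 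I checked the bookkeeping that carries the induction: for $l\geq 1$ the two daughter pairs are $(n-1,l)$ and $(n-2,l-1)$, the second is always admissible since $l-1 \leq \lfloor (n-2)/2\rfloor -1$ whenever $l \leq \lfloor n/2 \rfloor -1$, and the first fails only for even $n$ at $l = n/2-1$, i.e.\ $k=2$, where your direct evaluation $(-1)^{n/2-1}\mathcal{T}_{n;2}(x) = 1-(-1)^{n/2}T_n(x) \geq 0$ (using $t_{n,0}=T_n(0)=(-1)^{n/2}$ and $|T_n|\leq 1$) closes the gap; the $l=0$ case $\mathcal{T}_{n;n}(x)=2^{n-1}x^n\geq 0$ covers the remaining daughter, and the sign convention matches Eq.~\eqref{eq:ChebyPolyCoef} since $\operatorname{sign}(t_{n,n-2l})=(-1)^l$. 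Two cosmetic remarks: your explicit base cases $n\in\{2,3,4\}$ are essentially subsumed by the two generic special cases ($l=0$ and the even-$n$, $k=2$ boundary), and your closing sharpness claim (that sign-definiteness fails beyond $l = \lfloor n/2\rfloor -1$) is an unproven aside not needed for the theorem, so it should be flagged as such or dropped.
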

This implies that the sign of $\mathcal{T}_{n;n-2l}(x)$ over $x \in [0,1]$ is equal to the sign of the coefficient $t_{n,n-2l}$. In the region $x \in [-1,0]$, the sign of is determined by the parity of $\mathcal{T}_{n;n-2l}(x)$, or equivalently the parity of $n$. We can use this result to prove the following bound on $T_{n; \geq k}(x)$.

\begin{corollary}[Maximum of $T_{n; \geq k}(x)$]\label{thm:Cheby_2MagnitudeBound}
For $n$ and $k$ of the same parity (i.e., both even or odd), the maximum magnitude of the constituent polynomial $T_{n; \geq k}(x)$ over $x \in [-1,1]$ occurs at $x=0$ and takes the value:
\begin{equation}
    \max_{x \in [-1,1]}|T_{n; \geq k}(x)| = |T_{n; \geq k}(0)| = |t_{n,k}| = O\left(\frac{n^k}{k!} \right) .
\end{equation}
\end{corollary}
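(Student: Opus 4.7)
The plan is to exploit Oliver's sign theorem (Theorem~\ref{thm:TruncatedChebySign}) not once but twice: once at index $k$ to pin down the sign of $T_{n;\geq k}$, and once at the shifted index $k+2$ to control its monotonicity. First I would note that, since $n$ and $k$ share the same parity, every term in
\begin{equation}
    T_{n;\geq k}(x) = \sum_{m=0}^{(n-k)/2} t_{n,k+2m}\, x^{2m}
\end{equation}
is of even degree, so $T_{n;\geq k}$ is an even polynomial and it suffices to work on $[0,1]$. The value at zero is immediate, $T_{n;\geq k}(0) = t_{n,k}$, which matches the claimed formula for the maximum, and the size estimate $|t_{n,k}| = O(n^k/k!)$ is exactly Lemma~\ref{thm:ChebyshevCoefficientBound}.

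The substance is the inequality $|T_{n;\geq k}(x)| \leq |t_{n,k}|$ on $[0,1]$. I would set
\begin{equation}
    h(x) := (-1)^{(n-k)/2} T_{n;\geq k}(x),
\end{equation}
which by Oliver's theorem applied at index $k$, combined with positivity of $x^k$ on $[0,1]$ and the relation~\eqref{eq:truncated_Cheby_relation}, is non-negative on $[0,1]$ with $h(0) = |t_{n,k}|$. Using the explicit sign pattern of $t_{n,k+2m}$ from Eq.~\eqref{eq:ChebyPolyCoef} (which is $(-1)^{(n-k)/2}(-1)^m$ times a positive number), I would establish the key index-shift identity
\begin{equation}
    h(0) - h(x) = x^2 \cdot (-1)^{(n-k-2)/2}\, T_{n;\geq k+2}(x)
\end{equation}
by writing out $h(x)-h(0)$ as an alternating sum, shifting $m \to m+1$, and recognising the remainder as $T_{n;\geq k+2}$ up to the stated overall sign.

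To close the argument I would invoke Oliver's theorem a second time, now at index $k+2$: this is valid whenever $k+2 \leq n$, and the parity constraint guarantees we land in Oliver's admissible range $l \in \{0, 1, \ldots, \lfloor n/2\rfloor - 1\}$. The theorem then gives $(-1)^{(n-k-2)/2} T_{n;\geq k+2}(x) \geq 0$ on $[0,1]$, so the right-hand side of the identity above is non-negative, hence $h(x) \leq h(0) = |t_{n,k}|$. Combined with $h \geq 0$, this yields $|T_{n;\geq k}(x)| \leq |t_{n,k}|$ on $[0,1]$, and evenness extends it to $[-1,1]$. The degenerate case $k=n$ is trivial since $T_{n;\geq n}$ is the constant $t_{n,n}$.

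The main obstacle I anticipate is a bookkeeping one, namely verifying the alternating-sign identity cleanly: one must track the factor $(-1)^{(n-k)/2}$ carefully through the reindexing $m \to m+1$ so that the remainder polynomial is recognised precisely as $T_{n;\geq k+2}$ with the sign that makes Oliver's theorem applicable. Once this reindexing is carried out, the proof is essentially a two-line consequence of Theorem~\ref{thm:TruncatedChebySign} and Lemma~\ref{thm:ChebyshevCoefficientBound}.
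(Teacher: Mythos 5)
Your route is genuinely different from the paper's. The paper proves the bound by induction on $n$: it translates the three-term recurrence $T_{n+1}(x)=2xT_n(x)-T_{n-1}(x)$ into $T_{n+1;\geq k}(x)=2T_{n;\geq k-1}(x)-T_{n-1;\geq k}(x)$, uses Theorem~\ref{thm:TruncatedChebySign} to argue the two terms on the right have constant, opposite signs on $[-1,1]$, and then combines the triangle inequality with the inductive hypothesis (plus separate handling of the boundary indices $k=n$, $k=n+1$). You instead give a direct, non-inductive argument: writing $h(x)=(-1)^{(n-k)/2}T_{n;\geq k}(x)=\sum_m(-1)^m|t_{n,k+2m}|x^{2m}$ (which follows from the sign pattern in Eq.~\eqref{eq:ChebyPolyCoef}), the reindexing $m\mapsto m+1$ indeed yields $h(0)-h(x)=x^2\,(-1)^{(n-k-2)/2}T_{n;\geq k+2}(x)$, and Oliver's theorem at index $k+2$ (admissible whenever $k+2\leq n$, as you checked) makes the right-hand side nonnegative, giving $h(x)\leq h(0)=|t_{n,k}|$ in one step. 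Combined with Lemma~\ref{thm:ChebyshevCoefficientBound} this matches the paper's conclusion, and it makes the mechanism more transparent: the deficit $h(0)-h(x)$ is itself $x^2$ times a sign-definite truncated Chebyshev polynomial, so no induction or base-case bookkeeping is needed.

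One concrete gap remains, at the lower bound. Your two-sided estimate $|T_{n;\geq k}(x)|\leq|t_{n,k}|$ needs not only $h(x)\leq h(0)$ but also $h(x)\geq -h(0)$, which you obtain from ``$h\geq 0$ by Oliver at index $k$.'' That invocation requires $l=(n-k)/2\leq\lfloor n/2\rfloor-1$, i.e.\ $k\geq 2$ for even $n$ and $k\geq 3$ for odd $n$. For the edge cases $k=0$ ($n$ even) and $k=1$ ($n$ odd) the hypothesis is not merely outside Oliver's stated range, it is false: $T_{n;\geq 0}(x)=T_n(x)$ and $T_{n;\geq 1}(x)=T_n(x)/x$ change sign on $[0,1]$, so $h$ is not nonnegative there, and the cheap repair $h(x)\geq h(0)-x^2|t_{n,k+2}|$ is useless since $|t_{n,k+2}|\gg|t_{n,k}|$. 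The corollary itself still holds in these cases, but by separate elementary facts: for $k=0$, $\|T_n\|_{[-1,1]}=1=|t_{n,0}|=|T_n(0)|$; for $k=1$, $|T_n(x)/x|\leq n=|t_{n,1}|$, equivalently $|\sin n\theta|\leq n|\sin\theta|$. Add these two cases explicitly (the paper's inductive proof is similarly silent about them) and your argument is complete; for all remaining $k$ of the correct parity your sign bookkeeping and admissibility checks are sound.
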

\begin{proof}
    The proof will proceed by induction on increasing $n$. First, note that for $n$ and $k$ of the same parity, the partial sum polynomial $T_{n; \geq k}(x)$ consists of only even powers, and hence is an even function. According to Theorem~\ref{thm:TruncatedChebySign} and Eq.~\eqref{eq:truncated_Cheby_relation}, this function is of constant sign over $-1 \leq x \leq 1$, and this sign is $\text{sign}(t_{n,k}) = (-1)^{\frac{n-k}{2}}$. 

    Moving to the proof by induction, we will make the inductive hypothesis that the maximum magnitudes of the constituent polynomials are achieved at $x=0$: 
    \begin{equation}
        \forall n' \leq n, \ |T_{n';\geq k}(0)| \geq |T_{n';\geq k}(x)|, 
    \end{equation}
    for all $k=0,2,..., n'$ if $n'$ is even, or $k=1, 3, ..., n'$ if $n'$ is odd. 
    
    Let us show that the base cases $n'=0$ and $n'=1$ are satisfied. For $n'=0$, we have $|T_{0; \geq 0}(x)| = 1 \leq |T_{0;\geq 0}| = 1$ is satisfied trivially. For $n'=1$, we similarly have $|T_{1;\geq 1}(x)| = |t_{1,1}| \leq |T_{1; \geq 0}(0)| = |t_{1,1}|$. It is then straightforward to consider larger values of $n'$, such as $n'=2$ in which case we have
    \begin{equation}
        \begin{aligned}
            & |T_{2; \geq 0}(x)| \leq 1 = |T_{2; \geq 0}(0)| ,  \text{ and} \\
            & |T_{2; \geq 2}(x)| = |t_{2,2}| \leq |T_{2 ;\geq 2}(0)| = |t_{2,2}|. 
        \end{aligned}
    \end{equation}
    at $k=0$ and $k=2$, respectively. 

    Then, supposing that the inductive hypothesis is true up to $n'=n$, we can show that it is also true for $n'=n+1$. To prove this, note that the Chebyshev polynomials obey the recursion relation 
    \begin{equation}\label{eq:Cheby_recurrence}
        T_{n+1}(x) = 2x T_n(x) - T_{n-1}(x).
    \end{equation}
    This implies that, for $k$ of the same parity as $n+1$, the constituent polynomials obey
    \begin{equation}\label{eq:truncated_recurrence}
        T_{n+1; \geq k}(x) = 2 T_{n ; \geq k-1}(x) - T_{n-1; \geq k}(x)
    \end{equation}
    for $k\leq n-1$ (which ensures that the polynomial $T_{n-1; \geq k}(x)$ exists). As per Theorem~\ref{thm:TruncatedChebySign} and Eq.~\eqref{eq:truncated_Cheby_relation}, $T_{n; \geq k-1}(x)$ and $T_{n-1; \geq k}(x)$ have constant and opposite sign over $-1 \leq x \leq 1$. This implies that
    \begin{equation}
    \begin{aligned}
        |T_{n+1; \geq k}(x)| &\leq 2 |T_{n; \geq k-1}(x)| + |T_{n-1; \geq k}(x)| \\
        &\leq 2 |T_{n; \geq k-1}(0)| + |T_{n-1; \geq k}(0)| \\
        &= |2T_{n; \geq k-1}(0) - T_{n-1; \geq k}(0)| \\
        &= |T_{n+1; \geq k}(0)|,
    \end{aligned}
    \end{equation}
    where this first inequality is an application of the triangle inequality, the second inequality is the inductive hypothesis, and the last equalities follow from the constant sign of the constituent polynomials. This holds true for $k=0,..., n-1$ as per Eq.~\eqref{eq:truncated_recurrence}. We can prove the remaining cases $k=n$ and $k=n+1$ as follows. For $k=n$, the recurrence relation of Eq.~\eqref{eq:Cheby_recurrence} corresponds to 
    \begin{equation}
        T_{n+1; \geq n}(x) = 2T_{n; \geq n-1}(x),
    \end{equation}
    such that 
    \begin{equation}
    \begin{aligned}
        |T_{n+1; \geq n}(x)| &= 2 |T_{n; \geq n-1}(x)| \\
        &\leq  2 |T_{n; n-1}(0)| = |T_{n+1; \geq n}(0)|.
    \end{aligned}
    \end{equation}
    For $k=n+1$, the hypothesized inequality is trivially satisfied because the constituent polynomial is a constant: $|T_{n+1; \geq n+1}(0)| = |t_{n+1,n+1}| \geq |T_{n+1; \geq n+1}(0)| = |t_{n+1,n+1}|$. 
    
    Therefore, by induction on increasing $n$, this completes the proof that the maximum of the constituent polynomial $T_{n; \geq k}(x)$ occurs at $x=0$. Moreover, the value at $x=0$ is $|T_{n; \geq k}(0)| = |t_{n,k}| = O\big( \tfrac{n^k}{k!} \big) $. 
\end{proof}

We can use Corollary~\ref{thm:Cheby_2MagnitudeBound} in conjunction with the Chebyshev decomposition of a polynomial to prove the following bound on an arbitrary constituent polynomial:
\begin{theorem}[Bound on constituent polynomial $P_{\geq k} (x)$]\label{thm:P^2_Bound}
For a polynomial $P(x)$ that is bounded as $\|P \|_{[-1,1]} \leq 1$, its partial sum polynomial is necessarily bounded as
\begin{equation}
    \|P_{\geq k} \|_{[-1,1]} \leq O\Bigg( \frac{d^k}{k!} \sqrt{\frac{d}{k}} \Bigg) = O\big( {\rm poly}(d) \big) .
\end{equation}
\end{theorem}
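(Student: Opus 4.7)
The plan is to pass from the monomial-basis expression for $P_{\geq k}$ to the Chebyshev basis, where both a pointwise bound on the coefficients $|c_n|$ and the orthogonality identity $\sum c_n^2 \leq 2\|P\|_{[-1,1]}^2$ are available. Expand $P(x) = \tfrac{c_0}{2} + \sum_{n=1}^d c_n T_n(x)$ and split each Chebyshev polynomial as $T_n(x) = T_{n;<k}(x) + x^k T_{n;\geq k}(x)$. Uniqueness of the monomial decomposition $P = P_{<k} + x^k P_{\geq k}$ then implies (with terms for $n < k$ vanishing, since $T_n$ has degree $n$)
\begin{equation*}
P_{\geq k}(x) = \sum_{n=k}^d c_n\, T_{n;\geq k}(x).
\end{equation*}

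The next step is to bound each $T_{n;\geq k}(x)$ uniformly on $[-1,1]$. When $n$ and $k$ share parity, Corollary \ref{thm:Cheby_2MagnitudeBound} gives $\max_{x\in[-1,1]} |T_{n;\geq k}(x)| = |t_{n,k}|$. When the parities differ, $t_{n,k} = 0$ because $T_n$ has definite parity, and $T_{n;\geq k}$ then contains only odd powers of $x$; it therefore factors as $T_{n;\geq k}(x) = x\cdot T_{n;\geq k+1}(x)$, and since $|x|\leq 1$, invoking the corollary in its matched-parity form yields $\max|T_{n;\geq k}(x)| \leq |t_{n,k+1}|$. In either case the uniform bound $\max|T_{n;\geq k}(x)| \leq |t_{n,k}| + |t_{n,k+1}|$ holds.

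Combining these ingredients via Cauchy--Schwarz:
\begin{equation*}
\|P_{\geq k}\|_{[-1,1]} \leq \sum_{n=k}^d |c_n|\bigl(|t_{n,k}|+|t_{n,k+1}|\bigr) \leq \Bigl(\sum_{n} c_n^2\Bigr)^{1/2}\Bigl(\sum_{n=k}^d (|t_{n,k}|+|t_{n,k+1}|)^2\Bigr)^{1/2}.
\end{equation*}
By Parseval's identity for Chebyshev polynomials, $\sum_n c_n^2 = O(\|P\|_{[-1,1]}^2) = O(1)$. For the second factor, Lemma \ref{thm:ChebyshevCoefficientBound} gives $|t_{n,k}| = O(n^k/k!)$, and an integral comparison yields $\sum_{n=k}^d n^{2k} = O\bigl(d^{2k+1}/k\bigr)$, so the second factor is $O\bigl(d^k\sqrt{d/k}/k!\bigr)$. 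Multiplying the two factors produces the claimed bound.

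The main subtlety lies in obtaining the $\sqrt{d/k}$ savings rather than a looser $\sqrt{d}$ or $d$ factor: this requires the $\ell^2$ combination (Cauchy--Schwarz with Parseval) rather than a naive triangle inequality over the Chebyshev expansion, together with the sharper integral estimate $\sum_n n^{2k} \approx d^{2k+1}/(2k+1)$ in place of the crude $d\cdot d^{2k}$ bound. A secondary technical point is the parity-mismatched case of Corollary \ref{thm:Cheby_2MagnitudeBound}, which must be handled by noting $T_{n;\geq k}(x) = x\,T_{n;\geq k+1}(x)$ so the matched-parity corollary can be applied.
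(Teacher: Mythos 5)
Your proposal follows essentially the same route as the paper's proof: expand $P$ in the Chebyshev basis so that $P_{\geq k}(x) = \sum_{n=k}^d c_n T_{n;\geq k}(x)$, bound each $T_{n;\geq k}$ uniformly via Corollary~\ref{thm:Cheby_2MagnitudeBound} together with Lemma~\ref{thm:ChebyshevCoefficientBound}, and combine Cauchy--Schwarz with Parseval ($\sum_n |c_n|^2 = O(1)$) and the estimate $\sum_{n\leq d} n^{2k} = O(d^{2k+1}/k)$. Whether Cauchy--Schwarz is applied pointwise in $x$ and then maximized (the paper) or after first taking the sup of each $|T_{n;\geq k}|$ (you) makes no material difference.

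The one place where your write-up and your arithmetic part ways is the parity-mismatched terms, which you were right to flag: the paper silently applies $\max_x|T_{n;\geq k}(x)|\leq |t_{n,k}|$ to every $n$, even though Corollary~\ref{thm:Cheby_2MagnitudeBound} is stated only for $n\equiv k \pmod 2$, and for mismatched parity $t_{n,k}=0$ while $T_{n;\geq k}\neq 0$. Your fix, $T_{n;\geq k}(x)=x\,T_{n;\geq k+1}(x)$ and hence $\max_x|T_{n;\geq k}(x)|\leq|t_{n,k+1}|=O\big(n^{k+1}/(k+1)!\big)$, is sound, but your final estimate of the second Cauchy--Schwarz factor then uses only $|t_{n,k}|=O(n^k/k!)$. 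Carrying the $|t_{n,k+1}|$ contributions through gives $\big(\sum_{n\leq d}|t_{n,k+1}|^2\big)^{1/2}=O\big(\tfrac{d^{k+1}}{(k+1)!}\sqrt{d/k}\big)$, which exceeds your claimed $O\big(\tfrac{d^{k}}{k!}\sqrt{d/k}\big)$ by a factor of order $d/k$. So as written, your argument establishes the stated bound only when those terms are absent, i.e., when every $n$ with $c_n\neq 0$ has the parity of $k$ --- which is precisely the regime in which the theorem is used (Theorem~\ref{thm:parallel_qsp_prop_est_3} takes $P$ of definite parity and $k$ of matching parity), and is also the only case the paper's own proof genuinely covers. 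For the fully general indefinite-parity statement, either the bound must be relaxed by that $d/k$ factor, or one needs a sharper estimate of $\max_x|T_{n;\geq k}|$ in the mismatched case than the crude $|x|\leq 1$ reduction provides (a Markov-type derivative bound suggests the crude $|t_{n,k+1}|$ is far from tight); alternatively, split $P$ into even and odd parts and apply the matched-parity bound to each, accepting the shifted index on the mismatched half.
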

\begin{proof}
    To prove this result, first decompose $P(x)$ into the basis of Chebyshev polynomials:
    \begin{equation}
        P(x) = \sum_{n=0}^d c_n T_n(x)
    \end{equation}
    where $c_n$ are the Chebyshev coefficients of $P(x)$. The Chebyshev polynomials obey the orthogonality relation $\frac{2-\delta_{n,0}}{\pi} \int_{-1}^1 \frac{T_n(x) T_m(x) }{\sqrt{1-x^2}} dx = \delta_{nm}$, such that the coefficients $c_n$ are given by
    \begin{equation}\label{eq:Cheby_c_n}
        c_n = \frac{2-\delta_{n,0}}{\pi} \int_{-1}^1 dx \frac{P(x) T_n(x) }{\sqrt{1-x^2}} .
    \end{equation}
    In this basis, the constituent polynomial can be expressed as
    \begin{equation}
        P_{\geq k} (x) = \sum_{n=k}^d c_n T_{n; \geq k}(x).
    \end{equation}
    
    Next, we can employ the Cauchy-Schwarz inequality to show that
    \begin{equation}\label{eq:CS_bound}
        \begin{aligned}
            |P^{2,k}(x)| &= \Big| \sum_{n=k}^d c_n T_{n; \geq k}(x) \Big| \\
            &\leq \sqrt{\sum_{n=k}^d |c_n|^2} \times \sqrt{\sum_{n=k}^d \big|T_{n; \geq k}(x) \big|^2 }
        \end{aligned}
    \end{equation}
    Applying Parseval's theorem with the inner product of the Chebyshev polynomials, the 2-norm of $c_n$ is upper bounded as 
    \begin{equation}
    \begin{aligned}
        &\frac{2}{\pi} \int_{-1}^1 dx \frac{|P(x)|^2}{\sqrt{1-x^2}} = \frac{|c_0|^2}{2} + \sum_{n=1}^d |c_n|^2 \\
        & \qquad \qquad \qquad \qquad \leq \frac{2}{\pi} \int_{-1}^1 dx \frac{1}{\sqrt{1-x^2}} = 1,
    \end{aligned}
    \end{equation}
    and therefore $\sum_{n=k}^d |c_n|^2 \leq 2 = O(1)$. On the other hand, the 2-norm of the Chebyshev polynomials can be upper bounded by invoking Corollary~\ref{thm:Cheby_2MagnitudeBound}:
    \begin{equation}
        \begin{aligned}
            &\max_{x\in [-1,1]} \sum_{n=k}^d \big|T_{n; \geq k}(x) \big|^2 \leq 
            \sum_{n=k}^d |t_{n,k}|^2 \leq 
            \sum_{n=k}^d O\Bigg( \frac{n^{2k}}{(k!)^2}\Bigg) \\
            & = O\Bigg( \frac{1}{(k!)^2} \cdot \frac{d^{2k+1}}{2k+1} \Bigg) = O\Bigg( \Big( \frac{d^k}{k!} \Big)^2 \frac{d}{k} \Bigg) , 
        \end{aligned}
    \end{equation}
    where we have used that $\sum_{n=0}^d n^{2k} = \frac{d^{2k+1}}{2k+1} + O(d^{2k})$. Inputting these upper bound into Eq.~\eqref{eq:CS_bound}, we obtain 
    \begin{equation}
        |P_{\geq k} (x)| \leq O\Bigg( \frac{d^k}{k!} \sqrt{\frac{d}{k}} \Bigg) ,
    \end{equation}
    for $x \in [-1,1]$. This completes the proof of the stated result. 
    
    As an aside, we suspect this bound could be sharpened to $\| P_{\geq k} \|_{[-1,1]} \leq O\big( \tfrac{d^k}{k!} \big)$, which is saturated by the Chebyshev polynomials. 
\end{proof}

\section{Augmenting Trace Estimation with Importance Sampling}\label{app:Importance_Sampling}
Importance sampling can be utilized to expand the class of functions whose traces can be estimated. To demonstrate this for functions of a density matrix, suppose that we have the ability to estimate the trace of a set of basis functions $\{B_j(\rho)\}_{j=1}^d$, by using QSP or other techniques. For example, one could have $B_j(\rho) = \rho^j$ be monomials, or even $B_j(\rho) = T_j(\rho)$ be the Chebyshev polynomials. In practice, the trace of a basis function is approximated by repeatedly measuring an estimator $\hat{\mathcal{B}}_j$, whose expectation value is the desired trace: 
\begin{equation}
    \mathop{\mathbb{E}} [\hat{\mathcal{B}}_j] = \tr(B_j(\rho))
\end{equation}
For example, in the QSP test (Sec.~\ref{sec:QSP_Trace_Est}), the estimator is $\hat{\mathcal{B}} = m \in \{0,1\} $, where $m$ is the measurement of the block-encoding qubit.

Given this ability, one can expand the class of functions whose traces can be estimated by incorporating importance sampling. Consider a function $f(\rho)$ expanded in the basis $\{ B_j(\rho) \}_{j=1}^d$ as 
\begin{equation}
    f(\rho) = \sum_{j=1}^d c_j B_j(\rho), 
\end{equation}
with complex coefficients $c_j$. In order to estimate the trace $\tr(f(\rho))$, first define a probability distribution $p(j) = |c_j|/\|c\|_1$, where $\|c\|_1 = \sum_j |c_j|$ is the 1-norm of the coefficients. Then, by sampling an integer $j \sim p(j)$ and evaluating the corresponding estimator $\hat{\mathcal{B}}_j$, one can construct the following quantity whose expectation value yields the desired trace:
\begin{equation}
\begin{aligned}
    &\mathop{\mathbb{E}}_{j \sim p} \left[ \frac{c_j}{|c_j|} \hat{\mathcal{B}}_j \right] = 
    \sum_{j=1}^d p_j \frac{c_j}{|c_j|} \tr(B_j(\rho)) =
    \frac{\tr(f(\rho))}{\|c\|_1}. 
\end{aligned}
\end{equation}
Due to the rescaling by $\|c\|_1$, estimating $\tr(f(\rho))$ to additive error $\epsilon$ requires $O\big(\|c\|_1^2/\epsilon^2\big)$ measurements.

Ref.~\cite{tosta2023randomized} uses this importance sampling procedure to estimate a large class of traces and expectation values, provided the ability to generate Chebyshev polynomials $B_j(\rho)=T_j(\rho)$ with QSP. A similar sampling procedure was also used in Ref.~\cite{Quek_2024} to estimate the trace of functions of a density matrix, given only the ability to estimate the trace of the monomials $B_j(\rho) = \rho^j$. In both references, because the measurement cost depends quadratically on the 1-norm of the coefficients, the authors note that this method is best suited for well-behaved functions whose 1-norm is not prohibitively large (i.e., scales only polynomially in $d$ rather than exponentially). 

In our work, we use importance sampling to extend parallel QSP from the limited scope of Theorem~\ref{thm:parallel_qsp_poly_char} to a larger class of property estimation problems according to Theorem~\ref{thm:parallel_qsp_prop_est_3}. We achieve this by decomposing a trace $\tr(P(\rho))$ into a linear combination of terms that are each amenable to parallel QSP, and applying importance sampling to this sum. We discuss the proof of this theorem in the following section.

\section{Proofs for Parallel QSP for Property Estimation}\label{app:ParallelQSPPropertyEstimation}
Here we prove Theorem~\ref{thm:parallel_qsp_prop_est_3}, which provides a general scheme for estimating a property $w = \tr(P(\rho))$ with parallel QSP. As we mentioned in the main text, we achieve this by decomposing $P(x)$ into a linear combination of polynomials that are each amenable to parallel QSP. We can then apply importance sampling to this linear combination to extract $w$.

The basis that we choose to decompose into is the basis of Chebyshev polynomials. In this basis, we can present and prove the theorem:
\begin{theorem}[Parallel QSP for Property Estimation: Definite Parity] 
Let $P(x)$ be a real polynomial of degree $d$ and definite parity, that is bounded as $\| P \|_{[-1,1]} \leq 1$. By invoking parallel QSP across $k$ threads, where $k$ has the same parity as $d$, we can estimate
\begin{equation}
    w = \tr(P(\rho))
\end{equation}
with a circuit of width $O(k)$ and query depth at most $\lfloor \frac{d-k}{2k} \rfloor + k-1 = O(d/k+k)$. The number of measurements required to resolve $w$ to additive error $\epsilon$ is 
\begin{equation}
\begin{aligned}
    &O\Bigg( \frac{\|P_{< k} \|_{[-1,1]}^2}{\epsilon^2} + \frac{\|P_{< k} \|_{[-1,1]}^2 d^4 \big( 1+\sqrt{2} \big)^{4k} }{k^2 \epsilon^2 } \Bigg) \\
    &= O\Bigg( \frac{\|P_{< k} \|_{[-1,1]}^2 + \|P_{\geq k} \|_{[-1,1]}^2 d^4 2^{O(k)}}{\epsilon^2} \Bigg).
\end{aligned}
\end{equation}
\end{theorem}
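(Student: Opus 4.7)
The plan is to reduce the problem to the setting of Theorem~\ref{thm:parallel_qsp_prop_est_1} by splitting $P(x) = P_{<k}(x) + x^k P_{\geq k}(x)$ and handling the two pieces separately. The trace $w_{<k} = \tr(P_{<k}(\rho))$ is of a polynomial of degree $k-1$ and can be estimated with a standard QSP/Hadamard test after rescaling by $\|P_{<k}\|_{[-1,1]}$, at query depth $k-1$ and measurement cost $O(\|P_{<k}\|_{[-1,1]}^2 / \epsilon^2)$. All of the real work is in estimating $w_{\geq k} = \tr(\rho^k P_{\geq k}(\rho))$, which is not directly amenable to Theorem~\ref{thm:parallel_qsp_prop_est_1} because $P_{\geq k}$ need not be sign-definite and has no a priori low-norm factorization.

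To sidestep this, I would expand $P_{\geq k}(x) = \sum_{n=0}^{d-k} c_n T_n(x)$ in the Chebyshev basis and then re-express each individual $T_n(x)$ in a form that parallel QSP can directly process. The key ingredient is the composition identity $T_{ab}(x) = T_a(T_b(x))$ combined with the even-parity expansion $T_{2k}(u) = \sum_{l=0}^{k} \tilde t_l\, u^{2l}$: setting $u = T_m(x)$ with $m = \lceil n/(2k) \rceil$ expresses $T_n(x)$ as a linear combination of terms $T_m(x)^{2l}$ for $l = 0, \dots, k$ (with a leftover small-degree Chebyshev factor absorbed via the product identity $2T_aT_b = T_{a+b} + T_{|a-b|}$ when $n$ is not divisible by $2k$). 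Each term $T_m(x)^{2l}$ fits the parallel QSP template of Theorem~\ref{thm:parallel_qsp_poly_char}: it is a product of $l$ squared copies of $T_m$ padded by $k-l$ identity factors, with factorization constant $\mathcal{K} = 1$ because $\|T_m\|_{[-1,1]} = 1$. Since $T_m$ has definite parity matching that of $\rho^k$ under the hypothesis that $k$ shares parity with $d$, each factor can be realized by standard QSP at query depth $m \leq \lceil (d-k)/(2k) \rceil$, yielding the stated total query depth $\lceil (d-k)/(2k) \rceil + k - 1$.

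The measurement cost is then analyzed via importance sampling across the double sum over $n$ and $l$ as in Appendix~\ref{app:Importance_Sampling}. The inner 1-norm $\sum_l |\tilde t_l|$ equals $|T_{2k}(i)| = \tfrac{1}{2}\bigl((1+\sqrt 2)^{2k} + (\sqrt 2 - 1)^{2k}\bigr)$, which follows by evaluating the closed form Eq.~\eqref{eq:Chebsyhev_reexpression} at $u=i$ together with the observation that the $\tilde t_l$ have strictly alternating signs (verified from the recursion $T_{2k} = 2uT_{2k-1} - T_{2k-2}$). The outer 1-norm $\sum_n |c_n|$ is bounded via Cauchy--Schwarz using the Parseval-type estimate $\sum_n |c_n|^2 = O(\|P_{\geq k}\|_{[-1,1]}^2)$ together with the fact that only $O((d-k)/k)$ distinct target degrees $m$ appear, contributing a $d^2/k^2$ factor after squaring. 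Multiplying these pieces gives the claimed $\|P_{\geq k}\|_{[-1,1]}^2\, d^4 (1+\sqrt 2)^{4k}/k^2$ overhead.

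The main obstacle I anticipate is handling the edge cases $n < 2k$ and $n \not\equiv 0 \pmod{2k}$ in a uniform way that simultaneously preserves the depth reduction, the parity alignment with $\rho^k$, and the tight 1-norm bound; in particular one must show that the leftover Chebyshev factor introduced by the product identity in the non-divisible case can be absorbed into one of the existing threads without increasing $m$ or breaking parity. A secondary subtlety is justifying that the hypothesis on the parity of $k$ genuinely allows every thread to use standard rather than generalized QSP, as this is the source of the factor-of-two improvement in depth over Theorem~\ref{thm:parallel_qsp_prop_est_1}. Once these details are in place, the rest is a mechanical combination of standard QSP, parallel QSP via Theorem~\ref{thm:parallel_qsp_poly_char}, and importance sampling.
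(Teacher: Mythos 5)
Your outline retraces the paper's own route almost step for step: the split $P=P_{<k}+x^kP_{\geq k}$, the even-Chebyshev expansion of $P_{\geq k}$ (valid because $k$ and $d$ share parity), the composition identity $T_{2ka}(x)=T_{2k}(T_a(x))$ producing products of squared low-degree Chebyshev factors with factorization constant $1$, importance sampling, and the $\sum_j|t_{2k,2j}|=O\big((1+\sqrt{2})^{2k}\big)$ identity obtained by evaluating $T_{2k}$ at $i$. However, the step you defer as your ``main obstacle'' is precisely where the proof's content lies, and it is not a routine edge case. Writing an even degree as $2ka+2b$ with $1\le b\le k-1$, the product identity gives $T_{2ka+2b}=2T_{2ka}T_{2b}-T_{2ka-2b}$; the multiplicative factor $T_{2b}$ is harmless (it is composed as $T_2(T_b)$ and absorbed into one thread, costing only the $+(k-1)$ you already include in the depth), but the \emph{additive} correction term $T_{2ka-2b}=T_{2k(a-1)+2(k-b)}$ must itself be re-expanded, and this forces a recursion down through all lower multiples of $2k$. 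The paper's proof carries out this recursion explicitly, replacing the original Chebyshev coefficients by modified coefficients $\tilde{c}_{2ka+2b}$ that are alternating sums of up to $O(d/k)$ original coefficients, with $|\tilde{c}_{2ka+2b}|\le 2\|P_{\geq k}\|_{[-1,1]}\big(\lfloor\tfrac{d-k}{2k}\rfloor-a\big)$ and hence $\|\tilde{c}\|_1=O\big(\|P_{\geq k}\|_{[-1,1]}\,d^2/k\big)$. Your proposal neither performs this recursion nor accounts for the coefficient growth it causes.

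This matters because the measurement bound in the theorem is exactly $\big(\|\tilde{c}\|_1\cdot O((1+\sqrt{2})^{2k})\big)^2/\epsilon^2$: the $d^4/k^2$ factor comes from squaring the $d^2/k$ blow-up of the modified coefficients, not from any property of the original $c_n$. Your proposed bookkeeping---Cauchy--Schwarz with the Parseval estimate $\sum_n|c_n|^2=O(\|P_{\geq k}\|_{[-1,1]}^2)$ plus counting $O((d-k)/k)$ distinct target degrees $m$---does not produce the stated exponents by any consistent chain (on the original coefficients it would give $\sum_n|c_n|=O(\sqrt{d}\,\|P_{\geq k}\|_{[-1,1]})$ and hence a cost of order $d\,2^{O(k)}/\epsilon^2$, which simply bypasses the quantity that actually needs to be bounded). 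So while you land on the theorem's final expression, the derivation as written neither resolves the non-divisible-degree case nor justifies the $\|P_{\geq k}\|_{[-1,1]}^2\,d^4(1+\sqrt{2})^{4k}/k^2$ overhead; completing the argument requires the explicit coefficient-absorption recursion and the resulting bound on $\|\tilde{c}\|_1$, as in Appendix~\ref{app:ParallelQSPPropertyEstimation}.
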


\begin{proof}
    Suppose $P(x)$ is of definite parity, and that $k$ has the same parity. Then, by decomposing $P(x) = P_{< k} (x) + x^k P_{\geq k} (x)$, $P_{< k} (x)$ has the same parity as $P(x)$, whereas $P_{\geq k} (x)$ is necessarily even. As in the proof of Theorem~\ref{thm:parallel_qsp_prop_est_1}, we seek to estimate $w_{<k}$ and $w_{\geq k}$ to error $\epsilon/2$ each, such that their sum approximates $w$ to error $\epsilon$. 

    First, we can estimate $w_{<k}$ with standard QSP. Because $P_{< k} (x)$ is real and of definite parity, the requisite query depth is $k-1$ and the requisite number of measurements is $O \left( \| P_{< k}  \|_{[-1,1]}^2 / \epsilon^2 \right)$.

    Next, because $P_{\geq k} (x)$ is even, we can expand it in the basis of even Chebyshev polynomials:
    \begin{equation}\label{eq:P^2_ChebyDecomp}
    \begin{aligned}
        P_{\geq k} (x) &= \sum_{j=0}^{\frac{d-k}{2}} c_{2j} T_{2j}(x) \\
        &= \sum_{a=0}^{\lfloor \frac{d-k}{2k} \rfloor } \sum_{b=0}^{k-1} c_{a\cdot 2k + 2b} T_{a\cdot 2k + 2b}(x).
    \end{aligned}
    \end{equation}
    where we have judiciously recast this as a sum over Chebyshev polynomials with orders expressed as multiples of $2k$, i.e. order $2j = a \cdot 2k+b$ for $a,b \geq 0$. 
    Going forward, our approach will be to use properties of the Chebsyhev polynomials to recast $P_{\geq k} (x)$ as a linear combination of polynomials that are each amenable to parallel QSP. Then, one can estimate the corresponding trace of the terms in this linear combination to furnish an approximation to $w_{\geq k}$. 
    
    In more detail, we will use the following properties of the Chebyshev polynomials~\cite{rivlin2020chebyshev}:
    \begin{equation}
        \begin{aligned}
            &T_{mn}(x) = T_m(T_n(x)) \\
            & T_{m+n}(x) = 2T_m(x) T_n(x) - T_{|m-n|}(x).
        \end{aligned}
    \end{equation}
    These properties enable, for instance, a Chebyshev polynomial of degree $4n$ to be recast as
    \begin{equation}
    T_{4n}(x) = T_4(T_{n}(x)) = 8 T_{n}(x)^4 - 8 T_{n}(x)^2 + 1. 
    \end{equation}
    The polynomials comprising this linear combination ($T_{n}(x)^4, T_{n}(x)^2$, and $1$) are all positive definite and factorize trivially as products of Chebyshev polynomials of degree at most $n$. Thus, each term in this expression is amenable to parallel QSP, and the degree is reduced by a factor of $4$. This is the strategy we will use going forward, but with the degree reduced by a factor of $2k$ instead of $4$. 

    Applying this strategy to Eq.~\eqref{eq:P^2_ChebyDecomp}, we begin by by writing
    \begin{equation}
        T_{a\cdot 2k +2b}(x) = 2 T_{a\cdot 2k}(x) T_{2b}(x) - T_{a\cdot 2k  - 2b}(x)
    \end{equation}
    for $b > 0$ and $a\cdot 2k-2b \geq 0 $. Inserting this expression into Eq.~\eqref{eq:P^2_ChebyDecomp}, each term $ -T_{a\cdot 2k  - 2b}(x)$ can be re-included into the sum by modifying the corresponding coefficient $c_{a\cdot 2k  - 2b}$. Starting with the highest degree $a=\lfloor (d-k)/2k \rfloor$, this allows us to write 
    \begin{equation}
    \begin{aligned}
        &\sum_{b=1}^{k-1} c_{a\cdot 2k+2b} T_{a\cdot 2k+2b}(x) = \\
        &\sum_{b=1}^{k-1} c_{a\cdot 2k+2b} \big( 2 T_{a\cdot 2k}(x) T_{2b}(x) - T_{(a-1)\cdot2k+2(k-b)}(x) \big).
    \end{aligned}
    \end{equation}
    By re-including these terms into the sum of Eq.~\eqref{eq:P^2_ChebyDecomp}, this effectively changes the coefficients to (for $b > 0$)
    \begin{equation}\label{eq:ChebyCoefMapping}
    \begin{aligned}
        &c_{a\cdot 2k+2b} \mapsto 2 c_{ak+b} \\
        &c_{(a-1)\cdot 2k+2(k-b)} \mapsto c_{(a-1)\cdot 2k+2(k-b)} - c_{a\cdot 2k+2b}.
    \end{aligned}
    \end{equation}
    We now decrement $a$, and recurse this procedure, updating the coefficients appropriately. Note that the second mapping in Eq.~\eqref{eq:ChebyCoefMapping} is equivalent to $c_{ak+b} \mapsto c_{ak+b} - c_{(a+1)k+(k-b)} $, which becomes the recursion
    \begin{equation}
    \begin{aligned}
        c_{a\cdot 2k+2b} \ \mapsto \   &c_{a\cdot 2k+2b} - c_{(a+1)\cdot 2k+2(k-b)} \\
        +  &c_{(a+2)\cdot 2k+2b} -  c_{(a+3) \cdot 2k+2(k-b)} ... . 
    \end{aligned}
    \end{equation}
    Unfolding this recursion for each term in Eq.~\eqref{eq:P^2_ChebyDecomp}, we find that the new coefficients are
    \begin{equation}
    \begin{aligned}
        \tilde{c}_{a\cdot 2k + 2b} = 
        \begin{cases}
            c_{a\cdot 2k} & b=0 \\
            c_{2b} & a=0 \\ 
            2\sum_{j=0}^{\lfloor (d-k)/2k \rfloor - a} (-1)^j c_{(a+j)\cdot 2k+2\alpha_j} & a, b \geq 1, \\ 
        \end{cases}
    \end{aligned}
    \end{equation}
    where 
    \begin{equation}
        \alpha_j = \begin{cases}
                b & j \text { even} \\
                k-b & j \text { odd},
            \end{cases}
    \end{equation}
    such that we may rewrite $P_{\geq k} (x)$ as 
    \begin{equation}\label{eq:P^2_ChebyDecomp2}
    \begin{aligned}
        &P_{\geq k} (x) = \sum_{a=0}^{\lfloor \frac{d-k}{2k} \rfloor} \sum_{b=0}^{k-1} \tilde{c}_{a\cdot 2k+2b} T_{a\cdot 2k}(x) T_{2b}(x).
    \end{aligned}
    \end{equation}
    Because $|c_{2j}| \leq \| P_{\geq k} \|_{[-1,1]}$ as per Eq.~\eqref{eq:ChebyCoefBound}, the magnitude of these coefficients is 
    \begin{equation}\label{eq:c_tilde_bound}
        | \tilde{c}_{a\cdot 2k+2b} | \leq 2 \| P_{\geq k} \|_{[-1,1]} \Big( \Big\lfloor \frac{d-k}{2k} \Big\rfloor - a \Big) . 
    \end{equation}

    Next, let us denote the Chebyshev polynomials of even degree as 
    \begin{equation}
        T_{2n}(x) = \sum_{j=0}^{n} t_{2n,2j} x^{2j}.
    \end{equation}
    Accordingly, we can express the product $T_{a\cdot 2k}(x) T_{2b}(x)$ in Eq.~\eqref{eq:P^2_ChebyDecomp2} as     
    \begin{equation}
    \begin{aligned}
        T_{a\cdot 2k}(x) T_{2b}(x) &= T_{2k}(T_a(x)) T_{2}(T_b(x)) \\
        &= \sum_{j=0}^{k} \sum_{l=0}^1 t_{2k,2j} t_{2,2l} T_a(x)^{2j} T_b(x)^{2l}. 
    \end{aligned}
    \end{equation}
    such that 
    \begin{equation}
    \begin{aligned}
        &P_{\geq k} (x) \\
         \quad &=\sum_{a=0}^{\lfloor \frac{d-k}{2k} \rfloor} \sum_{b=0}^{k-1} \sum_{j=0}^{k} \sum_{l=0}^1 \tilde{c}_{a\cdot 2k+2b} t_{2k,2j} t_{2,2l} T_a(x)^{2j} T_b(x)^{2l} \\
         & =: \sum_{a=0}^{\lfloor \frac{d-k}{2k} \rfloor} \sum_{b=0}^{k-1} \sum_{j=0}^{k} \sum_{l=0}^1 C^{(k)}_{abjl} T_a(x)^{2j} T_b(x)^{2l}, 
    \end{aligned}
    \end{equation}
    where we have defined the coefficients $C^{(k)}_{abjl} = \tilde{c}_{a\cdot 2k+2b} t_{2k,2j} t_{2,2l}$. 

    This allows us to write $w_{\geq k}$ as
    \begin{equation}\label{eq:w_2_decomp}
    \begin{aligned}
        w_{\geq k} &= \tr(\rho^k P_{\geq k} (\rho)) \\
        &=: \sum_{a=0}^{\lfloor \frac{d-k}{2k} \rfloor} \sum_{b=0}^{k-1} \sum_{j=0}^{k} \sum_{l=0}^1 C_{abjl}^{(k)} \tr(\rho^k \big|T_a(\rho) ^{j} T_b(\rho)^{l} \big| ^2).
    \end{aligned}
    \end{equation}
    This is now re-expressed as a linear combination of terms that are each amenable to parallel QSP. That is, the trace $\tr(\rho^k \big|T_a(\rho) ^{j} T_b(\rho)^{l} \big| ^2)$ obeys the conditions of Theorem~\ref{thm:parallel_qsp_poly_char}: the polynomial $R(x) = \big(T_a(x) ^{j} T_b(x)^{l} \big) ^2$ is real and non-negative, and trivially factorizes into $k$ factor polynomials as
    \begin{equation}
        R(x) = \prod_{s=1}^k |\mathcal{R}_s(x)|^2
    \end{equation}
    where
    \begin{equation}
        \mathcal{R}_s(x) = 
        \begin{cases}
            T_a(x) T_b(x) & s=1,\ \text{for } j\geq 1, l=1,  \\
            T_b(x) & s=1,\ \text{for } j=0, l=1,  \\
            T_a(x) & s\geq 1, \ j>l, \\
            1 & s \geq j,l .
        \end{cases}
    \end{equation}
    These factor polynomials are all real-valued, of definite parity, and have degree at most $a+b \leq \lfloor \frac{d-k}{2k} \rfloor + k-1$. Therefore, they can each be directly implemented through QSP, with query depth at most $\lfloor \frac{d-k}{2k} \rfloor + k-1 = O(d/k+k)$, and achieve a factorization constant $\mathcal{K}=1$. 

    Using this strategy, we can estimate $w_{\geq k}$ by estimating the terms in the linear combination of Eq.~\eqref{eq:w_2_decomp} with parallel QSP. A particularly efficient way to do this is to use importance sampling according to Appendix~\ref{app:Importance_Sampling}. That is, define a probability distribution $p(a,b,j,l) = |C^{(k)}_{a,b,j,l}| / \| C^{(k)} \|_1$. Then, by sampling $a,b,j,l \sim p(a,b,j,l) $ and evaluating the corresponding estimator of $\tr(\rho^k \big|T_a(\rho) ^{j} T_b(\rho)^{l} \big| ^2)$ (i.e. the measurement of the parallel QSP circuit associated with this polynomial), we can construct a quantity whose expectation value converges to $w_{\geq k}$. The associated cost of estimating $w_{\geq k}$ to additive error $\epsilon$ is $O(\| C^{(k)} \|_1^2/ \epsilon^2)$. 
    
    In order to bound thus cost, we can upper bound the 1-norm $\| C^{(k)} \|_1$ by invoking the identity 
    \begin{equation}
        \sum_{j=0}^n |t_{2n,2j}| = \frac{1}{2}(1+\sqrt{2})^{2n} + \frac{1}{2}(1-\sqrt{2})^{2n} = O((1+\sqrt{2})^{2n}) .
    \end{equation}
    This follows from the fact that $t_{2n,2j} = (-1)^j (-1)^n |t_{2n,2j}| $ (see Eq.~\eqref{eq:ChebyPolyCoef}), such that the 1-norm of the Chebsyhev polynomial coefficients can be expressed as 
    \begin{equation}
        (-1)^n T_n(i) = (-1)^n \sum_{j=0}^n t_{2n,2j} (i)^{2j} = \sum_{j=0}^n |t_{2n,2j}|.
    \end{equation}
    Using Eq.~\eqref{eq:Chebsyhev_reexpression}, this evaluates to $(-1)^n T_n(i) = \frac{1}{2}(1+\sqrt{2})^{2n} + \frac{1}{2}(1-\sqrt{2})^{2n}$. Therefore, using this identity, we can upper bound $\| C^{(k)} \|_1$ as
    \begin{equation}
    \begin{aligned}
        \| C^{(k)} \|_1 &= \sum_{a=0}^{\lfloor \frac{d-k}{2k} \rfloor} \sum_{b=0}^{k-1} \sum_{j=0}^{k} \sum_{l=0}^1 | \tilde{c}_{a\cdot 2k+2b} t_{2k,2j} t_{2,2l} | \\
        &= \sum_{a=0}^{\lfloor \frac{d-k}{2k} \rfloor} \sum_{b=0}^{k-1} | \tilde{c}_{a\cdot 2k+2b}| \sum_{j=0}^{k} |t_{2k,2j}| \cdot  \sum_{l=0}^1 |t_{2,2l}| \\
        &= \| \tilde{c} \|_1 \cdot O\big( (1+\sqrt{2})^{2k} \big).
    \end{aligned}
    \end{equation}
    Using Eq.~\eqref{eq:c_tilde_bound}, we can upper bound as $\| \tilde{c} \|_1$
    \begin{equation}
    \begin{aligned}
         \| \tilde{c} \|_1 &= \sum_{a=0}^{\lfloor \frac{d-k}{2k} \rfloor} \sum_{b=0}^{k-1} | \tilde{c}_{a\cdot 2k+2b}| \\
        & \leq \sum_{a=0}^{\lfloor \frac{d-k}{2k} \rfloor} \sum_{b=0}^{k-1}  2 \| P_{\geq k} \|_{[-1,1]} \Big( \Big\lfloor \frac{d-k}{2k} \Big\rfloor - a \Big) \\
        &= 2 k \| P_{\geq k} \|_{[-1,1]} \sum_{a=0}^{\lfloor \frac{d-k}{2k} \rfloor} \Big( \Big\lfloor \frac{d-k}{2k} \Big\rfloor - a \Big) \\
        &= O\Big( \| P_{\geq k} \|_{[-1,1]} \frac{d^2}{k}  \Big).
    \end{aligned}
    \end{equation}
   And therefore we obtain a measurement cost
    \begin{equation}
    \begin{aligned}
        O\Big( \frac{\| C^{(k)} \|_1^2 }{\epsilon^2} \Big) &= O\Bigg( \frac{ \| P_{\geq k} \|_{[-1,1]}^2 d^4 \big( 1+\sqrt{2} \big)^{4k} }{k^2 \epsilon^2 } \Bigg) \\
        &= O\Bigg( \frac{\| P_{\geq k} \|_{[-1,1]}^2 d^4 2^{O(k)}}{\epsilon^2} \Bigg).
    \end{aligned}
    \end{equation}

    Lastly, we suspect that this bound could be tightened by improving the bound of Eq.~\eqref{eq:c_tilde_bound}, which is rather loose. 

\end{proof}

\end{document}